\documentclass[12pt]{article}
\usepackage{smile_1}
\usepackage{mathtools}
\usepackage{booktabs}
\usepackage[english]{babel}
\usepackage[protrusion=true,expansion=true]{microtype}
\usepackage{amsmath,amsfonts,amsthm}
\usepackage{ amssymb }
\usepackage{enumitem}
\usepackage{graphicx}
\usepackage{array}

\usepackage[toc,page]{appendix}
\usepackage{booktabs}
\usepackage{natbib}
\usepackage{microtype}
\usepackage[font = small,labelfont=bf,textfont=it]{caption}
\usepackage{footnote}
\usepackage{multirow}
\usepackage{algorithm}
\usepackage{mathrsfs}
\usepackage[noend]{algorithmic}

\usepackage{caption}
\usepackage[CJKbookmarks=true,
            bookmarksnumbered=true,
			bookmarksopen=true,
			colorlinks=true,
			citecolor=blue,
			linkcolor=blue,
			anchorcolor=red,
			urlcolor=blue]{hyperref}
\usepackage{rotating}
\usepackage{fancyvrb}
\usepackage[dvipsnames]{xcolor}

 \makeatletter
\def\mathcolor#1#{\@mathcolor{#1}}
\def\@mathcolor#1#2#3{\protect\leavevmode
  \begingroup
    \color#1{#2}#3\endgroup
}
\makeatother

\usepackage[margin=1in,hmarginratio=1:1,top=20mm,columnsep=20pt]{geometry}
\usepackage{microtype}
\usepackage[font = small,labelfont=bf,textfont=it]{caption}
\usepackage{footnote}
\usepackage{multirow}

\usepackage{color}

\title{Rate-Preserving Shrinking-Support Gaussian Process Prediction}

\author{Xiaopeng Xiang$^{a}$, Wenlin Dai$^b$, Marc G. Genton$^c$, Wenjia Wang$^d$\\
                 $^a$ The Hong Kong University of Science and Technology (Guangzhou)\\
    $^b$ Renmin University of China\\
    $^c$ King Abdullah University of Science and Technology\\
                $^d$ National University of Singapore\\
}
\date{}

\begin{document}
\maketitle

\begin{abstract}
Gaussian process prediction is a central tool in spatial statistics, but standard implementations require dense matrix operations that become prohibitive for large datasets. We propose a scale-adjusted compactly supported working correlation for Gaussian process prediction, using the generalized Wendland family with a support radius $\phi_n$ that is allowed to decrease with the sample size $n$. Under fixed-domain asymptotics with quasi-uniform designs, a fixed support radius does not yield asymptotic sparsity, whereas a shrinking support radius can make the covariance matrix sparse. We show that $\phi_n$ and the regularization parameter can be jointly chosen so that the resulting predictor preserves the optimal integrated mean squared prediction error rate while reducing the number of nonzero covariance entries and the associated sparse matrix-vector multiplication cost. We also establish an analogous rate-preserving sparsification result for kernel ridge regression. Simulations and an ERA5 temperature application show that the proposed method achieves competitive prediction accuracy while retaining the computational efficiency provided by sparse linear algebra.
\end{abstract}
\noindent{\it Keywords}: Compactly supported kernels, convergence rate, fixed-domain asymptotics sparsity, Gaussian process, spatial statistics
\vfill
\newpage
		\section{Introduction}
		
		Gaussian processes (GPs) are a powerful and flexible framework for probabilistic modeling, offering a principled approach to regression, classification, and uncertainty quantification. Rooted in Bayesian nonparametrics, GPs define distributions over functions rather than fixed parametric forms, allowing them to adapt to complex data patterns while providing well-calibrated uncertainty estimates. These features have made GPs widely used in spatial statistics \citep{cressie2015statistics,matheron1963principles,stein1999interpolation}, machine learning \citep{rasmussen2006gaussian}, and computer experiments \citep{sacks1989design,santner2013design}.
		
		Practically, GPs are defined by a mean function (often zero for simplicity) and a covariance function that captures domain-specific correlations (e.g., squared-exponential functions). Inference in this framework leverages Bayesian updating to refine the posterior distribution over functions as new data are observed, seamlessly balancing flexibility with regularization. The conditional mean can be applied for predicting the function value, while the conditional variance quantifies uncertainty through pointwise confidence intervals. 
		
		Despite their strengths, GPs also face practical challenges, particularly in scalability for large datasets, where dense matrix operations become prohibitive. For \(n\) observations, exact GP inference requires solving an $n\times n$ linear system, which typically scales as $O(n^3)$ in time for dense factorization and $O(n^2)$ in memory for $n$ observed input-output pairs, which becomes expansive for large spatial datasets. Much effort has been devoted to scalable GP methods; see \citet{heaton2019case}, \citet{katzfuss2021general}, and \citet{sun2012chapter} for comprehensive reviews.
		
		Methods for likelihood approximation via sparse precision matrices include the Vecchia approximation \citep{katzfuss2020vecchia,katzfuss2021general,stein2004approximating,vecchia1988estimation}, which factorizes the joint density into a product of conditional densities with small conditioning sets, and the nearest-neighbor Gaussian process \citep{datta2016hierarchical,finley2019efficient}, which enforces sparsity through local neighborhood structures. Covariance tapering multiplies a parent covariance function by a compactly supported taper so that the tapered covariance matrix is sparse and can be factorized efficiently \citep{furrer2006covariance,kaufman2008covariance,stein2013statistical}. The stochastic partial differential equation approach and Gaussian Markov random field approximations of \citet{lindgren2011explicit} and \citet{rue2005gaussian} exploit sparse precision matrices for spatial data on triangulated meshes.
		
		Low-rank approximation is another important approach for massive spatial data. It approximates the original GP by a process represented through a smaller number of basis functions or inducing variables; notable examples include fixed-rank kriging \citep{cressie2008fixed}, LatticeKrig \citep{nychka2015multiresolution}, predictive processes \citep{banerjee2008predictive}, and the Nystr\"om approximation \citep{williams2001using}. \citet{song2025large} recently proposed support points as optimal knot locations for large-scale low-rank GP prediction. Hybrid approaches include the full-scale approximation of \citet{sang2012full}, which decomposes the process into a predictive-process component and a tapered residual process, and the multi-resolution approximation of \citet{katzfuss2017multi}, which recursively applies low-rank approximations across spatial scales. \citet{chen2022kernelpacket} introduced kernel packets for exact and scalable Mat\'ern GP regression under grid designs. Distributed computing \citep{abdulah2018exageostat} and meta-kriging \citep{guhaniyogi2018meta} provide another route to large-scale spatial prediction.
		
		This paper studies a different but closely related question: can a compactly supported covariance be made increasingly local as $n$ grows, thereby producing asymptotic sparsity, while preserving the optimal prediction rate? We provide an affirmative answer to this question in the present work by establishing a \emph{rate-preserving shrinking-support compactly supported GP prediction}. Specifically, we directly impose a compactly supported correlation function $\Phi((\bx-\bx')/\phi_n)$ for GP prediction, where $\bx,\bx'\in \RR^d$ and $\phi_n$ is the scale parameter. One prominent class of such functions is the generalized Wendland family \citep{bevilacqua2019estimation,gneitingstationary,wendland2004scattered}. If $\|\bx-\bx'\|_2\geq \phi_n$, where $\|\cdot\|_2$ is the Euclidean norm, then the imposed correlation is zero, and the corresponding covariance matrix entry is exactly zero.
		


		Our contribution is twofold. First, we propose a scale-adjusted compactly supported correlation for GP prediction with massive data. The key idea is to let the support radius $\phi_n$ decrease with the sample size $n$. Under fixed-domain asymptotics, a fixed support radius generally leads to $O(n^2)$ nonzero entries in the correlation matrix, whereas a shrinking support radius can produce asymptotic sparsity. For quasi-uniform designs, the number of nonzero entries is of order $O(n^2\phi_n^d+n)$, so choosing $\phi_n=o(1)$ reduces storage and sparse matrix-vector multiplication cost. This gives a simple kernel-level sparsification strategy that does not require inducing points, low-rank basis construction, or special grid structure.
		
		Second, we develop convergence theory for GP prediction when the imposed correlation function changes with $n$. This differs from much of the existing GP convergence literature, where the correlation parameters are fixed or behave similarly as fixed. Numerous studies have focused on Gaussian process interpolation without observational noise. For example, \cite{stein1990uniform}, \cite{yakowitz1985comparison} studied pointwise convergence rates or settings where input points are not general scattered data points, while \cite{tuo2020kriging}, \cite{wang2019prediction} established convergence rates for Gaussian process interpolation in the $L_p$ norm, with $1 \leq p \leq \infty$. Few works address the noisy regression setting. For instance, \cite{lederer2019uniform} provided a uniform error bound for Gaussian process regression, and \cite{wang2022gaussian} studied convergence under the $L_2$ norm. The shrinking-scale setting considered in the present work requires a new analysis because the corresponding norm-equivalence constants, which were used to control the posterior variance in previous works, depend on $\phi_n$, which also changes with $n$. We show that this additional scale dependence can be controlled. Specifically, for GP prediction, if the true process has smoothness $m_0>d/2$ and the imposed generalized Wendland correlation has smoothness $m\geq m_0$, then the choice \(\phi_n\asymp n^{-\frac{m-m_0}{2mm_0}}\) preserves the optimal integrated mean squared prediction error (MSPE) rate $O\left(n^{-\frac{2m_0-d}{2m_0}}\right)$
        while reducing the number of nonzero entries in the imposed correlation matrix to $O\left(n^{2-\frac{(m-m_0)d}{2mm_0}}\right).$ Thus, the support radius can shrink with $n$ to improve sparse-linear-algebra efficiency without sacrificing the optimal prediction rate.

		In particular, the distinction between our method and covariance tapering is important. 
		Covariance tapering starts with a parent covariance, such as a Mat\'ern covariance, and constructs a sparse approximation by multiplying it by a compactly supported taper. Its theoretical target is therefore tied to approximation or inference under the parent covariance model. In contrast, this paper studies the prediction risk of a deliberately misspecified, compactly supported covariance whose support radius changes with $n$. The object of our analysis is not entrywise approximation to a dense covariance matrix, but the integrated MSPE of the resulting predictor. This distinction is essential because the reproducing kernel Hilbert space (RKHS) and Fourier norm-equivalence constants depend on $\phi_n$, and controlling this dependence yields the admissible shrinking-support rate.
		
		Compared to GP modeling, the problem of deterministic function estimation with kernel ridge regression (KRR) has a larger body of literature. An incomplete list of results includes the convergence rates for KRR \citep{Hamm2021,lin2017distributed,steinwart2009optimal}, posterior contraction of GP priors in Bayesian statistics \citep{castillo2014bayesian,pati2015optimal,van2008rates}, and scattered data approximation \citep{rieger2009deterministic,wynne2021convergence}, among others. In particular, \cite{Hamm2021} also employed a changing-scale parameter in the kernel function, considered a set of upper box-counting dimensions and derived the convergence rate of the excess expected risk. However, they considered Gaussian kernel functions, which are not compactly supported and therefore do not induce exact matrix sparsity. We also obtain an analogous result under the KRR setting. Specifically, with a deterministic truth in RKHS and a suitable choice of $\phi_n$ and regularization parameter $\lambda_n$, the estimator attains the minimax-optimal $L_2$ rate $O_p(n^{-m_0/(2m_0+d)})$ while using a sparse kernel matrix. Though in this work we focus on GP prediction, the result for KRR may be of its own interest and provides new insights into the field of deterministic function estimation.

		The rest of this paper is arranged as follows. Section~\ref{sec_GP} introduces the scale-adjusted compactly supported GP predictor. Section~\ref{sec_GPtheory} gives the GP convergence and sparsity results. Section~\ref{sec_KRR} extends the analysis to KRR. Section~\ref{sec_numerical} presents simulation studies, Section~\ref{sec_real} provides an ERA5 temperature application, and Section~\ref{sec_conclu} concludes. The supplementary material is organized as follows. Supplementary Section~A reviews the RKHS background used in the analysis; Sections~B--E give the proofs of the main results and auxiliary lemmas; Section~F provides the full experimental settings; and Section~G reports additional numerical tables.
		
		\section{Gaussian Process Prediction with Scale-Adjusted Compactly Supported Correlation Functions}\label{sec_GP}
		
		Let $(\bx_k,y_k)$, $k=1,\ldots,n$, be observed data satisfying
		\begin{align}\label{recoveringGP}
			y_k = f(\bx_k) + \varepsilon_k, k=1,\ldots,n, 
		\end{align}
		where $\bx_k \in \Omega\subset \RR^d$ and $\varepsilon_k$'s are independent and identically distributed (i.i.d.) random errors with mean zero and variance $\sigma_\varepsilon^2$. In spatial statistics, the underlying function $f$ is often modeled as a realization of a Gaussian process $Z$. From this perspective, we shall not differentiate $f$ and $Z$ in Gaussian process regression setting, while the deterministic-function setting will be considered separately in Section~\ref{sec_KRR}. We assume that $Z$ is a mean-zero stationary Gaussian process, that is, the covariance of $Z(\bx)$ and $Z(\bx')$ only depends on the difference of $\bx$ and $\bx'$. Denote the covariance function of $Z$ by $\text{Cov}(Z(\bx),Z(\bx'))=\sigma^2 \Psi(\bx - \bx')$ for $\bx,\bx'\in \RR^d$, where $\sigma^2$ is the variance, and $\Psi$ is the true, but typically unknown,  correlation function which is positive definite and integrable on $\RR^d$ with $\Psi(\mathbf{0})=1$.
		
		If the random errors $\varepsilon_k$'s are normally distributed, then conditional on the observed data, $Z(\bx)$ is also normally distributed with conditional mean and variance 
		\begin{align}
			\mathbb{E}[Z(\bx)|\by]= &\rb(\bx)^\top (\Rb+\rho \Ib_n)^{-1} \by,\label{mean}\\
			\text{Var}[Z(\bx)|\by]= & \sigma^2(\Psi(\bx-\bx)-\rb(\bx)^\top (\Rb+\rho \Ib_n)^{-1}\rb(\bx)),\label{var}
		\end{align}
		where $\rb(\bx)=(\Psi(\bx-\bx_1),\ldots,\Psi(\bx-\bx_n))^\top, \Rb=(\Psi(\bx_j-\bx_k))_{j k}$, $\Ib_n$ is the $n\times n$ identity matrix, $\rho = \sigma_\varepsilon^2/\sigma^2$, and $\by = (y_1,\ldots,y_n)^\top$. The conditional mean is the best linear predictor of $Z(\bx)$ \citep{santner2013design}, which has the minimal  MSPE $\text{Var}[Z(x)|\by]$. 
		

Both the conditional mean and variance depend on the \textit{true correlation function} $\Psi$, which is often unknown in practice. Moreover, even if $\Psi$ is known, exact calculation of the conditional mean \eqref{mean} and variance \eqref{var} becomes increasingly difficult as the sample size $n$ grows, requiring $O(n^2)$ storage and $O(n^3)$ dense factorization cost. To address this challenge, we propose to replace the true correlation function $\Psi$ by an imposed correlation function $\Phi$ that is chosen to be compactly supported. 
Our proposed method is related to, but distinct from, covariance tapering \citep{furrer2006covariance,kaufman2008covariance,stein2013statistical}, which induces sparsity by multiplying a covariance function by a compactly supported taper. By contrast, we build compact support directly into the correlation function and allow its scale parameter to vary with $n$. Although compactly supported correlation functions with spatial scale parameters that vary with temporal lag have been studied in the space--time literature \citep{porcu2020nonseparable,senoussi2022nonstationary,faouzi2025compatibility}, our construction instead indexes the scale parameter by the sample size $n$. The compact support induces sparsity in the resulting correlation matrix, while the scale parameter of $\Phi$ is adjusted to preserve statistical accuracy.

		In this work, we take $\Phi$ from the generalized Wendland family \citep{bevilacqua2019estimation,chernih2014closed,fasshauer2015kernel,gneitingstationary,wendland2004scattered}, indexed by the imposed smoothness \(m\). Writing $h=\|\bx-\bx'\|_2$, define
		\begin{align}\label{eq_GWcorr}
			\Phi(\bx-\bx')=\varphi(h) := \left\{
			\begin{array}{lc}
				\frac{1}{B(2m-d-1,\mu+1)}\int_{h/\phi}^1 u(u^2-(h/\phi)^2)^{m-(d+3)/2}(1-u)^\mu {\rm d}u, & 0\leq h<\phi, \\
				0, & h\geq \phi,
			\end{array}\right.
		\end{align}
		where $\phi>0$ is the scale, or support-radius, parameter, $m>(d+1)/2$, $\mu \geq m$, and $B$ denotes the beta function. While numerical integration can be applied to computing \eqref{eq_GWcorr}, closed form solutions can be obtained when $m-(d+1)/2$ is a nonnegative integer. Table~\ref{tab_wendlandkernel} lists several closed-form examples. A larger $\phi$ enlarges the support radius, while a larger $m$ yields a smoother decay near the origin (Figure~\ref{fig:wendland_kernel_shapes}).

		\begin{figure}[t]
			\centering
			\includegraphics[width=0.8\linewidth]{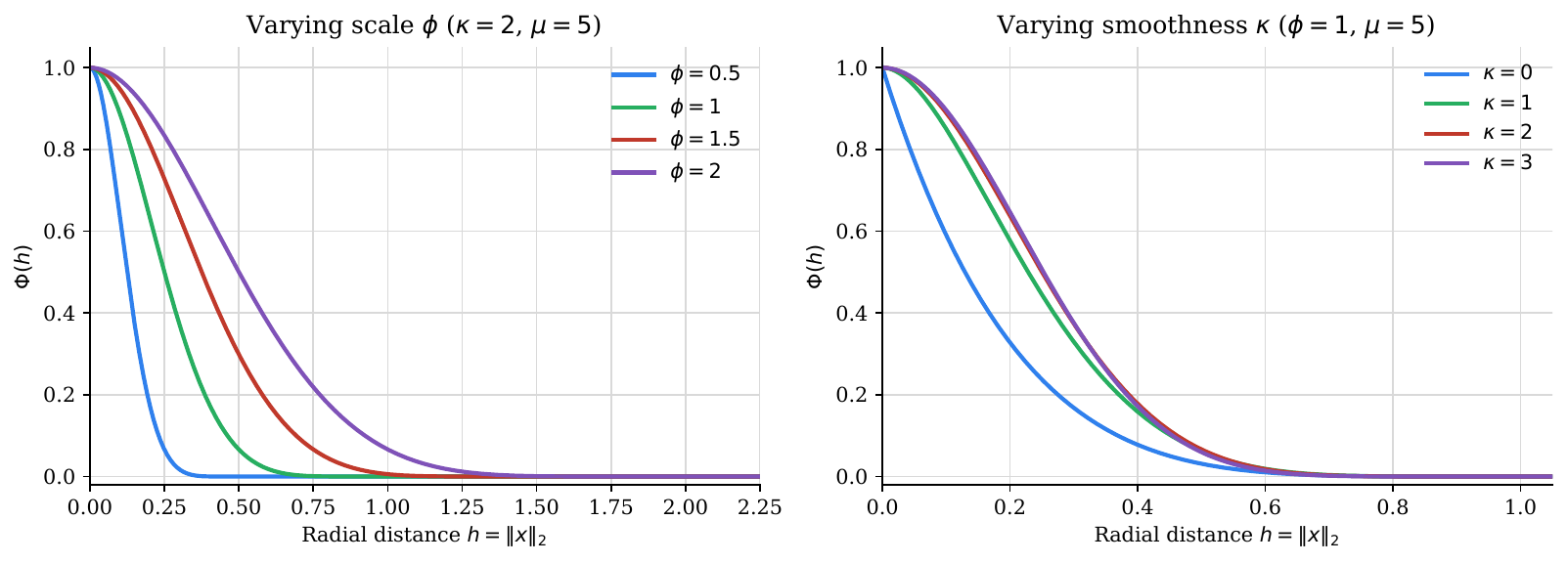}
			\caption{Illustration of generalized Wendland correlation functions as a function of radial distance $h=\|\bx-\bx'\|_2$.} 
		\label{fig:wendland_kernel_shapes}
	\end{figure}
	
	\begin{table}[ht]
		\centering
		\caption{Closed-form generalized Wendland correlations indexed by the imposed smoothness $m$. The final column reports the corresponding sample-path differentiability as $m-(d+1)/2$.}\label{tab_wendlandkernel}
		\vspace{0.5em}
		\small
		\setlength{\tabcolsep}{3pt}
		\resizebox{\linewidth}{!}{%
			\begin{tabular}{ccc}
				\toprule
				$m$ & Correlation function & $m-(d+1)/2$ \\
				\midrule
				$(d+1)/2$ & $(1 - r)^\mu_+$ & 0 \\
				$(d+3)/2$ & $(1 - r)^{\mu+1}_+(1 + r(\mu + 1))$ & 1 \\
				$(d+5)/2$ & $(1 - r)^{\mu+2}_+(1 + r(\mu + 2) + \frac{r^2}{3}(\mu^2 + 4\mu + 3))$ & 2 \\
				$(d+7)/2$ & $(1 - r)^{\mu+3}_+ (1 + r(\mu + 3) + \frac{r^2}{5}(2\mu^2 + 12\mu + 15) + \frac{r^3}{15}(\mu^3 + 9\mu^2 + 23\mu + 15))$ & 3\\
				\bottomrule
			\end{tabular}
		}
	\end{table}
	
	With the imposed correlation function $\Phi$, we define the Gaussian process predictor as
	\begin{align}\label{eq:GPpredic}
		\hat f_\Phi(\bx) = \rb_\Phi(\bx)^\top (\Rb_\Phi+\rho_n \Ib_n)^{-1} \by, \quad \bx\in \Omega,
	\end{align}
	where $\rb_\Phi(\bx)=(\Phi(\bx-\bx_1),\ldots,\Phi(\bx-\bx_n))^\top$ and $\Rb_\Phi=(\Phi(\bx_j-\bx_k))_{j k}$. We treat $\rho_n$ as a regularization parameter \citep{cressie2003spatial,stein1999interpolation}. By \eqref{eq_GWcorr}, as long as $\|\bx_i-\bx_j\|_2\geq \phi$, $\Phi(\bx_i-\bx_j) = 0$, and thus entries of $\Rb_\Phi$ corresponding to sufficiently distant input pairs are exactly zero. If $\phi$ is fixed as $n$ increases, as in the settings considered by \citet{tuo2020kriging} and \citet{wang2022gaussian}, the number of nonzero entries in $\Rb_\Phi$ is still typically of order $n^2$ under fixed-domain asymptotics. To induce asymptotic sparsity, we allow $\phi$ to decrease as the sample size increases, and denote it by $\phi_n$. If $\phi_n = o(1)$, then for well-distributed input points, the number of non-zero elements in $\Rb_\Phi$ is $o(n^2)$.
	
	To compute the predictor in \eqref{eq:GPpredic}, one needs to solve $(\Rb_\Phi+\rho_n \Ib_n)\bs=\by$, where many numerical methods can be applied. Once $\bs$ is obtained, it can be reused to evaluate $\hat f_\Phi(\bx)=\rb_\Phi(\bx)^\top\bs$ for any prediction point $\bx$. Usually, this linear system can be solved by iterative methods, which typically requires the matrix--vector multiplication $(\Rb_\Phi+\rho_n \Ib_n)\ba$ for a vector $\ba$, together with $O(n)$ inner products and vector updates. Because $\Phi$ is compactly supported with radius $\phi_n$, only input pairs within distance $\phi_n$ contribute; for well-distributed input points under fixed-domain asymptotics, each row therefore has order $n\phi_n^d$ nonzero entries, so the total number of nonzero entries of $\Rb_\Phi+\rho_n \Ib_n$ is $O(n^2\phi_n^d+n)$. Hence one matrix--vector multiplication iteration costs $O(n^2\phi_n^d+n)$ operations and the storage requirement is of the same order, compared with $O(n^2)$ storage and $O(n^3)$ factorization cost for a dense direct solve. If the linear system is solved in $T_n$ iterations (where $T_n$ is typically at most $n$), the overall solve cost is $O(T_n(n^2\phi_n^d+n))$, while each new prediction costs only $O(n\phi_n^d)$ because $\rb_\Phi(\bx)$ is itself also sparse. Thus, when $\phi_n=o(1)$, both storage and per-iteration arithmetic become subquadratic in $n$.
	
	The main statistical quantity of interest in this work is the MSPE using $\hat f_\Phi$, namely, 
	\begin{align}\label{eq:errorGP}
		\EE\|Z - \hat f_\Phi\|_{L_2(\Omega)}^2.
	\end{align} 
	This quantity measures the statistical cost of using the imposed compactly supported correlation in place of the true correlation. The scale parameter $\phi_n$ affects this risk and the computation in opposite ways: a smaller $\phi_n$ yields a sparser correlation matrix and cheaper sparse matrix-vector multiplication, but overly aggressive localization may discard dependence information needed for accurate prediction. The theoretical analysis in Section~\ref{sec_GPtheory} identifies sufficient joint choices of $\phi_n$ and $\rho_n$ under which this localization effect is controlled, so that the optimal integrated MSPE rate is preserved while the sparse-linear-algebra cost is reduced.
	
	In the rest of this work, we will use the following notation. Let $C,C',c_j,C_j, j\geq 0$ be positive constants, of which value can change from line to line. Let $a_n$ and $b_n$ be two positive sequences. We write $a_n\asymp b_n$ if, for some $C,C'>0$, $C\leq a_n/b_n \leq C'$, $a_n\gtrsim b_n$ if $a_n\geq Cb_n$ for some constant $C>0$, and $a_n\lesssim b_n$ if $a_n\leq C'b_n$ for some constant $C'>0$. The Fourier transform of $f\in L_1(\mathbb{R}^d)$ is given by
	$$\mathcal{F}(f)(\bomega)=(2\pi)^{-d/2}\int_{\mathbb{R}^d} f(\bx) e^{-{\rm i} \bx^\top\bomega}{\rm d} \bx.$$
	
	\section{Theoretical Analysis}\label{sec_GPtheory}
	
	In this section, we introduce our main theoretical results for GP regression using the imposed correlation function $\Phi$ with adjusted scale parameter $\phi_n$. Specifically, we show that the optimal integrated prediction rate can be preserved while the number of nonzero entries in the imposed covariance matrix is reduced, and thus leads to sparse matrix-vector multiplication.

	Throughout this work, we consider fixed designs, i.e., the input points $\bx_1,\ldots,\bx_n$ are nonrandom. Fixed designs are widely used in spatial statistics \citep{cressie2015statistics,stein1999interpolation}, where locations are typically predetermined rather than randomly sampled. They also play a central role in computer experiments \citep{santner2013design} and in engineering design and optimization \citep{jones1998efficient}. In particular, we focus on \textit{quasi-uniform designs} \citep{borodachov2007asymptotics,utreras1988convergence}.
	
	\begin{definition}[Fill distance and separation radius]
		Let $\bX= \{\bx_1,...,\bx_n\}$ be a set of input points. The fill distance and the separation radius of $\bX$ are respectively defined by 
		\begin{align*}
			h_{\bX,\Omega}= \sup_{\bx\in\Omega}\inf_{\bx_j\in \bX}\|\bx-\bx_j\|_2, \qquad q_{\bX}=\min_{1\leq j\neq k\leq n}\|\bx_j-\bx_k\|_2/2.
		\end{align*}
	\end{definition}
	
	\begin{definition}[Quasi-uniform designs]\label{defn_quasi}
		Let $\cX = \{\bX_1,\bX_2,...\}$ be a sequence of sets of input points. Without loss of generality, assume that ${\rm card}(\bX_n)=n$, where $n$ takes its value in an infinite subset of $\mathbb{N}$, and card$(X)$ denote the cardinality of set $X$. We call $\cX$ a \textit{sampling scheme} \citep{tuo2020kriging}. A sampling scheme $\cX = \{\bX_1,\bX_2,...\}$ is quasi-uniform if $h_{\bX_n,\Omega}/q_{\bX_n}\leq C$ for all $n$. 
	\end{definition}
	It is easy to check that $h_{\bX,\Omega}\geq q_{\bX}$ \citep{wendland2004scattered} for any set of points $\bX$. For a quasi-uniform sampling scheme, $h_{\bX_n,\Omega}\asymp q_{\bX_n}\asymp n^{-1/d}$, while Example 1 of \cite{tuo2020kriging} shows that random samplings are not quasi-uniform.
	
	We impose the following assumptions in this paper.
	
	\begin{assumption}\label{assum_region}
		The region of interest $\Omega\subset \RR^d$ is a compact set with Lipschitz boundary, and satisfies an interior cone condition. That is, there exist $\alpha\in (0,\pi/2)$ and $ \mathcal{R} > 0$ such that for every $\bx\in\Omega$, a unit vector $\bxi(\bx)$ exists such that the cone
		$\mathcal{C}(\bx,\bxi(\bx),\alpha,\mathcal{R}):=\left\{\bx+t \bbeta:\bbeta\in\mathbb{R}^d,\|\bbeta\|=1,\bbeta^\top\bxi(\bx)\geq \cos\alpha,t\in[0,\mathcal{R}]\right\} $
		is contained in $\Omega$.
	\end{assumption}
	
	\begin{assumption}\label{assum_Psi}
		The true correlation function $\Psi$ has smoothness \(m_0>d/2\) and is either a generalized Wendland correlation function as in \eqref{eq_GWcorr} with $\phi$ as a fixed constant and smoothness \(m_0\), or the isotropic Mat\'ern correlation functions \citep{stein1999interpolation} defined by
		\begin{align}\label{materngai}
			\Psi_M(\bx)=\frac{1}{\Gamma(m_0-d/2)2^{m_0-d/2 -1}}
			\left(2\sqrt{m_0-d/2}\,\theta \|\bx\|_2\right)^{m_0-d/2}
			K_{m_0-d/2}\left(2\sqrt{m_0-d/2}\,\theta\| \bx\|_2\right),
		\end{align}
		where $\theta>0$ is the scale parameter and $K_{m_0-d/2}$ is the modified Bessel function of the second kind of order $m_0-d/2$.
	\end{assumption}
	
	\begin{assumption}\label{assum_noise}
		Suppose $\varepsilon_k$'s in \eqref{recoveringGP} are i.i.d. sub-Gaussian random variables \citep{geer2000empirical}, i.e., satisfying
		\begin{align*}
			C^2 (\mathbb{E} e^{|\varepsilon_k|^2/C^2}-1)\leq C', \quad k=1,...,n,
		\end{align*}
		for constants $C,C'>0$ independent of $k$ and $n$.
	\end{assumption}
	
	\begin{assumption}\label{assum_quasiuniform}
		The sampling scheme $\cX = \{\bX_1,\bX_2,...\}$ is quasi-uniform.
	\end{assumption}
	
	Assumption \ref{assum_region} imposes a geometric condition on the region $\Omega$, which is quite standard in the literature \citep{wendland2004scattered}. In particular, if $\Omega$ is compact and convex, then Assumption \ref{assum_region} is satisfied \citep{niculescu2006convex}. Assumption \ref{assum_Psi} implies that the underlying GP is generated by two popular correlation functions used in GP regression \citep{bevilacqua2019estimation,stein1999interpolation}.  Assumptions \ref{assum_noise} and \ref{assum_quasiuniform} are standard assumptions used in GP regression \citep{tuo2020kriging}.
	
	With these assumptions, we establish the following theorem.
	
	\begin{theorem}\label{thm_GPconvergence_rate}
		Suppose $m\geq m_0>d/2$. Let $q_n=q_{\bX_n}$ and	\(\gamma_n=\left(\phi_n^d\rho_n^{-1}q_n^{-d}\right)^{1/(2m)}.\)
		Under Assumptions \ref{assum_region}-\ref{assum_quasiuniform}, suppose that $0<\phi_n\leq 1$, $\rho_n>0$, $\gamma_n\geq w_0$, and $\phi_n^{-1}\gamma_n q_n\leq C_\gamma$ for all sufficiently large $n$, where $w_0$ is the constant in Lemma \ref{lem:wendland_fourier_scaled} in the Supplementary materials and $C_\gamma>0$ is fixed. Then
		\begin{align}\label{eq_thm_GPconvergence_rate}
			& \EE\|Z - \hat f_\Phi\|_{L_2(\Omega)}^2\nonumber\\
			\leq C & \bigg((\phi_n^{\frac{m_0(2m-d)}{m}}\left(\frac{\rho_n}{n}\right)^{\frac{m_0-m}{m}} + \phi_n^{-d} +\rho_n^{-1})\left(\phi_n^{-\frac{d(2m-d)}{2m}}\left(n^{-\frac{2m}{d}}\phi_n^{-2m+d} + \frac{\rho_n}{n}\right)^{1-\frac{d}{2m}}\right)\nonumber\\
			& + \phi_n^{\frac{(2m-d)(2m_0-d)}{2m}}\left(\frac{\rho_n}{n}\right)^{\frac{2m_0 - d}{2m}}\bigg),
		\end{align}
		where $C$ is a constant not depending on $n$, $\phi_n$, or $\rho_n$. 
	\end{theorem}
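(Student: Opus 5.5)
We split the error through the linearity of the predictor in $\by$. Write $\hat f_\Phi = \hat f_\Phi^{Z} + \hat f_\Phi^{\varepsilon}$, where $\hat f_\Phi^{Z}(\bx)=\rb_\Phi(\bx)^\top(\Rb_\Phi+\rho_n\Ib_n)^{-1}Z(\bX)$ and $\hat f_\Phi^{\varepsilon}(\bx)=\rb_\Phi(\bx)^\top(\Rb_\Phi+\rho_n\Ib_n)^{-1}\boldsymbol\varepsilon$. Since $Z$ and $\boldsymbol\varepsilon$ are independent, Fubini gives
$$\EE\|Z-\hat f_\Phi\|_{L_2(\Omega)}^2=\int_\Omega \EE(Z(\bx)-\hat f_\Phi^{Z}(\bx))^2{\rm d}\bx+\int_\Omega\EE(\hat f_\Phi^{\varepsilon}(\bx))^2{\rm d}\bx.$$
The noise term equals $\sigma_\varepsilon^2\int_\Omega\|(\Rb_\Phi+\rho_n\Ib_n)^{-1}\rb_\Phi(\bx)\|_2^2{\rm d}\bx$, using sub-Gaussianity only through the variance. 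The signal term equals $\sigma^2\int_\Omega P(\bx){\rm d}\bx$, where $P(\bx)=\Psi(0)-2\bu(\bx)^\top\rb_\Psi(\bx)+\bu(\bx)^\top\Rb_\Psi\bu(\bx)$ and $\bu=(\Rb_\Phi+\rho_n\Ib_n)^{-1}\rb_\Phi$. This is a misspecified power function: the worst-case error of the linear rule $\bu$ over the unit ball of the RKHS $\mathcal N_\Psi(\RR^d)$, which is norm-equivalent to $H^{m_0}(\RR^d)$ under Assumption~\ref{assum_Psi}.

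The key tool is the scaled Fourier bound from Lemma~\ref{lem:wendland_fourier_scaled}. For frequencies with $\phi_n\|\bomega\|\ge w_0$, we have $\mathcal F\Phi(\bomega)\asymp \phi_n^{d}(\phi_n\|\bomega\|)^{-2m}$, while $\mathcal F\Phi\lesssim\phi_n^d$ at low frequency. Consequently,
$$\|g\|^2_{\mathcal N_\Phi}\lesssim \phi_n^{-d}\|g\|_{L_2}^2+\phi_n^{2m-d}|g|^2_{H^m}.$$
This is where the factor $\phi_n^{-d}$ and the powers $\phi_n^{\pm(2m-d)}$ in \eqref{eq_thm_GPconvergence_rate} come from. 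The quantity $\gamma_n$ plays the role of an effective bandwidth: the conditions $\gamma_n\ge w_0$ and $\phi_n^{-1}\gamma_nq_n\le C_\gamma$ keep the cutoff frequency $\gamma_n/\phi_n$ in the range where the asymptotic Fourier form applies and below the sampling frequency $q_n^{-1}$.

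Next we bound the two terms.

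\textbf{Signal term.} For $g$ in the unit ball of $H^{m_0}$, the map $g\mapsto \bu^\top g(\bX)$ is the regularized interpolant $s_g=\arg\min_s \frac1n\sum(g(\bx_k)-s(\bx_k))^2+\frac{\rho_n}{n}\|s\|^2_{\mathcal N_\Phi}$. Approximate $g$ by a band-limited function $g_\gamma$ with cutoff $\gamma_n/\phi_n$. Then $\|g-g_\gamma\|_{L_2}^2\lesssim(\gamma_n/\phi_n)^{-2m_0}$, and $\|g_\gamma\|_{\mathcal N_\Phi}^2\lesssim \phi_n^{-d}+\phi_n^{2m-d}(\gamma_n/\phi_n)^{2m-2m_0}$. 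Then apply a sampling inequality for quasi-uniform designs (Wendland's zeros lemma, using Assumption~\ref{assum_region}):
$$\|v\|_{L_2}^2\lesssim h^{2m}|v|_{H^m}^2+ h^d\textstyle\sum_k v(\bx_k)^2,$$
applied to $v=g_\gamma-s_g$. Combining the optimality of $s_g$ with the Sobolev-to-$\mathcal N_\Phi$ comparison gives, after optimizing over $\gamma_n$, a bound of the form
$$\big(\|g_\gamma\|^2_{\mathcal N_\Phi}\big)\cdot\phi_n^{-\frac{d(2m-d)}{2m}}\Big(n^{-2m/d}\phi_n^{-2m+d}+\frac{\rho_n}{n}\Big)^{1-\frac d{2m}}$$
plus the approximation term. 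The factor $(\phi_n^{m_0(2m-d)/m}(\rho_n/n)^{(m_0-m)/m}+\phi_n^{-d})$ arises by substituting $\gamma_n$ into $\|g_\gamma\|^2_{\mathcal N_\Phi}$. The last summand $\phi_n^{(2m-d)(2m_0-d)/(2m)}(\rho_n/n)^{(2m_0-d)/(2m)}$ is the $L_2$ approximation error at the effective bandwidth; the exponent $2m_0-d$ rather than $2m_0$ reflects that the sample paths have smoothness $m_0-d/2$.

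\textbf{Noise term.} Here $\int\|\bu(\bx)\|^2{\rm d}\bx$ is controlled by the same machinery. Write $\bu(\bx)^\top \bv=s_{\bv}(\bx)$, the regularized fit to data vector $\bv$. Then $\|s_{\bv}\|_{L_2}^2\lesssim$ (sampling inequality) $\times$ $\big(\|\bv\|^2+\rho_n\|s_{\bv}\|^2_{\mathcal N_\Phi}\big)$, with $\|s_{\bv}\|^2_{\mathcal N_\Phi}\le\rho_n^{-1}\|\bv\|^2$. Summing over an orthonormal basis, this yields the $\rho_n^{-1}$ factor times the same bracketed quantity.

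\textbf{Main obstacle.} The main difficulty is tracking the $\phi_n$-dependence through the norm equivalence. The constants in $\mathcal N_\Phi\simeq H^m$ blow up as $\phi_n\to0$, so the classical argument (e.g.\ \cite{wang2022gaussian}) cannot be used as a black box. I would work entirely on the Fourier side with the explicit two-regime bound above. I would also verify carefully that the sampling inequality is used with $|\cdot|_{H^m}$, whose $\phi_n$ scaling is exact, and never with $\|\cdot\|_{\mathcal N_\Phi}$ directly. The hypotheses $\gamma_n\ge w_0$ and $\phi_n^{-1}\gamma_nq_n\le C_\gamma$ are exactly what make this two-regime split valid.
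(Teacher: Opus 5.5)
Your setup is fine (the exact split using independence of $Z$ and the noise is even slightly sharper than the paper's factor $2$). The signal-term argument, however, bounds the wrong quantity.

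You correctly write the signal term as $\int_\Omega P(\bx)\,{\rm d}\bx$ with $P(\bx)=\sup_{\|g\|_{\cN_\Psi}\le 1}|g(\bx)-s_g(\bx)|^2$. You then control $\|g-s_g\|_{L_2(\Omega)}$ for each fixed $g$ in the unit ball. Because the supremum sits inside the integral, $\sup_g\int_\Omega|g-s_g|^2\le\int_\Omega\sup_g|g-s_g|^2$. A uniform $L_2$ bound therefore controls a quantity smaller than the MSPE and gives no upper bound on it.

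Your own exponents show the mismatch. You get $\|g-g_\gamma\|_{L_2}^2\lesssim(\gamma_n/\phi_n)^{-2m_0}$, whereas the last term of \eqref{eq_thm_GPconvergence_rate} equals $(\phi_n^{-1}\gamma_n)^{-(2m_0-d)}\asymp\int_{\|\bomega\|_2>\phi_n^{-1}\gamma_n}\cF(\Psi)(\bomega)\,{\rm d}\bomega$, which is a pointwise variance. Likewise, an $L_2$ sampling inequality does not produce the power $1-\frac{d}{2m}$ and the prefactor $\phi_n^{-\frac{d(2m-d)}{2m}}$ that you assert for the bracket; those come from an $L_\infty$ step. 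What is needed is a bound on $P(\bx)$ for each fixed $\bx$. The paper writes $P(\bx)=\int|\sum_ju_je^{-\mathrm{i}\bx_j^\top\bomega}-e^{-\mathrm{i}\bx^\top\bomega}|^2\cF(\Psi)(\bomega)\,{\rm d}\bomega$ and splits at $\phi_n^{-1}w_0$ and $R=\phi_n^{-1}\gamma_n$. On the two lower bands it compares $\cF(\Psi)$ with $\cF(\Phi)$, with multipliers $\phi_n^{-d}$ and $\gamma_n^{2m-2m_0}\phi_n^{2m_0-d}$. This is where your computation of $\|g_\gamma\|_{\cN_\Phi}^2$ correctly enters.

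The ingredient your plan lacks is control of the high band, $\int_{\|\bomega\|_2>R}|\sum_ju_je^{-\mathrm{i}\bx_j^\top\bomega}|^2\cF(\Psi)(\bomega)\,{\rm d}\bomega$, that is, how the weights $\bu$ act on unresolved frequencies at the design points. Your sketch never bounds $\bu^\top(g-g_\gamma)(\bX)$.

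The paper bounds this band by $R^{d-2m_0}\Lambda_{R\bX}\|\bu\|_2^2$ using $\Lambda_{R\bX}\lesssim(Rq_n)^{-d}$ from Lemma~\ref{LEM:BOFKMMAX}; this is where $\phi_n^{-1}\gamma_nq_n\le C_\gamma$ is used. The definition of $\gamma_n$ is exactly what makes $\phi_n^d\gamma_n^{-2m}q_n^{-d}=\rho_n$, so this term becomes the same multiplier times $\rho_n\|\bu\|_2^2$. All three bands then collapse to the multiplier times $I(\bx)=\Phi(\mathbf{0})-\rb_\Phi(\bx)^\top(\Rb_\Phi+\rho_n\Ib_n)^{-1}\rb_\Phi(\bx)$, plus $R^{d-2m_0}$.

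Finally, $I(\bx)$ is bounded pointwise in Lemma~\ref{LEMMAERRORWNUG}. With $g_{\bx}=\Phi(\bx-\cdot)-\bu^\top\rb_\Phi(\cdot)$ one has $I(\bx)=g_{\bx}(\bx)\le\|g_{\bx}\|_{L_\infty}$, $\|g_{\bx}\|_{\cN_\Phi}^2\le I(\bx)$, and $\|g_{\bx}\|_n^2\le(\rho_n/n)I(\bx)$. Gagliardo--Nirenberg, the sampling inequality and the $\phi_n$-dependent $H^m$--$\cN_\Phi$ comparison then give the power $1-\frac{d}{2m}$.

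Your noise-term argument also does not close as sketched. Summing the per-basis-vector bound gives only $\int_\Omega\|\bu(\bx)\|_2^2\,{\rm d}\bx\lesssim nh_n^{2m}\phi_n^{-(2m-d)}\rho_n^{-1}+nh_n^d$, and $nh_n^d\asymp1$, so the bound is $O(1)$. The paper instead uses the pointwise inequality $\rho_n\|\bu(\bx)\|_2^2\le I(\bx)$ together with the same lemma.
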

	Theorem \ref{thm_GPconvergence_rate} presents an integrated MSPE bound for prediction using a generalized Wendland correlation whose smoothness is at least as large as the smoothness of the true correlation function. By balancing the terms in \eqref{eq_thm_GPconvergence_rate}, we obtain the following sufficient rate choices.
	
	\begin{proposition}\label{prop_GP_phimunrelation}
		Let $\zeta_Z(\phi_n,\rho_n)$ denote the right-hand-side of \eqref{eq_thm_GPconvergence_rate}. It holds that $\zeta_Z(\phi_n,\rho_n) \asymp n^{-\frac{2m_0-d}{2m_0}}$ by taking $\rho_n\phi_n^{2m-d} \asymp n^{-\frac{m-m_0}{m_0}}$
		with $\phi_n^d \geq \rho_n$ and $n^{1-\frac{2m}{d}}\phi_n^{-2m+d}\leq \rho_n$.
	\end{proposition}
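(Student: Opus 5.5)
The plan is to substitute the prescribed relations directly into $\zeta_Z(\phi_n,\rho_n)$ and check that each term is $O(n^{-\frac{2m_0-d}{2m_0}})$. The last term also gives the matching lower bound, which yields $\asymp$. Write $t_n:=\rho_n\phi_n^{2m-d}\asymp n^{-\frac{m-m_0}{m_0}}$, so that $t_n/n\asymp n^{-m/m_0}$. Every term will be rewritten as a power of $t_n$ and $n$ alone, with $\phi_n$ cancelling.

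First I will check that the two side conditions are exactly what Theorem~\ref{thm_GPconvergence_rate} needs, so that its bound applies. Since $q_n\asymp n^{-1/d}$ under Assumption~\ref{assum_quasiuniform}, we have $\gamma_n^{2m}\asymp n\phi_n^d\rho_n^{-1}\geq n\to\infty$ by $\phi_n^d\geq\rho_n$, hence $\gamma_n\geq w_0$ eventually. Likewise
\[
(\phi_n^{-1}\gamma_nq_n)^{2m}\asymp \phi_n^{d-2m}\rho_n^{-1}n^{-\frac{2m-d}{d}}\leq 1
\]
by $n^{1-\frac{2m}{d}}\phi_n^{-2m+d}\leq\rho_n$. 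The second condition also shows $n^{-2m/d}\phi_n^{-2m+d}\leq\rho_n/n$, so the inner bracket in \eqref{eq_thm_GPconvergence_rate} is $\asymp\rho_n/n$. The second factor is therefore $\asymp(\rho_n/(n\phi_n^d))^{\frac{2m-d}{2m}}$.

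Next I bound the terms one at a time. The final term equals $(t_n/n)^{\frac{2m_0-d}{2m}}\asymp n^{-\frac{2m_0-d}{2m_0}}$; it is both the target upper bound and a lower bound for $\zeta_Z$. For the first product, write
\[
\phi_n^{\frac{m_0(2m-d)}{m}}(\rho_n/n)^{\frac{m_0-m}{m}}=(t_n/n)^{m_0/m}(n/\rho_n)
\]
and multiply by the second factor. Substituting $\rho_n=t_n\phi_n^{-(2m-d)}$, the powers of $\phi_n$ cancel exactly, and the product becomes $(t_n/n)^{\frac{2m_0-d}{2m}}$, the same order as the target.

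For the remaining terms, $\phi_n^d\geq\rho_n$ gives $\phi_n^{-d}\leq\rho_n^{-1}$, so it suffices to treat the $\rho_n^{-1}$ term. We have
\[
\rho_n^{-1}\Big(\frac{\rho_n}{n\phi_n^d}\Big)^{\frac{2m-d}{2m}}=\rho_n^{-\frac d{2m}}n^{-\frac{2m-d}{2m}}\phi_n^{-\frac{d(2m-d)}{2m}}=t_n^{-\frac d{2m}}n^{-1+\frac d{2m}},
\]
again after substituting for $\rho_n$. Inserting $t_n\asymp n^{-\frac{m-m_0}{m_0}}$ gives exponent $\frac{d(m-m_0)}{2mm_0}-1+\frac d{2m}=-\frac{2m_0-d}{2m_0}$, as required.

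The only delicate point is the exponent bookkeeping. The key observation is that, after eliminating $\rho_n$ via $t_n$, every $\phi_n$-dependence cancels. I expect the first cross term to need the most care, so I will verify that cancellation explicitly.
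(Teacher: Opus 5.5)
Your proof is correct and follows essentially the paper's route: you use $\phi_n^d\ge\rho_n$ and $n^{1-\frac{2m}{d}}\phi_n^{-2m+d}\le\rho_n$ to collapse every term of $\zeta_Z$ into the two expressions $(t_n/n)^{\frac{2m_0-d}{2m}}$ and $t_n^{-\frac{d}{2m}}n^{-1+\frac{d}{2m}}$ with $t_n=\rho_n\phi_n^{2m-d}$, and then insert $t_n\asymp n^{-\frac{m-m_0}{m_0}}$. The only difference is in the lower bound. The paper bounds $\zeta_Z$ below by these same two terms for arbitrary $(\phi_n,\rho_n)$, which also shows that the prescribed scaling minimizes the bound. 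Your lower bound uses the last term alone under the prescribed choice, which is all the stated $\asymp$ requires.
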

	
	
	Proposition \ref{prop_GP_phimunrelation} reveals a specific relationship between the parameters $\phi_n$ and $\rho_n$, suggesting that appropriate choices of these parameters can lead to optimal convergence rates of the MSPE $\EE\|Z - \hat f_\Phi\|_{L_2(\Omega)}^2$. Based on Proposition \ref{prop_GP_phimunrelation}, the following corollary shows that if the underlying truth is a GP generated by a Mat\'ern correlation function and we use a generalized Wendland correlation function for prediction, then we can also obtain the convergence rate in Corollary 7 of \cite{wang2022gaussian}, where they showed the case that both the true and imposed correlation functions are Mat\'ern correlation functions.
	
	\begin{corollary}\label{coro_GP_phi1}
		Let $\phi_n \asymp 1$ and $\rho_n\asymp n^{-\frac{m}{m_0}+1}$. Under conditions in Theorem \ref{thm_GPconvergence_rate}, we have
		\begin{align}\label{ineq_coro_GPconver}
			\EE\|Z - \hat f_\Phi\|_{L_2(\Omega)}^2\lesssim n^{-\frac{2m_0-d}{2m_0}}.
		\end{align}
	\end{corollary}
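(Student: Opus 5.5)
The corollary follows by specializing Proposition~\ref{prop_GP_phimunrelation} to $\phi_n\asymp 1$. It then remains to check that this choice satisfies the side conditions of both Proposition~\ref{prop_GP_phimunrelation} and Theorem~\ref{thm_GPconvergence_rate}.

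\textbf{Step 1: the balancing relation.} With $\phi_n\asymp 1$, the relation $\rho_n\phi_n^{2m-d}\asymp n^{-\frac{m-m_0}{m_0}}$ reduces to $\rho_n\asymp n^{-\frac{m-m_0}{m_0}} = n^{-\frac{m}{m_0}+1}$. This is exactly the stated choice.

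\textbf{Step 2: the conditions of Proposition~\ref{prop_GP_phimunrelation}.}
\begin{itemize}
\item First condition. We need $\phi_n^d\ge \rho_n$. Since $m\ge m_0$, the exponent $1-m/m_0$ is nonpositive, so $\rho_n\lesssim 1$. After adjusting constants (taking $\phi_n$ bounded below by a constant at least the constant in $\rho_n$, or absorbing constants as the proposition's $\asymp$ allows), this holds for large $n$. In the boundary case $m=m_0$, $\rho_n\asymp 1$, and we simply pick the constants so that $\rho_n\le \phi_n^d$.
\item Second condition. We need $n^{1-2m/d}\phi_n^{-2m+d}\le\rho_n$, which here reads $n^{1-2m/d}\lesssim n^{1-m/m_0}$. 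This is equivalent to $2m/d\ge m/m_0$, i.e.\ $m_0\ge d/2$, which holds strictly because $m_0>d/2$. Hence the left side is $o(\rho_n)$ and the inequality holds for large $n$.
\end{itemize}

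\textbf{Step 3: the hypotheses of Theorem~\ref{thm_GPconvergence_rate}.} This is where the main care is needed.
\begin{itemize}
\item Under Assumption~\ref{assum_quasiuniform}, $q_n\asymp n^{-1/d}$, so
\[
\gamma_n=\bigl(\phi_n^d\rho_n^{-1}q_n^{-d}\bigr)^{1/(2m)}\asymp \bigl(n^{m/m_0}\bigr)^{1/(2m)}=n^{1/(2m_0)}\to\infty .
\]
Thus $\gamma_n\ge w_0$ eventually.
\item Next, $\phi_n^{-1}\gamma_nq_n\asymp n^{1/(2m_0)-1/d}$. This tends to $0$ since $m_0>d/2$, so it is bounded by some $C_\gamma$.
\item The requirement $0<\phi_n\le1$ is met by taking the constant in $\phi_n\asymp1$ to be at most $1$, for instance $\phi_n\equiv c\le 1$.
\end{itemize}

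\textbf{Step 4: conclusion.} Theorem~\ref{thm_GPconvergence_rate} gives $\EE\|Z-\hat f_\Phi\|_{L_2(\Omega)}^2\le \zeta_Z(\phi_n,\rho_n)$, and Proposition~\ref{prop_GP_phimunrelation} gives $\zeta_Z\asymp n^{-\frac{2m_0-d}{2m_0}}$, which yields \eqref{ineq_coro_GPconver}.

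As a sanity check, I would plug the choices into the three summands of \eqref{eq_thm_GPconvergence_rate} directly.
\begin{itemize}
\item The last term is $(\rho_n/n)^{(2m_0-d)/(2m)}$ with $\rho_n/n\asymp n^{-m/m_0}$, which gives $n^{-(2m_0-d)/(2m_0)}$.
\item In the product term, the second factor is $\asymp (\rho_n/n)^{1-d/(2m)}\asymp n^{-(2m-d)/(2m_0)}$, because $n^{-2m/d}\le n^{-m/m_0}$.
\item The prefactor $(\rho_n/n)^{(m_0-m)/m}+1+\rho_n^{-1}$ is dominated by $n^{(m-m_0)/m_0}$ (the first and third pieces match).
\item The product is therefore $n^{(m-m_0)/m_0-(2m-d)/(2m_0)}=n^{-(2m_0-d)/(2m_0)}$.
\end{itemize}
This confirms the rate without needing anything beyond the stated results.
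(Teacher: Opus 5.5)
Your proposal is correct and takes the paper's route: you specialize the balancing relation $\rho_n\phi_n^{2m-d}\asymp n^{-\frac{m-m_0}{m_0}}$ of Proposition~\ref{prop_GP_phimunrelation} to $\phi_n\asymp 1$ and then invoke the bound of Theorem~\ref{thm_GPconvergence_rate}. You also spell out what the paper leaves implicit, namely the side conditions $\phi_n^d\gtrsim\rho_n$, $n^{1-2m/d}\lesssim\rho_n$ (using $m_0>d/2$), $\gamma_n\asymp n^{1/(2m_0)}\ge w_0$ and $\phi_n^{-1}\gamma_nq_n\to 0$, together with a term-by-term evaluation of the bound, all of which is correct.
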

	
	Choosing $\phi_n \asymp 1$, as described in Corollary \ref{coro_GP_phi1}, results in an asymptotically dense matrix under fixed-domain asymptotics. In contrast, if $\phi_n = o(1)$, the matrix $\Rb_\Phi$ becomes sparse. Specifically, if $\phi_n \asymp n^{-\alpha}$, then the resulting covariance matrix contains $O(n^{2-\alpha d})$ nonzero elements for quasi-uniform designs. From the perspective of sparse matrix-vector cost, smaller $\phi_n$ is desirable, but Proposition \ref{prop_GP_phimunrelation} constrains how quickly $\phi_n$ can shrink while preserving the optimal rate. Taking $\rho_n\asymp \phi_n^d$ gives $\phi_n\asymp n^{-\frac{m-m_0}{2mm_0}}$. Based on this reasoning, we present the following corollary, which demonstrates that the same convergence rate as in Corollary \ref{coro_GP_phi1} can be achieved with a shrinking support radius.
	
	\begin{corollary}\label{coro_GP_phismall}
		Let $\phi_n\asymp n^{-\frac{m-m_0}{2mm_0}}$ and $\rho_n\asymp \phi_n^d$. Under conditions in Theorem \ref{thm_GPconvergence_rate}, \eqref{ineq_coro_GPconver} holds.
	\end{corollary}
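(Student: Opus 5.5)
The plan is to reduce Corollary \ref{coro_GP_phismall} to Proposition \ref{prop_GP_phimunrelation}, which already gives $\zeta_Z(\phi_n,\rho_n)\asymp n^{-\frac{2m_0-d}{2m_0}}$. Combined with Theorem \ref{thm_GPconvergence_rate}, this yields \eqref{ineq_coro_GPconver}. So two things must be checked for the choice $\phi_n\asymp n^{-\frac{m-m_0}{2mm_0}}$, $\rho_n\asymp\phi_n^d$: the hypotheses of Theorem \ref{thm_GPconvergence_rate}, and the three rate conditions of Proposition \ref{prop_GP_phimunrelation}. Throughout I use $q_n\asymp n^{-1/d}$, which holds by Assumption \ref{assum_quasiuniform}.

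\emph{Hypotheses of Theorem \ref{thm_GPconvergence_rate}.} Since $m\ge m_0$, the exponent of $\phi_n$ is nonpositive. Hence $\phi_n\lesssim 1$, and after adjusting constants $\phi_n\le 1$. Because $\rho_n\asymp\phi_n^d$,
\[
\gamma_n=\left(\phi_n^d\rho_n^{-1}q_n^{-d}\right)^{1/(2m)}\asymp q_n^{-d/(2m)}\asymp n^{1/(2m)}\to\infty,
\]
so $\gamma_n\ge w_0$ for all large $n$. For the second condition,
\[
\phi_n^{-1}\gamma_nq_n\asymp n^{\frac{m-m_0}{2mm_0}+\frac{1}{2m}-\frac1d}=n^{\frac{1}{2m_0}-\frac1d}.
\]
This is bounded because $m_0>d/2$, so it is at most $C_\gamma$.

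\emph{Conditions of Proposition \ref{prop_GP_phimunrelation}.} First, $\rho_n\phi_n^{2m-d}\asymp\phi_n^{2m}\asymp n^{-\frac{m-m_0}{m_0}}$, as required. Second, $\phi_n^d\ge\rho_n$ holds by taking the constant in $\rho_n\asymp\phi_n^d$ at most one; this does not affect the rates. Third, $n^{1-\frac{2m}{d}}\phi_n^{-2m+d}\le\rho_n$ is equivalent, up to constants, to $n^{1-\frac{2m}{d}}\lesssim\phi_n^{2m}\asymp n^{1-\frac{m}{m_0}}$. That is, we need $\frac{2m}{d}\ge\frac{m}{m_0}$, which is exactly $m_0\ge d/2$. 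Since $m_0>d/2$ the inequality is strict, so the left side is $o(\rho_n)$ and the condition holds for all large $n$ irrespective of constants.

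With both sets of conditions verified, the Proposition bounds the right-hand side of \eqref{eq_thm_GPconvergence_rate} by $n^{-\frac{2m_0-d}{2m_0}}$, which is the claim. I expect no real obstacle here, since the argument is exponent bookkeeping. The only delicate point is the constant in $\phi_n^d\ge\rho_n$, which is handled by choosing the implicit constant in $\rho_n\asymp\phi_n^d$ appropriately. The boundary case $m=m_0$ gives $\phi_n\asymp1$ and is consistent with Corollary \ref{coro_GP_phi1}.
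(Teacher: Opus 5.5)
Your proposal is correct and follows the paper's route: the corollary comes from substituting $\rho_n\asymp\phi_n^d$ into the balancing condition $\rho_n\phi_n^{2m-d}\asymp n^{-\frac{m-m_0}{m_0}}$ of Proposition \ref{prop_GP_phimunrelation}, then invoking Theorem \ref{thm_GPconvergence_rate}. Your explicit checks of the side conditions are accurate and fill in bookkeeping the paper leaves implicit: $\phi_n^d\ge\rho_n$, $n^{1-\frac{2m}{d}}\phi_n^{-2m+d}=o(\rho_n)$ via $m_0>d/2$, $\gamma_n\asymp n^{1/(2m)}$, and $\phi_n^{-1}\gamma_nq_n\asymp n^{\frac{1}{2m_0}-\frac{1}{d}}\to 0$.
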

	
	Corollary \ref{coro_GP_phismall} shows that by choosing $\phi_n\asymp n^{-\frac{m-m_0}{2mm_0}}$ and $\rho_n\asymp \phi_n^d$, we can still have an estimator that has an optimal convergence rate. Regarding computation, note that the sparsity result controls the cost of a sparse matrix-vector multiplication. We summarize the sparsity and per-iteration implication in the following proposition.


	\begin{proposition}\label{prop_cgd_gp}
		Let $\phi_n\asymp n^{-\frac{m-m_0}{2mm_0}}$ and $\rho_n\asymp \phi_n^d$. Under Assumption \ref{assum_quasiuniform}, $\Rb_\Phi+\rho_n\Ib_n$ has \(O\left(n^{2-\frac{(m-m_0)d}{2mm_0}}\right)\) nonzero entries. Consequently, one sparse matrix-vector multiplication with $\Rb_\Phi+\rho_n\Ib_n$ costs \(O\left(n^{2-\frac{(m-m_0)d}{2mm_0}}\right)\) operations. 
	\end{proposition}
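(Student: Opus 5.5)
The plan is a direct counting argument based on the quasi-uniform geometry. The structure of the matrix does not matter here; only the number of input pairs within distance $\phi_n$ does.

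\textbf{Step 1: reduce to a packing bound.} The $(j,k)$ entry of $\Rb_\Phi+\rho_n\Ib_n$ equals $\Phi(\bx_j-\bx_k)+\rho_n\delta_{jk}$. By \eqref{eq_GWcorr} this vanishes whenever $j\neq k$ and $\|\bx_j-\bx_k\|_2\geq\phi_n$. The number of nonzero entries is therefore at most
\[
n+\sum_{j=1}^n N_j,\qquad N_j:={\rm card}\{k\neq j:\|\bx_k-\bx_j\|_2<\phi_n\},
\]
and it suffices to show $N_j\lesssim n\phi_n^d+1$ uniformly in $j$.

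\textbf{Step 2: bound each $N_j$ by a volume argument.} The open balls $B(\bx_k,q_n)$, $k=1,\dots,n$, are pairwise disjoint by the definition of the separation radius. Those with $\|\bx_k-\bx_j\|_2<\phi_n$ all lie in $B(\bx_j,\phi_n+q_n)$. Comparing Lebesgue measures gives
\[
(N_j+1)\,q_n^d\leq(\phi_n+q_n)^d,\qquad\text{so}\qquad N_j+1\leq\left(1+\phi_n/q_n\right)^d\lesssim 1+\phi_n^dq_n^{-d}.
\]
Note that we integrate over $\RR^d$, so no boundary issue arises from $\Omega$.

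\textbf{Step 3: relate $q_n$ to $n$.} Under Assumption~\ref{assum_quasiuniform} we have $q_n\asymp n^{-1/d}$, as noted after Definition~\ref{defn_quasi}. I would justify this as follows. The balls $B(\bx_k,h_{\bX_n,\Omega})$ cover $\Omega$, so $n\,h_{\bX_n,\Omega}^d\gtrsim{\rm vol}(\Omega)>0$; here Assumption~\ref{assum_region} guarantees positive volume. Combined with $h_{\bX_n,\Omega}\leq Cq_n$, this gives $q_n\gtrsim n^{-1/d}$, which is the direction we need. Hence $N_j\lesssim 1+n\phi_n^d$, and the total count is $O(n^2\phi_n^d+n)$.

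\textbf{Step 4: plug in the choice of $\phi_n$.} With $\phi_n\asymp n^{-\frac{m-m_0}{2mm_0}}$ we get $n^2\phi_n^d\asymp n^{2-\frac{(m-m_0)d}{2mm_0}}$. This dominates $n$ because $\frac{(m-m_0)d}{2mm_0}\leq\frac{d}{2m_0}<1$, using $m_0>d/2$. The cost of a sparse matrix-vector product with a matrix stored in compressed sparse form is one multiply-add per stored nonzero plus $O(n)$ overhead, which gives the same order.

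The only mildly delicate point is the lower bound $q_n\gtrsim n^{-1/d}$ in Step 3, which needs the covering argument rather than just the separation. Everything else is routine. The value of $\rho_n$ plays no role beyond adding the diagonal.
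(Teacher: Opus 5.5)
Your proposal is correct and takes the same route as the paper. The paper gets the $O(n^2\phi_n^d+n)$ count by noting that, under quasi-uniformity, each row has $O(n\phi_n^d+1)$ entries within distance $\phi_n$; this is the same disjoint-ball volume comparison used in the proof of Lemma~\ref{LEM:BOFKMMAX}, and the paper then substitutes $\phi_n\asymp n^{-\frac{m-m_0}{2mm_0}}$. Your covering argument for $q_n\gtrsim n^{-1/d}$ (which needs $\Omega$ to have positive volume) and your check that $n^2\phi_n^d$ dominates $n$ because $m_0>d/2$ make explicit details the paper leaves implicit.
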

	
	Proposition \ref{prop_cgd_gp} highlights the sparse-linear-algebra induced by $\phi_n$. The parameter $m_0$, which is the smoothness of the underlying true process, affects the admissible decay of $\phi_n$. A larger $m_0$ implies a slower decay of $\phi_n$, resulting in a less sparse matrix $\Rb_\Phi + \rho_n \Ib_n$ but a faster statistical convergence rate. For fixed true smoothness \(m_0\), a smoother imposed Wendland kernel, corresponding to larger \(m\), permits a faster admissible decay of \(\phi_n\) and hence greater sparsity. As $m$ goes to infinity, the corresponding matrix in the estimator has $O(n^{2-\frac{d}{2m_0}})$ nonzero elements. For fixed imposed smoothness \(m\), a rougher true process, corresponding to smaller \(m_0\), also permits more aggressive sparsification.
	
	\begin{remark}
		Besides the choice of compactly supported covariance, additional solver-time gains can be obtained through preconditioning, sparse Cholesky factorization, Nystr\"om-type approximations, or multiresolution methods. These are largely orthogonal linear-algebra accelerations and can in principle be combined with the present scale-adjusted construction. Covariance tapering is more directly related statistically, because it also uses compact support; the key difference is that tapering sparsifies a parent covariance by a Hadamard product, whereas our analysis studies direct prediction with a shrinking-range compactly supported working covariance.
	\end{remark}
	
	\section{Kernel Ridge Regression with Compactly Supported Correlation Function}\label{sec_KRR}
	
	In this section, we extend the analysis from GP to KRR, focusing on the effect of adjusting the scale parameter to achieve optimal convergence rates while reducing sparse matrix-vector cost. KRR operates in the framework where the underlying truth is a deterministic function residing in an RKHS; a brief introduction to RKHS is provided in Section \ref{app:introtoSoboRKHS} of the supplementary materials. Suppose we have observed a set of input-output pairs $\{(\bx_i, y_i)\}_{i=1}^n \subset \cX \times \mathbb{R}$ following the relationship $y_i = f(\bx_i)+\varepsilon_i$. The KRR recovers the underlying true function $f$ via
	\begin{align}\label{eq_KRRpredict_eq}
		\hat f_n = \arg\min_{f \in \cN_k} \frac{1}{n} \sum_{i=1}^n (f(\bx_i) - y_i)^2 + \lambda_n \|f\|_{\cN_k}^2 ,
	\end{align}
	where $\lambda_n > 0$ is a regularization parameter, $\cN_k$ is the RKHS generated by the kernel function $k(\cdot,\cdot)$, and $\|\cdot\|_{\cN_k}$ is the corresponding RKHS norm. By the representer theorem, $\hat f_n$ has the form
	\begin{align}\label{eq_KRRpredict}
		\hat f_n = \rb_k(\bx)^\top (\Rb_k+n\lambda_n \Ib_n)^{-1} \by,
	\end{align}
	where $\rb_k(\bx)=(k(\bx, \bx_1),\ldots,k(\bx,\bx_n))^\top, \Rb_k=(k(\bx_j,\bx_k))_{j k}$, and $\by = (y_1,...,y_n)^\top$. Therefore, in order to obtain the estimator $\hat f_n$, a linear system must be solved. Following the approach in Section \ref{sec_GP}, we utilize the generalized Wendland kernel defined in \eqref{eq_GWcorr} with scale parameter $\phi_n$ to sparsify the matrix. Specifically, we take the kernel function $k(\bx,\bx')=\Phi(\bx-\bx')$, so \eqref{eq_KRRpredict} has the same algebraic form as \eqref{eq:GPpredic} with $n\lambda_n$ in place of $\rho_n$. However, the RKHS and GP settings lead to distinct convergence-rate analyses. We need the following assumption for the main results in this section.
	
	\begin{assumption}\label{assum_finRKHSPsi}
		The underlying function $f\in \cN_{\Psi}$, where $\cN_{\Psi}$ is the RKHS generated by the kernel function $\Psi$ as in \eqref{materngai}.
	\end{assumption}
	
	The following theorem states convergence rates of prediction error of $\hat f_n$ under $L_2$ metric.
	
	\begin{theorem}\label{thm_KRRratesX}
		Suppose Assumptions \ref{assum_region}, \ref{assum_noise}, \ref{assum_quasiuniform}, and \ref{assum_finRKHSPsi} hold, and suppose \(m\ge m_0>d/2\). If $\lambda_n\phi_n^{2m-d} \asymp n^{-\frac{2m}{2m_0+d}}$ with $\lambda_n \lesssim \phi_n^{d + \frac{2m m_0}{m - m_0}}$ and $\phi_n \gtrsim n^{-\frac{2m_0}{d(2m_0+d)}}$, then
		\[
		\|\hat f_n-f\|_{L^2(\Omega)}=O_P\left(n^{-m_0/(2m_0+d)}\right).
		\]
	\end{theorem}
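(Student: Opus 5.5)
The plan follows the standard bias--variance analysis of kernel ridge regression in the scaled native space $\cN_{\Phi}$ of the Wendland kernel with support radius $\phi_n$, with every constant tracked explicitly in $\phi_n$. Write $\hat f_n-f=(\hat f_n-f_\lambda)+(f_\lambda-f)$, where $f_\lambda$ is the noiseless KRR solution (data $f(\bx_i)$).

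\emph{Step 1: norm equivalence.} Use the Fourier bounds for the scaled generalized Wendland kernel (Lemma~\ref{lem:wendland_fourier_scaled} in the supplement):
\[
\mathcal F(\Phi)(\bomega)\asymp \phi_n^{d}\,(1+\phi_n\|\bomega\|)^{-2m}.
\]
From this, for $g\in H^m$,
\[
\|g\|_{\cN_\Phi}^2\asymp \phi_n^{-d}\int |\hat g|^2(1+\phi_n\|\bomega\|)^{2m}\,{\rm d}\bomega .
\]
Hence
\[
\|g\|_{\cN_\Phi}^2\lesssim \phi_n^{-d}\|g\|_{L_2}^2+\phi_n^{2m-d}|g|_{H^m}^2 .
\]
Since $\Omega$ has Lipschitz boundary (Assumption~\ref{assum_region}), a bounded extension operator lets us work on $\RR^d$.

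\emph{Step 2: bias.} Because $f\in\cN_\Psi=H^{m_0}$ while $\cN_\Phi=H^m$ with $m\ge m_0$, $f$ need not lie in $\cN_\Phi$. I would therefore introduce a band-limited (mollified) approximant $f_t$ with cutoff frequency $t$. It satisfies $\|f-f_t\|_{L_2}\lesssim t^{-m_0}$, and by Step 1,
\[
\|f_t\|_{\cN_\Phi}^2\lesssim \phi_n^{-d}+\phi_n^{2m-d}t^{2(m-m_0)} .
\]
Comparing the KRR objective at $f_\lambda$ with its value at $f_t$ gives
\[
\|f_\lambda-f\|_n^2+\lambda_n\|f_\lambda\|_{\cN_\Phi}^2\lesssim t^{-2m_0}+\lambda_n\big(\phi_n^{-d}+\phi_n^{2m-d}t^{2(m-m_0)}\big).
\]
Take $t=n^{1/(2m_0+d)}$. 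Then the condition $\lambda_n\phi_n^{2m-d}\asymp n^{-2m/(2m_0+d)}$ makes the last term equal to $n^{-2m_0/(2m_0+d)}$. The condition $\lambda_n\lesssim\phi_n^{d+2mm_0/(m-m_0)}$ is intended to make $\lambda_n\phi_n^{-d}$ no larger than this rate. To pass from the empirical norm $\|\cdot\|_n$ to $L_2$, I would use a sampling inequality on quasi-uniform designs (Assumption~\ref{assum_quasiuniform}) in scaled form:
\[
\|g\|_{L_2}\lesssim \|g\|_n+h_n^{m}\,\phi_n^{-m+d/2}\|g\|_{\cN_\Phi}.
\]
This is valid when $h_n\lesssim\phi_n$, which is exactly where the lower bound $\phi_n\gtrsim n^{-2m_0/(d(2m_0+d))}$ is used, together with $h_n\asymp n^{-1/d}$.

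\emph{Step 3: variance.} $\hat f_n-f_\lambda$ is the KRR estimator applied to pure noise. Bound it by sub-Gaussian empirical-process arguments (Assumption~\ref{assum_noise}), using an entropy bound for the unit ball of $\cN_\Phi$ in sup norm. The scaled Sobolev structure gives
\[
\log N(\epsilon)\lesssim \phi_n^{-d}\,\epsilon^{-d/m}
\]
up to the appropriate normalization. A modulus-of-continuity argument, as in van de Geer, then gives the variance bound
\[
\|\hat f_n-f_\lambda\|_n^2\lesssim_P n^{-1}\,(\text{effective dimension}),\qquad \text{effective dimension}\asymp \phi_n^{-d}\,(\lambda_n\phi_n^{-d})^{-d/(2m)}\cdot\phi_n^{d}.
\]
Equivalently, this is the eigenvalue count $\#\{j:\ \phi_n^{d}(\phi_n j^{1/d})^{-2m}\gtrsim\lambda_n\}$. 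I then check that this is $\lesssim n^{d/(2m_0+d)}$ under the stated relation. The same sampling inequality transfers the bound to $L_2$.

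\emph{Main obstacle.} The hard part is keeping all constants explicit in $\phi_n$: the norm equivalence, the sampling inequality and the entropy bound each carry powers of $\phi_n$. I expect the first place the argument could fail is the interaction between the approximation term $\lambda_n\phi_n^{-d}$ and the sampling-inequality term $h_n^{2m}\phi_n^{-2m+d}\|f_\lambda\|_{\cN_\Phi}^2$. I would first try to show that the stated constraints on $\lambda_n$ and $\phi_n$ are precisely what make both of these at most $n^{-2m_0/(2m_0+d)}$, reusing the machinery behind Theorem~\ref{thm_GPconvergence_rate} with $\rho_n$ replaced by $n\lambda_n$.
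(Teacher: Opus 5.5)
Your overall route is sound, but it is organized differently from the paper's. The paper never splits $\hat f_n-f$ by linearity. Instead it introduces the population-level regularized approximant $f_n^*=\arg\min_g\|f-g\|_{L_2(\Omega)}^2+\lambda_n\|g\|_{\mathcal{N}_\Phi(\Omega)}^2$ and bounds it by a Fourier-multiplier computation split at $\|\omega\|_2=\phi_n^{-1}w_0$ (Lemma~\ref{lem:approx1NN}). It then compares $\hat f_n$ with $f_n^*$ through the basic inequality. The cross term $\langle\varepsilon,\hat f_n-f_n^*\rangle_n$ is controlled by van de Geer's modulus (Lemma~\ref{lemmaefsmall}) together with $\|g\|_{H^m}\lesssim\phi_n^{-m+d/2}\|g\|_{\mathcal{N}_\Phi}$. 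This forces a three-case analysis and yields the general bound of Theorem~\ref{thm_KRRrates} in terms of $T_1,T_2,T_3$, valid for arbitrary $(\phi_n,\lambda_n)$, and the stated rate comes from balancing that bound.

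Your exact split into noiseless KRR plus KRR applied to pure noise removes the cross term, and with it $T_2$ and the case analysis. Your band-limited comparison $f_t$ reproduces the approximation bound $\lambda_n\phi_n^{-d}+(\lambda_n\phi_n^{2m-d})^{m_0/m}$ of Lemma~\ref{lem:approx1NN}; under the stated relation your $t=n^{1/(2m_0+d)}$ is exactly the optimizer. Your reading of the condition on $\lambda_n$ is also right: combined with the product relation, it is equivalent to $\lambda_n\phi_n^{-d}\lesssim n^{-2m_0/(2m_0+d)}$. The basic inequality for the noise-only problem, with the same modulus, gives precisely the paper's Case~3 quantity $T_3^2=n^{-1}(\lambda_n\phi_n^{2m-d})^{-d/(2m)}\asymp n^{-2m_0/(2m_0+d)}$. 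The paper's route buys a bound that does not use linearity of the estimator and covers all parameter choices. Yours is shorter and makes the $\phi_n$-bookkeeping of bias and variance transparent.

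Some slips to fix:

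\emph{Effective dimension.} Your displayed formula carries a spurious factor. Without the trailing $\phi_n^{d}$, the expression $\phi_n^{-d}(\lambda_n\phi_n^{-d})^{-d/(2m)}$ equals $(\lambda_n\phi_n^{2m-d})^{-d/(2m)}$, which is your (correct) eigenvalue count.

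\emph{Entropy normalization.} The entropy $\phi_n^{-d}\epsilon^{-d/m}$ is correct for the ball $\{\|g\|_{L_2}^2+\phi_n^{2m}|g|_{H^m}^2\le1\}$, i.e.\ radius $\asymp\phi_n^{-d/2}$ in $\mathcal{N}_\Phi$, with effective penalty $\lambda_n\phi_n^{-d}$. For the unit ball of $\mathcal{N}_\Phi$ the coefficient is $\phi_n^{-d(2m-d)/(2m)}$. Mixing the two normalizations inflates the variance by $\phi_n^{-d^2/(2m)}$, which would lose the rate when $\phi_n\to0$.

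\emph{Bias step.} The bound $\|f-f_t\|_n\lesssim t^{-m_0}$ needs the reverse sampling inequality $\|g\|_n\lesssim\|g\|_{L_2}+n^{-m_0/d}\|g\|_{H^{m_0}}$ (Lemma~28 of \citet{wang2022gaussian}, as the paper uses). Also, since $f\notin\mathcal{N}_\Phi$ in general, transferring $f_\lambda-f$ to $L_2$ must go through $f_\lambda-f_t$ and $f_t-f$.

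\emph{Where the lower bound on $\phi_n$ enters.} The sampling inequality $\|g\|_{L_2}\lesssim\|g\|_n+h_n^m\|g\|_{H^m}$ does not require $h_n\lesssim\phi_n$. What must be checked is that $h_n^{2m}\phi_n^{-2m+d}\|\cdot\|_{\mathcal{N}_\Phi}^2$ is negligible. Since $\|f_\lambda\|_{\mathcal{N}_\Phi}^2$, $\|f_t\|_{\mathcal{N}_\Phi}^2$ and $\|\hat f_n-f_\lambda\|_{\mathcal{N}_\Phi}^2$ are all $\lesssim\lambda_n^{-1}n^{-2m_0/(2m_0+d)}$, this follows from the stated conditions without invoking $\phi_n\gtrsim n^{-2m_0/(d(2m_0+d))}$. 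The paper uses that lower bound instead for the $n^{-m_0/d}\phi_n^{-m_0}$ term in $T_1$.
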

	
	Theorem \ref{thm_KRRratesX} shows that the optimal $L_2$ convergence rate $n^{-m_0/(2m_0+d)}$ remains attainable even when the imposed kernel is compactly supported and its support radius $\phi_n$ is allowed to shrink with $n$. The condition $\lambda_n \phi_n^{2m-d} \asymp n^{-2m/(2m_0+d)}$ indicates that $\lambda_n$ and $\phi_n$ must be chosen jointly: the product $\lambda_n \phi_n^{2m-d}$ plays the role of an effective smoothing level, whereas the additional constraints on $\lambda_n$ and $\phi_n$ prevent the support radius from shrinking so quickly that the statistical rate degrades. Thus, the theorem identifies a regime in which statistical optimality is preserved while compact support can still be exploited for computation.
	
	
	The corollaries derived from Theorem \ref{thm_KRRratesX} further illustrate the practical implications of these parameter choices. For example, if we set $\phi_n \asymp 1$, we recover established convergence rates from \cite{wang2022gaussian}, demonstrating that our framework generalizes existing results while introducing flexibility through scale parameter adjustments, as shown in the following corollary.
	
	\begin{corollary}
		Suppose $\phi_n\asymp 1$ and $\lambda_n\asymp n^{-\frac{2m}{2m_0+d}}$. Under conditions of Theorem \ref{thm_KRRratesX}, 
		\begin{align}\label{eq_coro_krrrate}
			\|f-\hat f_n\|_{L_2} = O_p\left(n^{-\frac{m_0}{2m_0+d}}\right).
		\end{align}
	\end{corollary}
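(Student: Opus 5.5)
This corollary should follow by plugging into Theorem~\ref{thm_KRRratesX}, so the plan is to check each of its three hypotheses when $\phi_n\asymp 1$ and $\lambda_n\asymp n^{-\frac{2m}{2m_0+d}}$. Because the corollary is stated under the conditions of Theorem~\ref{thm_KRRratesX}, Assumptions~\ref{assum_region}, \ref{assum_noise}, \ref{assum_quasiuniform} and \ref{assum_finRKHSPsi}, together with $m\ge m_0>d/2$, are already in force. No new analysis is required.

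\emph{First condition.} Since $\phi_n\asymp 1$, we have $\phi_n^{2m-d}\asymp 1$, because $2m-d>0$ and the exponent is fixed. Therefore
\[
\lambda_n\phi_n^{2m-d}\asymp \lambda_n\asymp n^{-\frac{2m}{2m_0+d}},
\]
which is the balancing relation in Theorem~\ref{thm_KRRratesX}.

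\emph{Second condition.} The required bound is $\lambda_n\lesssim \phi_n^{d+\frac{2mm_0}{m-m_0}}$. When $m>m_0$, the right-hand side is $\asymp 1$, since $\phi_n$ is bounded above and below and the exponent is a fixed finite number. The left-hand side $n^{-\frac{2m}{2m_0+d}}$ tends to zero, so the bound holds for all large $n$. When $m=m_0$, the exponent $\frac{2mm_0}{m-m_0}$ is infinite and the condition has to be read as a limit. As $m\downarrow m_0$ the condition becomes no restriction provided $\phi_n\ge 1$. If instead $\phi_n<1$, it becomes the requirement that $\lambda_n$ be essentially zero. I would deal with this by normalizing the constant so that $\phi_n\ge 1$ when $m=m_0$; the scale-invariance of the Wendland family permits this. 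Alternatively, one can note directly from the proof of Theorem~\ref{thm_KRRratesX} that this constraint is only needed to control the terms arising from $\phi_n\to 0$, which are bounded when $\phi_n\asymp 1$.

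\emph{Third condition.} The requirement $\phi_n\gtrsim n^{-\frac{2m_0}{d(2m_0+d)}}$ holds trivially, because the right-hand side tends to zero while $\phi_n$ is bounded below.

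With all three conditions verified, Theorem~\ref{thm_KRRratesX} gives \eqref{eq_coro_krrrate} directly. The only delicate point is the edge case $m=m_0$ in the second condition; the other steps are immediate substitutions.
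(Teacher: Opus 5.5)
Your proposal is correct and follows the paper's route: the corollary is obtained by substituting $\phi_n\asymp 1$ and $\lambda_n\asymp n^{-\frac{2m}{2m_0+d}}$ into Theorem~\ref{thm_KRRratesX}, equivalently into the bound of Theorem~\ref{thm_KRRrates}. The edge case $m=m_0$ is simpler than you make it. The constraint $\lambda_n\lesssim\phi_n^{d+\frac{2mm_0}{m-m_0}}$ is a rewriting of $\lambda_n\phi_n^{-d}\lesssim\lambda_n^{\frac{m_0}{m}}\phi_n^{\frac{(2m-d)m_0}{m}}$, which at $m=m_0$ reduces to $\phi_n\gtrsim 1$ and holds automatically. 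Your renormalization to $\phi_n\ge 1$ is therefore unnecessary, and it sits awkwardly with the standing requirement $\phi\in(0,1]$ in Lemma~\ref{lem:wendland_fourier_scaled}.
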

	
	However, from the perspective of sparse matrix-vector cost, it is often advantageous to choose a smaller $\phi_n$. This choice leads to a sparse matrix and fewer nonzero entries. By appropriately adjusting $\phi_n$ and $\lambda_n$, similar convergence rates can be achieved while reducing the per-iteration sparse matrix-vector cost. For example, setting $\phi_n \asymp n^{-\frac{m - m_0}{m(2m_0 + d)}}$ and $\lambda_n \asymp n^{-\frac{dm + m_0(2m - d)}{m(2m_0 + d)}}$ achieves both optimal convergence and reduced sparse matrix-vector cost, as presented in the following corollary.

	\begin{corollary}\label{coro_KRR_phinsmall}
		Suppose $\lambda_n\phi_n^{2m-d} \asymp n^{-\frac{2m}{2m_0+d}}$ with $\lambda_n \leq \phi_n^{d + \frac{2m m_0}{m - m_0}}$. Under conditions of Theorem \ref{thm_KRRratesX}, \eqref{eq_coro_krrrate} holds.
		In particular, we can take $\phi_n \asymp n^{-\frac{m - m_0}{m(2m_0 + d)}}$ and $\lambda_n \asymp n^{-\frac{dm + m_0(2m - d)}{m(2m_0 + d)}}$.
	\end{corollary}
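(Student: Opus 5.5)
The plan is to reduce the corollary to Theorem~\ref{thm_KRRratesX}. Two of its three hypotheses are assumed directly in the corollary: $\lambda_n\phi_n^{2m-d}\asymp n^{-\frac{2m}{2m_0+d}}$ and $\lambda_n\lesssim\phi_n^{d+\frac{2mm_0}{m-m_0}}$. So the only thing to prove is that they force the remaining hypothesis, $\phi_n\gtrsim n^{-\frac{2m_0}{d(2m_0+d)}}$. After that, \eqref{eq_coro_krrrate} follows immediately from Theorem~\ref{thm_KRRratesX}. I take $m>m_0$ throughout, since otherwise the exponent $\frac{2mm_0}{m-m_0}$ is undefined; in the boundary case $m=m_0$ the particular choice gives $\phi_n\asymp1$, and the previous corollary applies instead.

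\textbf{Step 1 (lower bound on $\phi_n$).} Multiply the constraint $\lambda_n\le\phi_n^{d+\frac{2mm_0}{m-m_0}}$ by $\phi_n^{2m-d}$ and use the balance condition. This gives
\[
n^{-\frac{2m}{2m_0+d}}\lesssim \phi_n^{2m+\frac{2mm_0}{m-m_0}}=\phi_n^{\frac{2m^2}{m-m_0}},
\]
and hence $\phi_n\gtrsim n^{-\frac{m-m_0}{m(2m_0+d)}}$.

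\textbf{Step 2 (compare exponents).} It then suffices to show $\frac{m-m_0}{m}\le\frac{2m_0}{d}$. This holds because the left side is less than $1$, while $m_0>d/2$ makes the right side greater than $1$. Therefore $\phi_n\gtrsim n^{-\frac{m-m_0}{m(2m_0+d)}}\ge n^{-\frac{2m_0}{d(2m_0+d)}}$ for $n\ge1$, which is the missing hypothesis of Theorem~\ref{thm_KRRratesX}. If the underlying results also need $\phi_n\le 1$, as in Theorem~\ref{thm_GPconvergence_rate}, the proposed $\phi_n$ decays, so this holds for large $n$.

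\textbf{Step 3 (the particular choice).} Set $\phi_n=c_1n^{-\frac{m-m_0}{m(2m_0+d)}}$ and $\lambda_n=c_2n^{-\frac{dm+m_0(2m-d)}{m(2m_0+d)}}$. The exponent of $\lambda_n\phi_n^{2m-d}$ has numerator
\[
(m-m_0)(2m-d)+dm+m_0(2m-d)=2m^2,
\]
so $\lambda_n\phi_n^{2m-d}\asymp n^{-\frac{2m}{2m_0+d}}$, as required. For the constraint, $\phi_n^{d+\frac{2mm_0}{m-m_0}}$ has exponent $-\frac{(m-m_0)d+2mm_0}{m(2m_0+d)}$. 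Its numerator is $dm-dm_0+2mm_0=dm+m_0(2m-d)$, so this is exactly the rate of $\lambda_n$. Choosing $c_2\le c_1^{\,d+\frac{2mm_0}{m-m_0}}$ then gives the inequality with constant one.

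I do not expect a real obstacle. The only delicate point is that the constraint holds with equality of rates in the particular choice, so the constants have to be fixed as in Step 3 rather than left arbitrary.
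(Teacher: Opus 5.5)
Your proposal is correct and follows the paper's route. The paper gives no separate proof and treats the corollary as a direct application of Theorem~\ref{thm_KRRratesX}. Your Steps 1--2 show that the two stated conditions already force $\phi_n\gtrsim n^{-\frac{m-m_0}{m(2m_0+d)}}\gtrsim n^{-\frac{2m_0}{d(2m_0+d)}}$, so the theorem's remaining hypothesis is redundant; together with the exponent checks in Step 3, including fixing $c_2\le c_1^{d+\frac{2mm_0}{m-m_0}}$, this supplies exactly the verification the paper leaves unstated.
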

	
	Similar to our previous analysis in Section \ref{sec_GP}, the choice of parameters $\phi_n$ and $\lambda_n$ plays an important role in balancing statistical accuracy with sparse matrix-vector cost. By carefully choosing the scale parameter $\phi_n$ and the regularization parameter $\lambda_n$, we can achieve optimal convergence while reducing the number of nonzero entries in the kernel matrix. Corollary \ref{coro_KRR_phinsmall} demonstrates that by setting $\phi_n \asymp n^{-\frac{m - m_0}{m(2m_0 + d)}}$ and $\lambda_n \asymp n^{-\frac{dm + m_0(2m - d)}{m(2m_0 + d)}}$, we achieve the optimal convergence rate $O_p\left(n^{-\frac{m_0}{2m_0+d}}\right)$ for the estimator $\hat f_n$ under the specified conditions. The sparsification induced by a smaller $\phi_n$ is particularly beneficial for large datasets, where matrix-vector multiplication and storage costs are limiting factors. The following proposition states the resulting sparsity and sparse matrix-vector multiplication cost.
	
	\begin{proposition}\label{prop_KRR_comp}
		Let $\phi_n \asymp n^{-\frac{m - m_0}{m(2m_0 + d)}}$ and $\lambda_n \asymp n^{-\frac{dm + m_0(2m - d)}{m(2m_0 + d)}}$. Under Assumption \ref{assum_quasiuniform}, $\Rb_\Phi+n\lambda_n\Ib_n$ has \(O\left(n^{2-\frac{(m - m_0)d}{m(2m_0 + d)}}\right)\) nonzero entries. Consequently, one sparse matrix-vector multiplication with $\Rb_\Phi+n\lambda_n\Ib_n$ costs \(O\left(n^{2-\frac{(m - m_0)d}{m(2m_0 + d)}}\right)\) operations.
	\end{proposition}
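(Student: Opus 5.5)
We count the nonzero entries of $\Rb_\Phi+n\lambda_n\Ib_n$ directly, using the compact support of $\Phi$ and the packing property of quasi-uniform designs. The diagonal contributes at most $n$ nonzero entries: each is $\Phi(\mathbf 0)+n\lambda_n$. An off-diagonal entry $(j,k)$ is nonzero only if $\|\bx_j-\bx_k\|_2<\phi_n$, because $\varphi(h)=0$ for $h\ge\phi_n$ by \eqref{eq_GWcorr}. So it suffices to bound, for each fixed $j$, the number of points of $\bX_n$ in the ball $B(\bx_j,\phi_n)$.

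\textbf{Packing step.} Under Assumption~\ref{assum_quasiuniform}, $q_n\asymp h_{\bX_n,\Omega}\asymp n^{-1/d}$. The balls $B(\bx_k,q_n)$ are pairwise disjoint by the definition of the separation radius. For every $\bx_k\in B(\bx_j,\phi_n)$, the ball $B(\bx_k,q_n)$ lies in $B(\bx_j,\phi_n+q_n)$. Comparing volumes,
\[
\#\{k:\|\bx_k-\bx_j\|_2<\phi_n\}\le \Big(\frac{\phi_n+q_n}{q_n}\Big)^d\lesssim 1+\phi_n^d q_n^{-d}\lesssim 1+n\phi_n^d .
\]
Summing over $j$ gives at most $C(n+n^2\phi_n^d)$ nonzero entries.

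\textbf{Inserting the rate.} With $\phi_n\asymp n^{-\frac{m-m_0}{m(2m_0+d)}}$ we get $n^2\phi_n^d\asymp n^{2-\frac{(m-m_0)d}{m(2m_0+d)}}$. The exponent $\frac{(m-m_0)d}{m(2m_0+d)}$ is less than $1$, since $(m-m_0)d<md<m(2m_0+d)$. Hence $2-\frac{(m-m_0)d}{m(2m_0+d)}>1$, and the $O(n)$ diagonal term is absorbed. This gives the stated count. The choice of $\lambda_n$ does not affect the count, because it only shifts the diagonal.

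\textbf{Matrix--vector cost.} In a sparse storage format (e.g., CSR), computing $(\Rb_\Phi+n\lambda_n\Ib_n)\ba$ takes one multiplication and one addition per stored nonzero, plus $O(n)$ overhead for the row pointers. The cost is therefore of the same order as the number of nonzeros.

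The only step with real content is the packing bound. The point to watch there is that the constant depends only on $d$ and the quasi-uniformity constant, so the estimate is uniform in $n$. This holds because the bound uses only $q_n\gtrsim n^{-1/d}$ and the volume of Euclidean balls, and needs no property of $\Omega$ beyond containing the points. The argument is the same one behind the GP count in Proposition~\ref{prop_cgd_gp}, with the exponent changed.
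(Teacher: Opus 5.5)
Your proof is correct and follows the same route as the paper. The paper bounds the count by $O(n^2\phi_n^d+n)$ using the volume-comparison (packing) argument from its proof of Lemma~\ref{LEM:BOFKMMAX}, and then substitutes the rate for $\phi_n$; your enlarged ball of radius $\phi_n+q_n$ handles $\phi_n<q_n$ and $\phi_n\ge q_n$ together. One small correction to your closing remark: your input $q_n\gtrsim n^{-1/d}$ comes from $q_n\gtrsim h_{\bX_n,\Omega}\gtrsim n^{-1/d}$, and that last step needs $\Omega$ to have positive volume, so the constant also depends on $|\Omega|$ — harmless, since $\Omega$ is fixed.
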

	
	
	Proposition \ref{prop_KRR_comp}, together with Corollary \ref{coro_KRR_phinsmall}, highlights similar statistical and sparse-linear-algebra phenomenon as in the GP setting. For fixed true smoothness \(m_0\), a larger imposed smoothness \(m\) permits a faster decay of \(\phi_n\) within this sufficient schedule, and in the limiting case \(m\to\infty\), the sparse matrix-vector cost approaches $O\left(n^{2-\frac{d}{2m_0+d}}\right)$ per iteration. On the other hand, a larger \(m_0\) leads to a slower admissible decay of \(\phi_n\), producing a less sparse matrix but a faster statistical convergence rate \(O_p(n^{-m_0/(2m_0+d)})\). Thus, smoother target functions are statistically easier to estimate, but preserving the optimal rate may require retaining more local dependence in the kernel matrix.
	
	
	Similar to the GP regression case, additional linear-algebra accelerations, such as preconditioning, sparse Cholesky factorization, Nystr\"om-type approximations, local approximations, or multiresolution methods, can be combined with the proposed scale-adjusted compactly supported kernel. These techniques are complementary to the shrinking-support construction studied here, and their costs are separate from the sparse matrix-vector bounds in Proposition \ref{prop_KRR_comp}.


	\section{Numerical  Experiments}\label{sec_numerical}
	
	Our numerical studies focus on the statistical and sparse-linear-algebra properties created by the shrinking support radius. We compare against Mat\'ern predictors as dense-kernel accuracy baselines and use the same iterative solver backend for Mat\'ern and Wendland systems when reporting computation time. These experiments are not intended to be an exhaustive benchmark against all scalable GP approximations. The reason is that, methods such as preconditioning, sparse Cholesky factorization, Nystr\"om approximation, localization, and multiresolution approximations are complementary linear-algebra strategies that could be combined with compact support. Covariance tapering is the closest statistical comparator; the distinction from our direct shrinking-range working covariance is discussed in the Introduction. However, even for the covariance tapering, one can still combine it with our proposed method, by multiplying another covariance function with compact support.
	
	The numerical experiments should be interpreted as finite-sample illustrations of the accuracy and sparsity properties, rather than as a literal verification of every asymptotic tuning condition. The theoretical results give sufficient joint scalings of the support radius and regularization parameters under fixed-domain asymptotics. In finite samples, however, the multiplicative constants in these scalings and the nugget or regularization level have a substantial effect and are typically selected or fixed according to modelling convention. We therefore fix the nugget or regularization parameter in each experiment for simplification and use the same value for the dense Matérn baseline and the compactly supported Wendland method for fair comparison. This choice isolates the effect of the compact support and its scale, while avoiding an additional layer of hyperparameter optimization. The support radius is still chosen according to the theory-guided power law \(\phi_n=C_\phi n^{-\alpha}\), with \(C_\phi\) selected by validation and then held fixed across reporting replications. Thus the experiments are intended to show that the shrinking-support construction can maintain competitive prediction accuracy while reducing sparse matrix-vector cost, rather than to claim finite-sample optimality of the full asymptotic tuning prescription.
	
	\subsection{Gaussian Process Regression}\label{subsec:gp_sim}
	
	This section illustrates the finite-sample behavior suggested by the theoretical findings in Section~\ref{sec_GPtheory}.
	Specifically, our numerical investigation shows that compactly supported Wendland kernels achieve prediction accuracy comparable to the Mat\'{e}rn-kernel baseline, and 
	the induced sparsity is reflected in the measured solver times, subject to the usual dependence on conditioning, stopping tolerance, and sparse-solver overhead.
	
	The detailed GP simulation setup, including the data-generating covariance, sample sizes, train/validation/test split, solver backend, and validation rule for selecting $C_\phi$, is reported in Supplementary Section~\ref{app:experimental_settings}.
	
	Figure~\ref{fig:gp_convergence} summarises the scheduled-bandwidth test mean squared error (MSE) trends for both $d=2$ and $d=3$. It can be seen that the Wendland predictor is comparable in accuracy to the Mat\'{e}rn baseline in all GP settings, with the largest reported 30-replication Wendland/Matern MSE ratio equal to $1.746$. Figure~\ref{fig:gp_convergence} and Table~\ref{tab:slopes} further show the same qualitative decreasing behaviour as the Mat\'{e}rn baseline.
	
	\begin{figure}[t!]
		\centering
		\setlength{\subfigcapskip}{-14pt}
		\subfigure[$d=2$]{
				\includegraphics[width=0.95\linewidth]{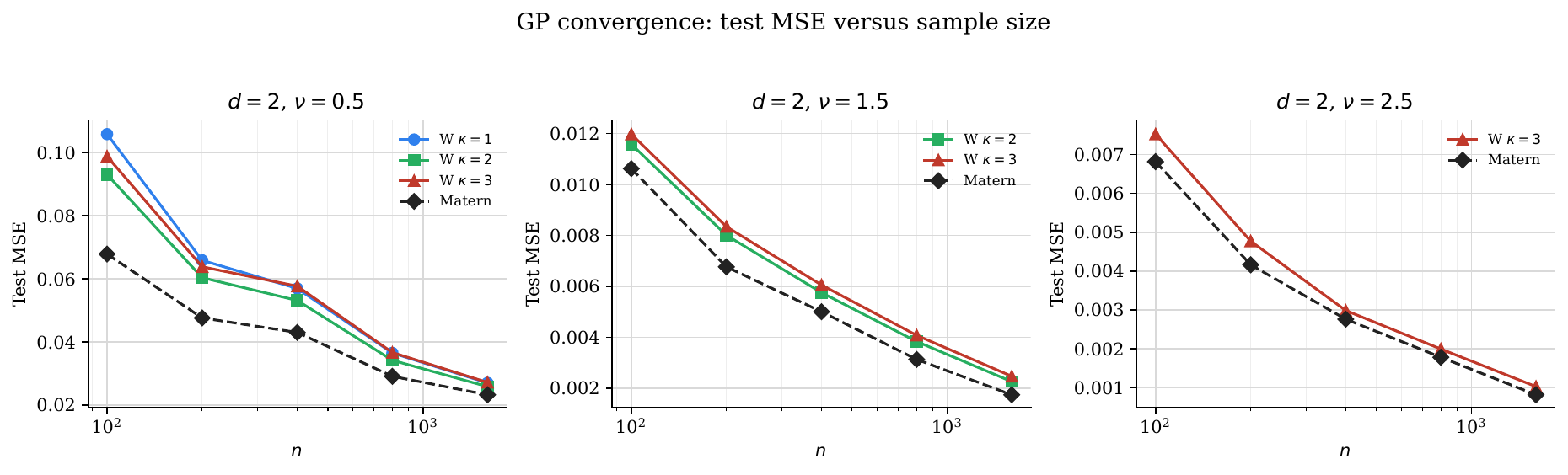}%
				
			}
			\subfigure[$d=3$]{
					\includegraphics[width=0.95\linewidth]{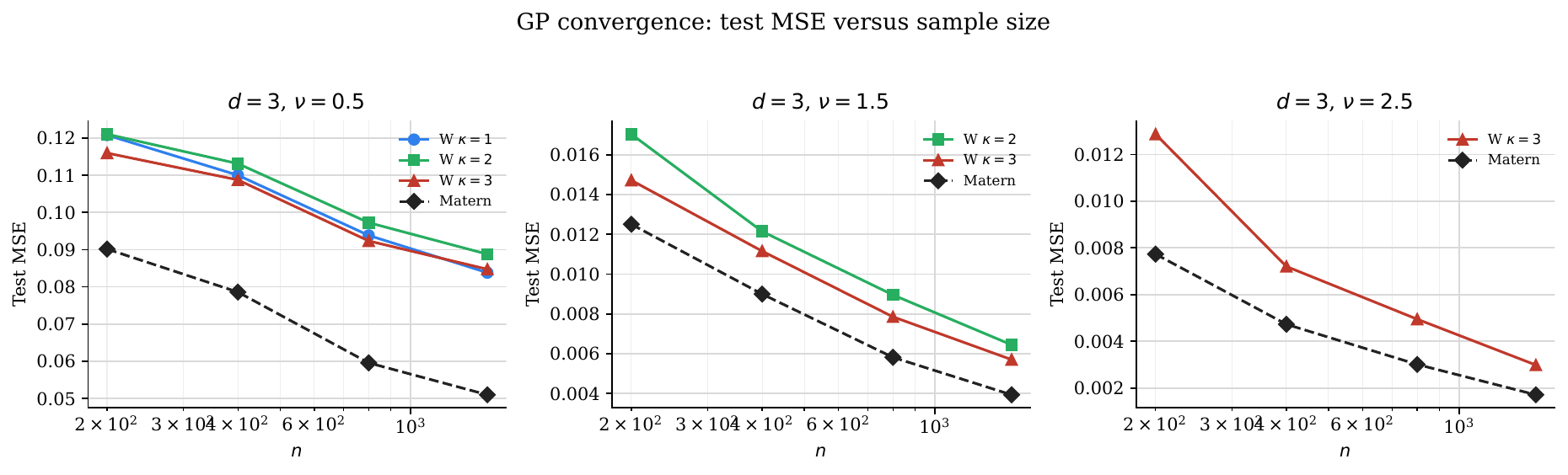}%
					
				}
				\caption{GP test MSE versus sample size $n$ under the scheduled scale parameter for (a) $d=2$ and (b) $d=3$. The Mat\'{e}rn smoothness $m_0-d/2$ (denoted $\nu$ in the panels) takes values $\{0.5,\,1.5,\,2.5\}$, with the corresponding Wendland smoothness values $m-(d+1)/2$ (denoted $\kappa$) in $\{1,\,2,\,3\}$. Each configuration is evaluated over 30 independent replications.}
			\label{fig:gp_convergence}
		\end{figure}
		
		Figure~\ref{fig:gp_computation_time} reports computation time versus $n$ for $d \in \{2,3\}$, using the same simulation settings as the full GP tables. In fact, speed gains depend on whether the resulting Wendland matrix is sufficiently sparse to offset the iterative-solver overhead. In our experiments, all conjugate gradient (CG) solves converged.  Wendland correlation function was faster in 51 of the 54 reported GP table rows, and the median speedup was about $2.27\times$.
		
		\begin{figure}[htbp]
			\centering
			\includegraphics[width=\linewidth]{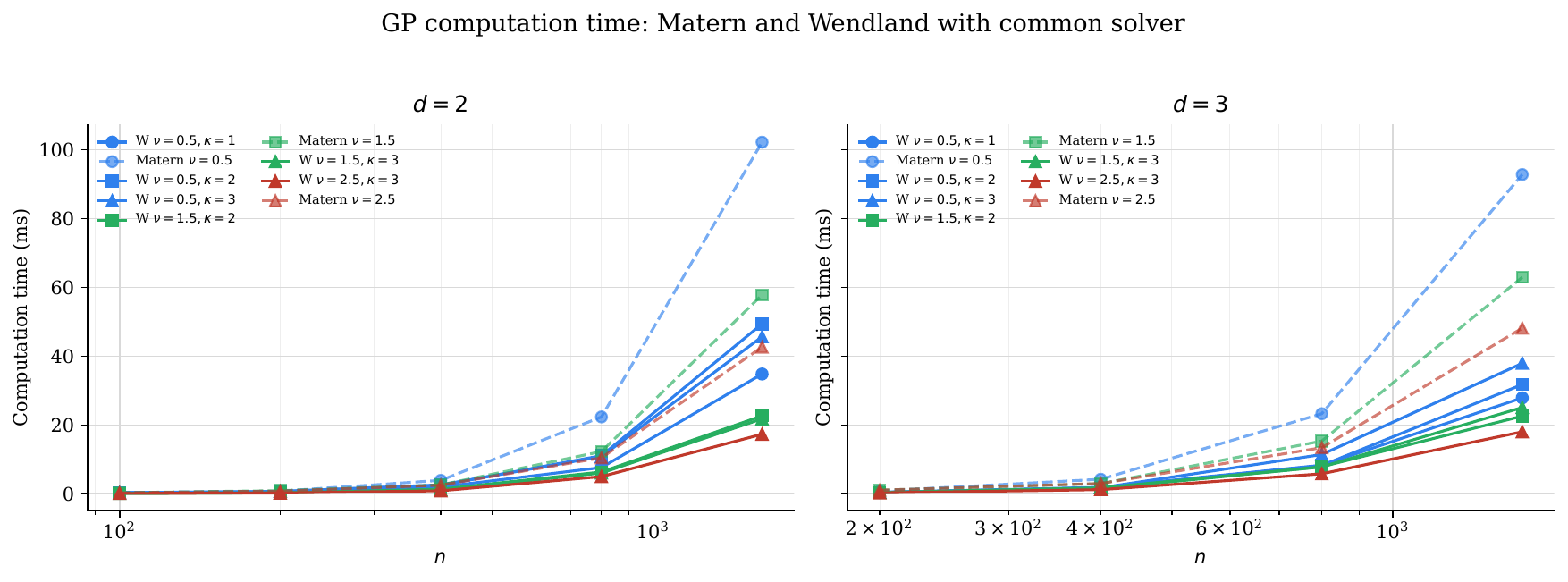}
			\caption{GP computation time under the scheduled scale parameter. Wendland GP is compared with Mat\'{e}rn GP for $d=2$ and $d=3$; both methods use the same CG solver backend.}
			\label{fig:gp_computation_time}
		\end{figure}
		
		Collectively, these experiments numerically confirm that the corrected Wendland implementation has the expected decreasing MSE trends. The local timing results also highlight that solver-time gains require both smaller selected bandwidths and an efficient sparse Cholesky/CG implementation; otherwise the constant factors of generic sparse solvers can dominate at the present sample sizes.
		
		\subsection{Deterministic Function Estimation}
		
		We next examine deterministic function estimation under the RKHS framework of Section~\ref{sec_KRR}. In this subsection, the underlying regression function is fixed rather than sampled from a Gaussian process. The detailed deterministic functions, noise model, sample sizes, data split, and validation rule for $C_\phi$ are reported in Supplementary Section~\ref{app:experimental_settings}.
		
		\begin{figure}[t!]
			\centering
			\setlength{\subfigcapskip}{-14pt}
		\subfigure[$d=2$]{%
					\includegraphics[width=0.98\linewidth]{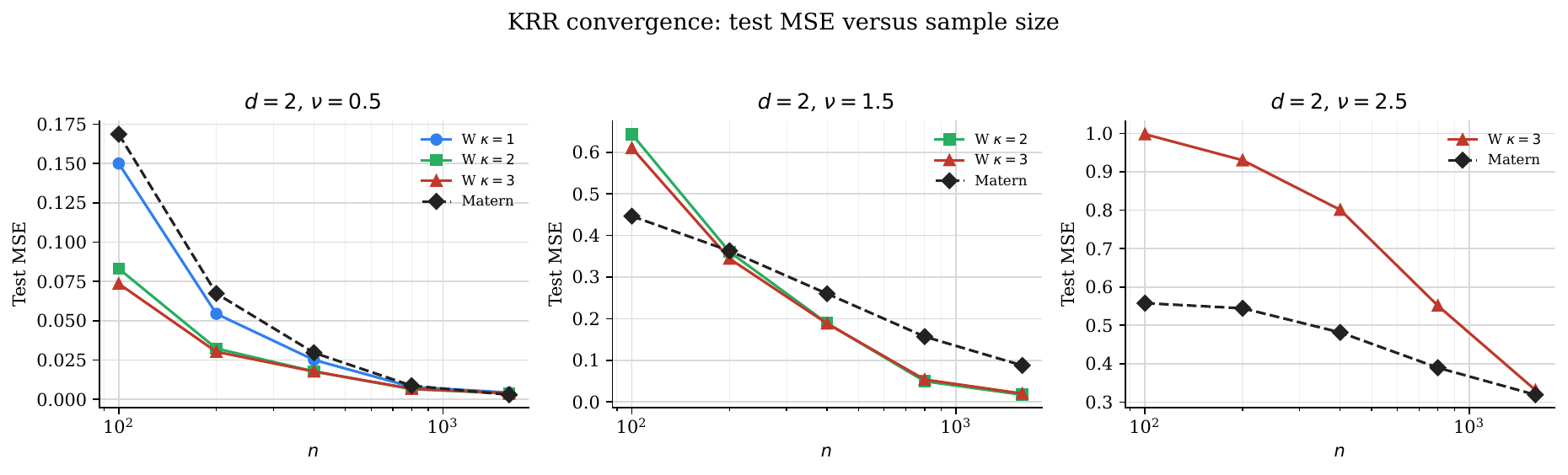}%
					
				}\\[0.75em]
					\subfigure[$d=3$]{%
						\includegraphics[width=0.98\linewidth]{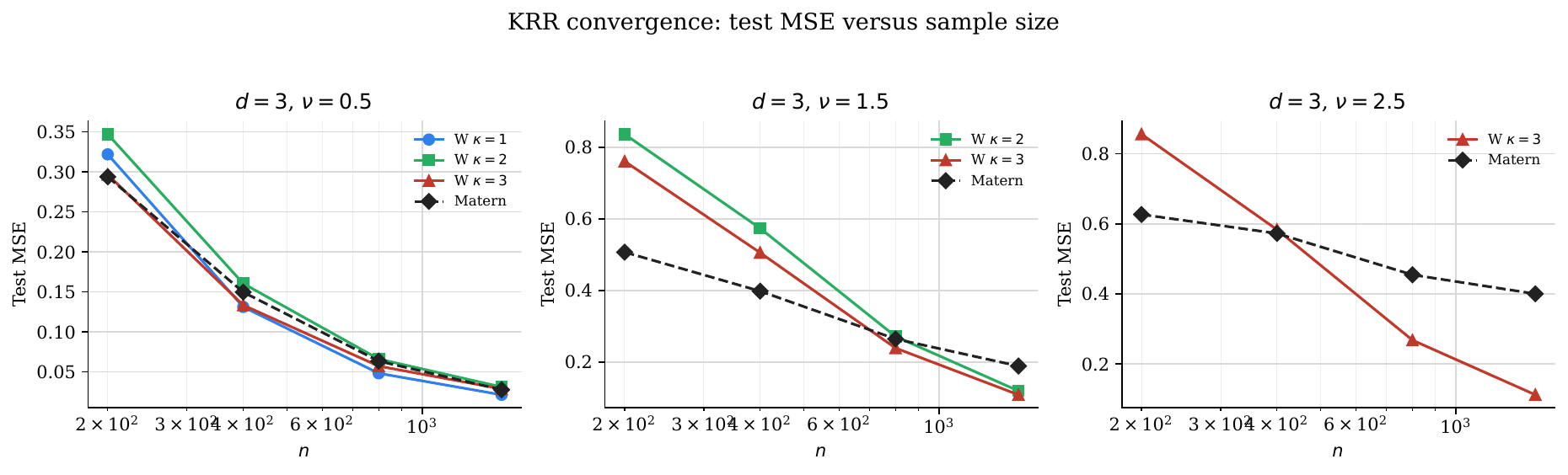}%
						
					}
					\caption{KRR test MSE versus sample size $n$ under the scheduled scale parameter for (a) $d=2$ and (b) $d=3$. The Mat\'{e}rn smoothness $m_0-d/2$ (denoted $\nu$ in the panels) takes values $\{0.5,\,1.5,\,2.5\}$, with the corresponding Wendland smoothness values $m-(d+1)/2$ (denoted $\kappa$) in $\{1,\,2,\,3\}$. Each configuration is evaluated over 30 independent replications.}
					\label{fig:krr_convergence}
				\end{figure}
				
				\begin{figure}[t!]
					\centering
					\includegraphics[width=\linewidth]{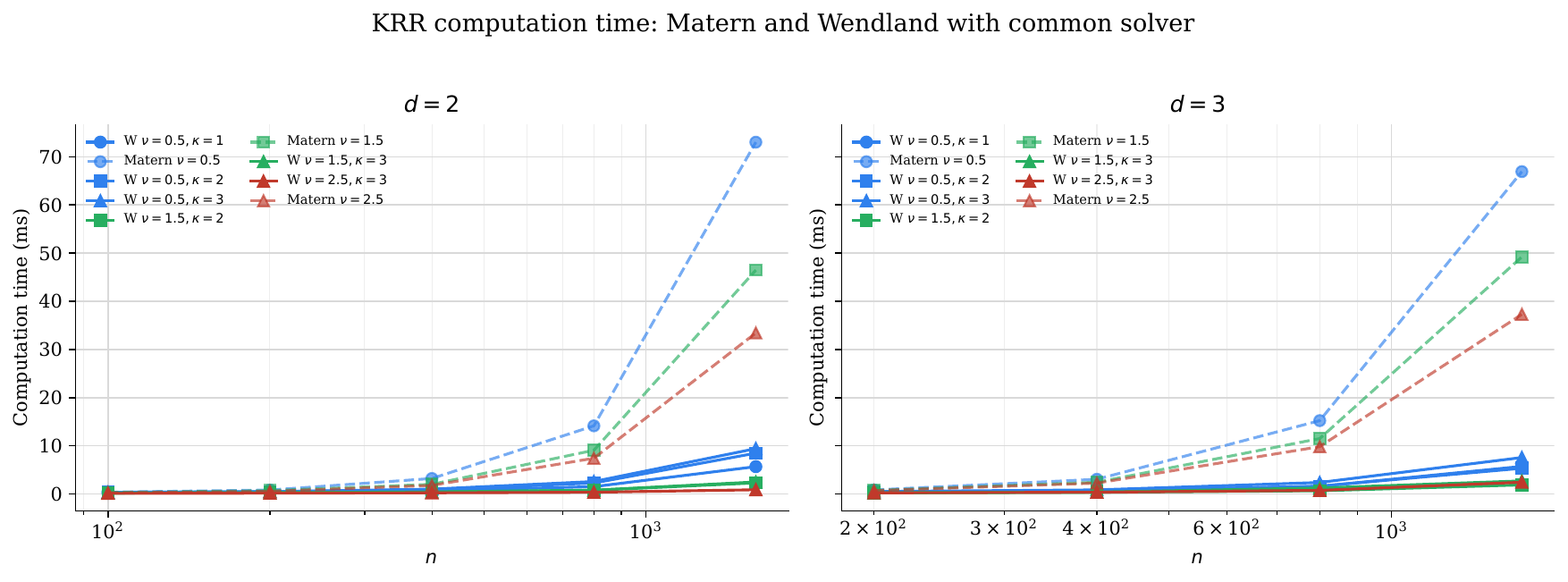}%
					\vspace{-1.25em}
					\caption{KRR computation time versus sample size $n$ under the scheduled scale parameter for $d=2$ (left) and $d=3$ (right). The Mat\'{e}rn smoothness $m_0-d/2$ (denoted $\nu$ in the panels) takes values $\{0.5,\,1.5,\,2.5\}$, with the corresponding Wendland smoothness values $m-(d+1)/2$ (denoted $\kappa$) in $\{1,\,2,\,3\}$. Each configuration is evaluated over 30 independent replications, and the Mat\'{e}rn and Wendland use the same conjugate-gradient solver backend.}
				\label{fig:krr_computation_time}
			\end{figure}

			Figure~\ref{fig:krr_convergence} shows decreasing deterministic test error as $n$ increases in both $d=2$ and $d=3$. The Wendland KRR estimator is competitive with the Mat\'{e}rn KRR baseline throughout the reported configurations; the largest 30-replication Wendland/Mat\'{e}rn MSE ratio is $1.790$. Several deterministic settings have ratios below one, which reflects the fact that the Mat\'{e}rn estimator is a comparison kernel with the same regularization level, not an oracle matched to the deterministic truth. As for the sparse-kernel solver-time, the Wendland matrices retain meaningful sparsity while maintaining satisfactory prediction accuracy.
			In the regenerated deterministic run, all CG solves converged, Wendland KRR is faster in all reported table rows, and the median speedup is about $5.89\times$. These deterministic experiments further confirm that compact support can reduce sparse matrix-vector cost without destroying prediction accuracy.


			\subsection{Additional Comparisons with Scalable Approximations}\label{subsec:additional_gp_baselines}
			
			We next compare the selected scale-adjusted Wendland predictor with several common scalable approximations. 
			The additional comparators are covariance tapering, local GP prediction, and the Nystr\"om approximation; the detailed design, tuning, timing conventions, and full tables are given in Supplementary Section~\ref{app:experimental_settings} and Supplementary Tables~\ref{tab:additional_gp_baselines_full}--\ref{tab:additional_det_baselines_full}. We also report the deterministic counterpart of the same scalable-baseline comparison. The corresponding truth, noise level, kernel choices, and tuning rules are collected with the other experimental settings in Supplementary Section~\ref{app:experimental_settings}. 
			
			
			Figure~\ref{fig:additional_baselines} compares the proposed Wendland predictor with several representative scalable alternatives in the selected \(d=2\) setting. For the regular Mat\'ern GP truth, the dense parent GP and the local GP predictor give the smallest test errors. The Wendland correlations and tapered-parent predictors have slightly larger MSEs, but remain close to the dense baseline as \(n\) increases. Thus, in this GP benchmark, the compact-support global methods introduce only a modest accuracy loss relative to the dense parent predictor. The computation-time panel shows the benefit of sparse global linear algebra. For example, at \(n=1600\), the Wendland predictor has solver time \(37.3\) ms, compared with \(128\) ms for the dense parent and \(55.4\) ms for the tapered-parent predictor. The faster Wendland solve is explained by the scheduled shrinking support radius, which gives about \(65.4\%\) exact zero entries, compared with about \(51.7\%\) for the fixed-radius tapered-parent matrix. Local GP is accurate but requires repeated local calculations at the prediction sites, giving mean computation time \(98.8\) ms under the timing convention used here. The Nystr\"om approximation is the fastest method in this benchmark, but its accuracy is less stable at smaller sample sizes; it becomes close to the compact-support methods only as \(n\) increases.
			
			Panels (c)--(d) in Figure~\ref{fig:additional_baselines} report the corresponding deterministic-function comparison. In this setting, the fixed-radius tapered-parent method gives the smallest test MSE, illustrating that a larger fixed taper radius can preserve more of the parent covariance structure and improve accuracy. At \(n=1600\), the tapered parent gives MSE \(0.00123\), compared with \(0.00259\) for the dense parent and \(0.00314\) for the direct Wendland predictor. However, this accuracy gain comes with a denser matrix and higher solver time: the tapered-parent method takes \(51.6\) ms, whereas the direct Wendland method takes only \(9.29\) ms because its scheduled support radius yields about \(88.0\%\) zero entries. Local GP has accuracy close to the dense parent, with MSE \(0.00273\) and computation time \(41.6\) ms. Nystr\"om is again the fastest method, but it pays a noticeable accuracy cost, especially at smaller \(n\).
			
			Overall, Figure~\ref{fig:additional_baselines} shows that the proposed Wendland method is not intended to dominate every scalable approximation in every finite-sample regime. Rather, it provides a simple global sparse-kernel construction with competitive prediction accuracy and substantially reduced global-system solver time. Covariance tapering can be more accurate when a relatively large fixed taper radius is used, but then the matrix is less sparse. Local GP and Nystr\"om approximations provide useful complementary baselines, but they rely on different approximation mechanisms and timing conventions. These results therefore support the main message of the paper: the shrinking-support Wendland construction offers an effective kernel-level route to sparse linear algebra while maintaining reasonably accurate prediction.


			\begin{figure}[t!]
				\centering
				\subfigure[Regular GP truth: mean test MSE.\label{fig:additional_gp_baselines_mse}]{%
					\includegraphics[width=0.48\linewidth]{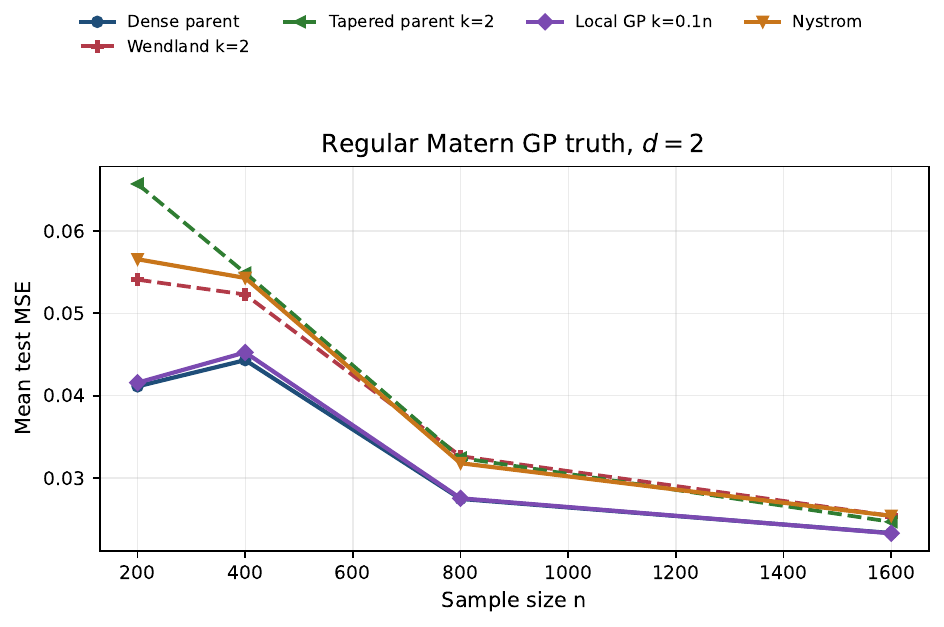}%
				}\hfill
				\subfigure[Regular GP truth: computation time.\label{fig:additional_gp_baselines_time}]{%
					\includegraphics[width=0.48\linewidth]{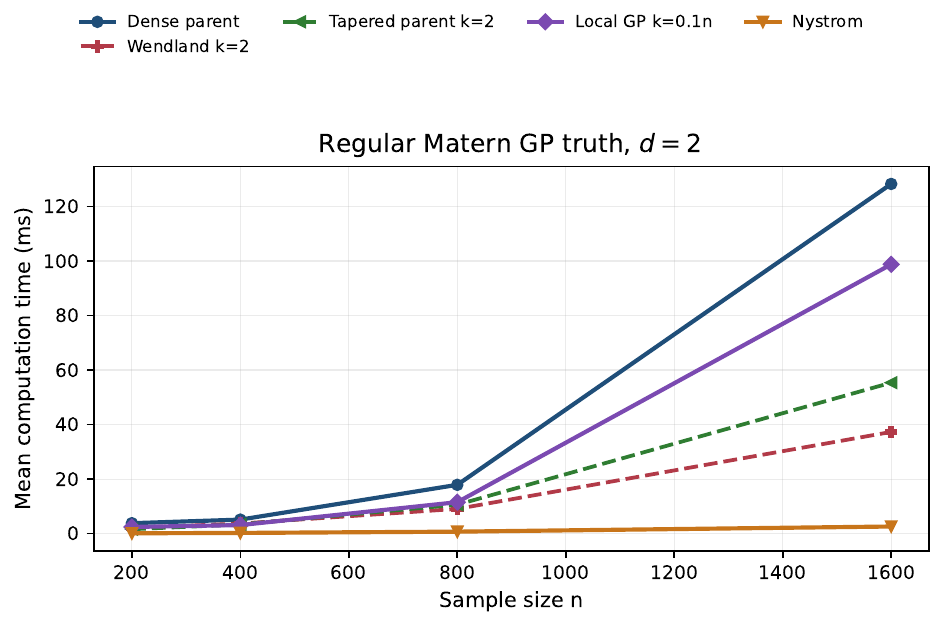}%
				}\\[0.75em]
				\subfigure[Deterministic truth: mean test MSE.\label{fig:additional_det_baselines_mse}]{%
					\includegraphics[width=0.48\linewidth]{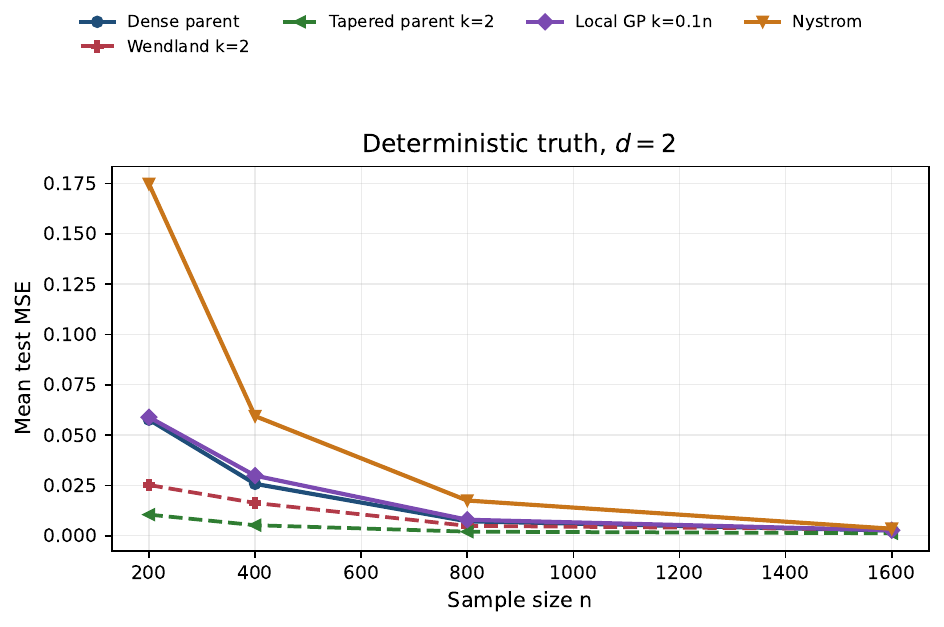}%
				}\hfill
				\subfigure[Deterministic truth: computation time.\label{fig:additional_det_baselines_time}]{%
					\includegraphics[width=0.48\linewidth]{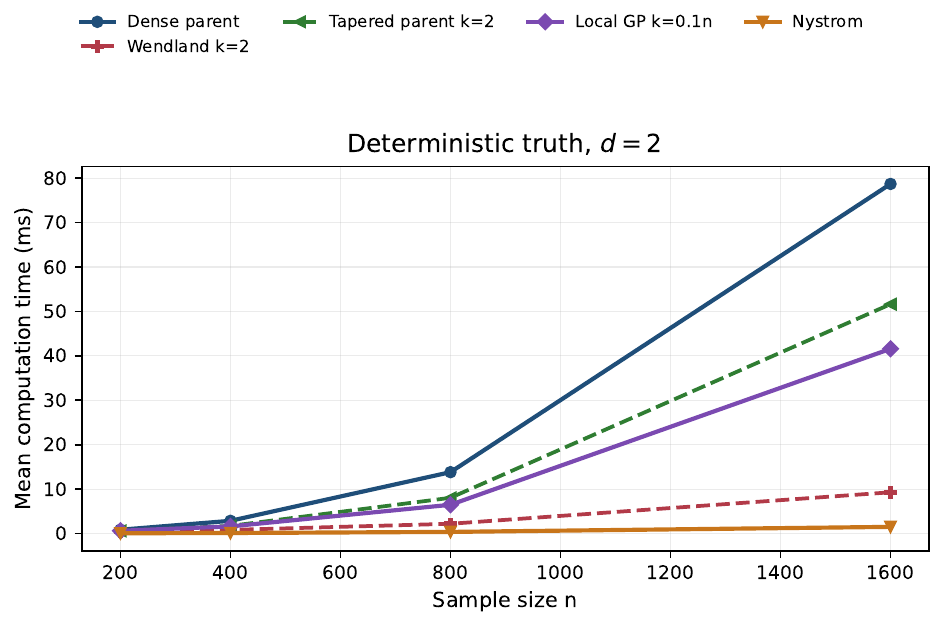}%
				}
				\caption{Additional scalable-baseline comparisons for $d=2$ as functions of sample size $n$. Panels (a) and (b) show mean test MSE and computation time, respectively, for the regular Mat\'{e}rn GP truth; panels (c) and (d) show the corresponding quantities for the deterministic truth. Curves compare the dense Mat\'{e}rn, the scheduled scale parameter for Wendland predictor with $m-(d+1)/2=2$, covariance tapering, local GP, and a Nystr\"om approximation. Results are averaged over five independent replications. Timing denotes linear-solver time for the three global systems and natural evaluation time for local GP and Nystr\"om.}
				\label{fig:additional_baselines}
			\end{figure}
			

			\section{Application to ERA5 Temperature Data}\label{sec_real}\label{sec:application}
			
			We validate the compactly supported Wendland approach on ERA5 two-meter temperature data. This real-data analysis is intended as an operational illustration of the accuracy--cost properties. The detailed data description, preprocessing, train/test construction, solver convention, and real-data tuning choices are reported in Supplementary Section~\ref{app:experimental_settings}. The dense Mat\'{e}rn baseline is reported through $n=5000$ only because repeated dense solves over all time points are computationally expensive; this truncation is a computational cap, not a statistical stopping rule.
			
			\begin{figure}[t!]
				\centering
				\includegraphics[width=0.8\linewidth]{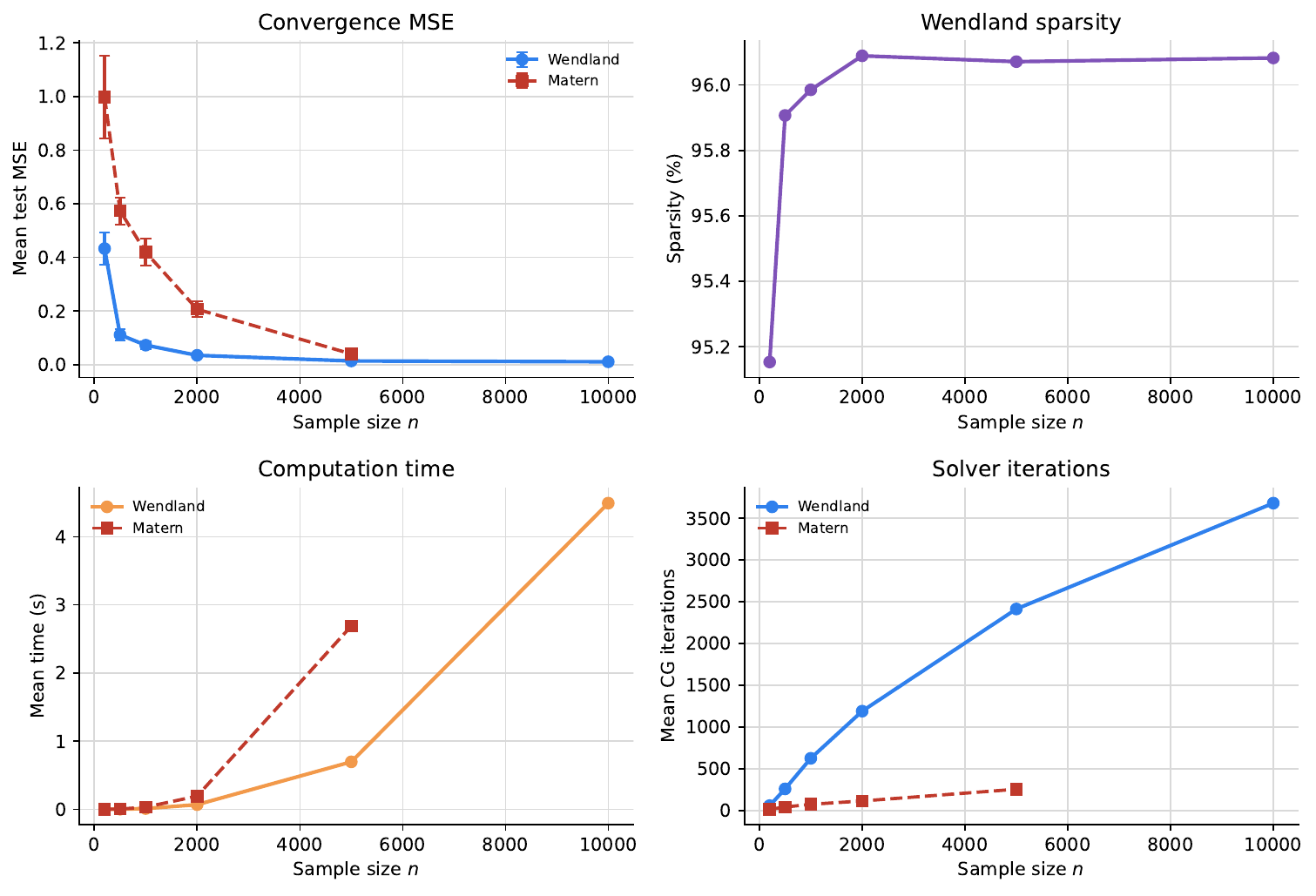}
				\caption{ERA5 two-meter temperature results as functions of sample size $n$. The upper-left panel compares MSE for the Wendland and Mat\'{e}rn predictors; the upper-right panel shows the percentage of zero entries in the Wendland covariance matrix; the lower-left panel compares mean computation time; and the lower-right panel compares mean conjugate-gradient (CG) iteration counts. Temperatures are standardized before fitting. }
				\label{fig:real_mean_metrics}
			\end{figure}
			
			\begin{figure}[h!]
				\centering
				\includegraphics[width=0.8\linewidth]{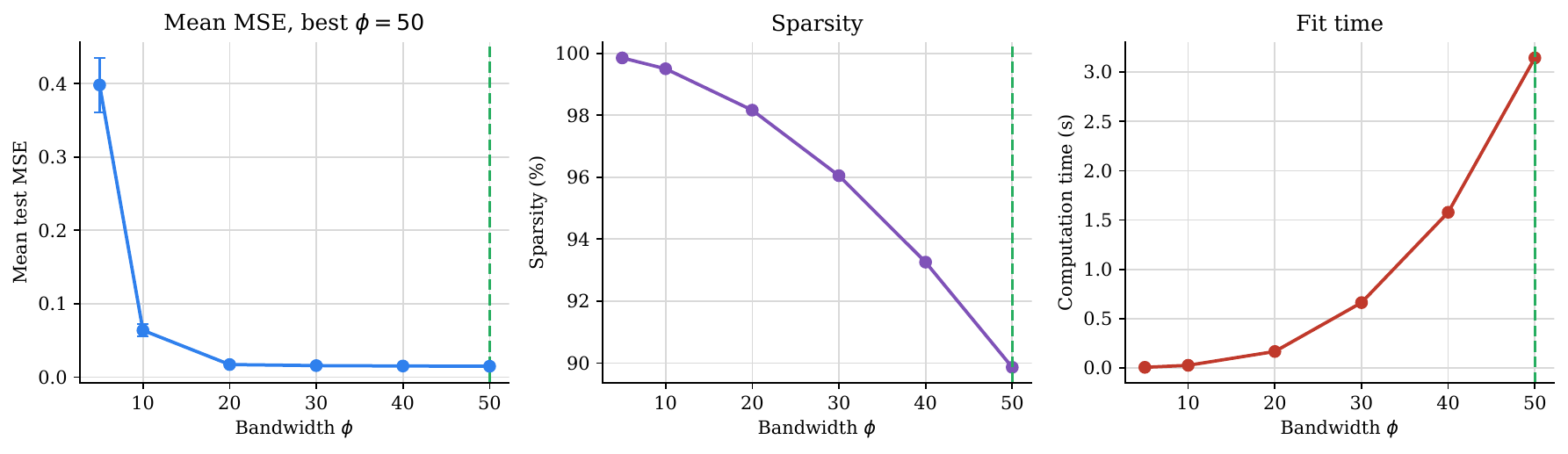}
				\caption{Bandwidth sensitivity for the ERA5 temperature experiment. The panels show mean held-out MSE, sparsity, and computation time as functions of the Wendland support radius $\phi$. MSE error bars show one standard deviation across time points.}
				\label{fig:real_bandwidth}
			\end{figure}

			\begin{figure}[h!]
				\centering
				\includegraphics[width=\linewidth]{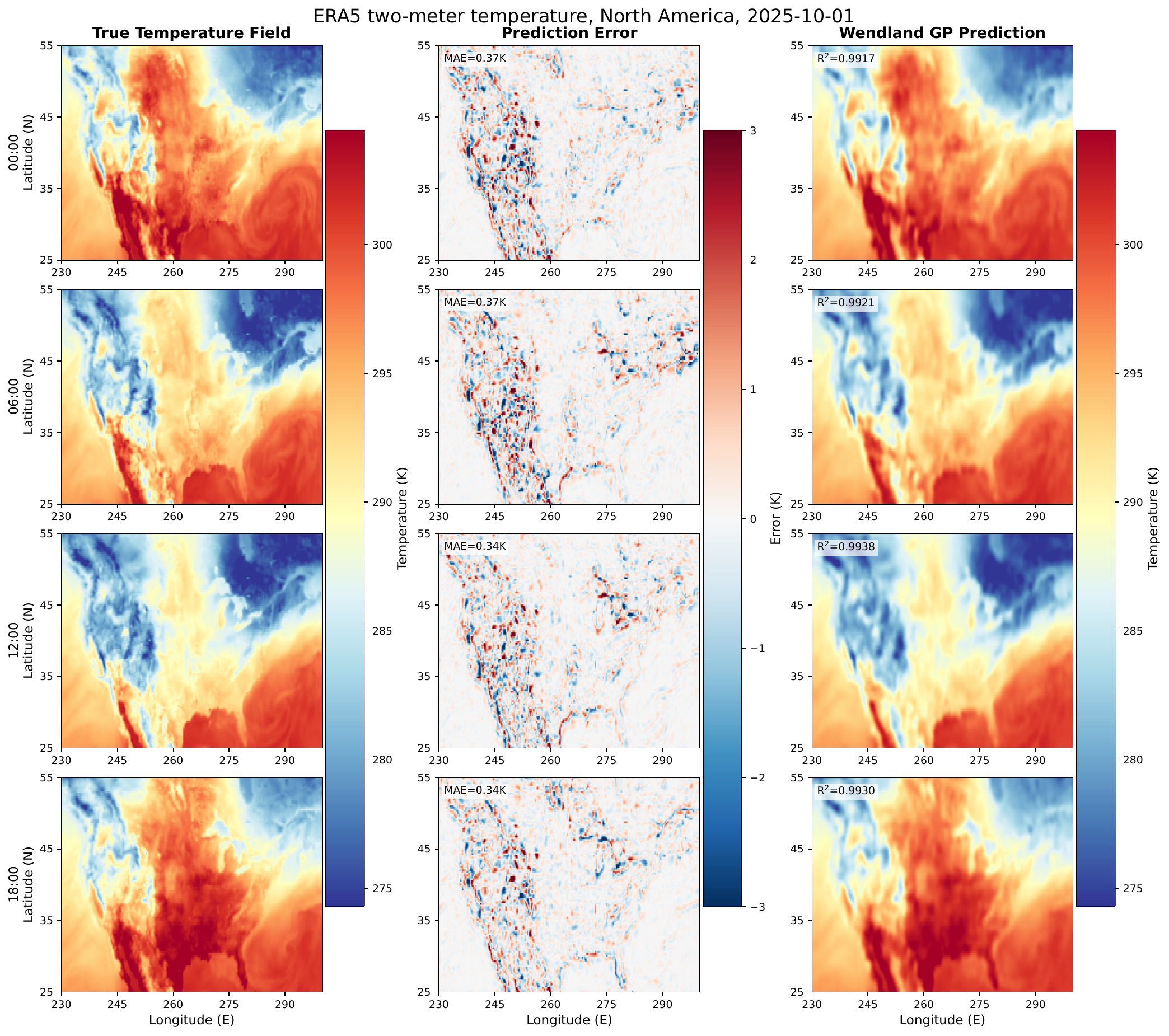}
				\caption{North America regional visualization over $230^\circ$E--$300^\circ$E and $25^\circ$N--$55^\circ$N at four time points on 2025-10-01. Rows correspond to 00:00, 06:00, 12:00, and 18:00, and columns show the true two-meter temperature field, Wendland prediction error, and Wendland prediction in Kelvin. Prediction-error panels are annotated with MAE in Kelvin.}
				\label{fig:real_north_america_four_times}
			\end{figure}
			
			The real-data experiment further empirically confirms a similar phenomena observed in the controlled simulations. At the chosen point, the mean MSE decreases from $0.4327$ at $n=200$ to $0.0106$ at $n=10000$. For the completeness of this experiment, we also show the performance under different values of $\phi_n$ at $n=5000$. It can be seen that the rougher kernel already achieves low error at moderate support: at $n=5000$, increasing $\phi$ from $20$ to $30$ changes the mean MSE from $0.0172$ to $0.0158$, while increasing $\phi$ further to $50$ only improves the mean MSE to $0.0149$ but raises average computation time from $0.6625$ seconds to $3.143$ seconds. Thus, enlarging the Wendland support radius improves accuracy, but the marginal gain beyond moderate support is small compared with the additional cost. The North America visualization in Figure~\ref{fig:real_north_america_four_times} gives a mean $R^2$ of $0.9926$ and a mean absolute error (MAE) of $0.354$ Kelvin across the four same-day snapshots, showing that the regional reconstruction remains stable under intraday temperature changes.

			\section{Conclusions and Discussion}\label{sec_conclu}
			
			In this work, we propose a scale-adjusted compactly supported approach for scalable Gaussian process prediction. By using generalized Wendland correlation functions with a support radius $\phi_n$ that decreases with the sample size, the proposed method induces sparsity in the correlation matrix and reduces sparse matrix-vector cost. Theoretical results show that, with suitable choices of $\phi_n$ and the regularization parameter, the resulting predictor preserves the optimal $L_2(\Omega)$ convergence rate in Gaussian process regression. The same idea also applies to kernel ridge regression, where the estimator attains the minimax-optimal rate while benefiting from sparse matrix-vector operations. Numerical experiments and real case analysis support the theoretical findings, and show that the scale-adjusted Wendland method achieves prediction accuracy comparable to or better than Mat\'ern-kernel baselines while often requiring less computation time.
			
			Our results also highlight several directions for future work. First, it would be useful to develop uncertainty quantification theory for scale-adjusted compactly supported predictors, including the validity of predictive intervals under covariance misspecification. 
			Second, extending the theory to nonstationary, anisotropic, random-design, and high-dimensional settings would further broaden the applicability of scale-adjusted compactly supported Gaussian process methods. Finally, extending the framework to non-Gaussian and generalized-response models, such as binary, count, or heavy-tailed spatial data, would broaden its applicability beyond squared-error regression.

\bibliography{fast_computation_GP/ref}
\bibliographystyle{apalike}

		\appendix
		
		\newpage
		\setcounter{page}{1}
		
		\numberwithin{figure}{section}
		\numberwithin{table}{section}
		\numberwithin{example}{section}

			\def\spacingset#1{\renewcommand{\baselinestretch}%
				{#1}\small\normalsize}
			
\spacingset{1.8}
			\section*{Supplementary Material for ``Rate-Preserving Shrinking-Support Gaussian Process Prediction''}
			
			In this supplement, we provide more introduction to reproducing kernel Hilbert spaces (RKHSs), details of technical proofs, and more experimental details and results.
			

			\section{Reproducing Kernel Hilbert Spaces}\label{app:introtoSoboRKHS}
			
			We start with introducing the RKHS. Let $\cX$ be a nonempty set and $k : \cX \times \cX \to \mathbb{R}$ be a positive definite kernel. The following definition of RKHS is from \cite{kanagawa2018gaussian}.
			
			\begin{definition}[RKHS]\label{def:rkhs}
				A Hilbert space $\cN_k$ of functions $f : \cX \to \RR$ is called a \emph{reproducing kernel Hilbert space} (RKHS) with reproducing kernel $k$ if the following properties hold:
				\begin{enumerate}
					\item For all $\bx \in \cX$, the function $k(\cdot, \bx)$ belongs to $\cN_k$;
					\item For all $\bx \in \cX$ and $f \in \cN_k$, the reproducing property $f(x) = \langle f, k(\cdot, \bx) \rangle_{\cN_k}$ holds.
				\end{enumerate}
			\end{definition}
			
			The following theorem provides a characterization of the RKHS when $K$ is defined by a stationary kernel function $\Phi$, via the Fourier transform of $\Phi$. 
			\begin{theorem}[Theorem 10.12 of Wendland (2004)]\label{thm:NativeSpace}
				Let $\Phi$ be a positive definite kernel function which is continuous and integrable in $\RR^d$. Define 
				$$\mathcal{G}:=\{f\in L_2(\RR^d)\cap C(\RR^d):\cF(f)/\sqrt{\cF(\Phi)}\in L_2(\RR^d)\},$$
				with the inner product
				$$\langle f,g\rangle_{\mathcal{N}_\Phi(\RR^d)}=(2\pi)^{-d}\int_{\RR^d}\frac{\cF(f)(\bm \omega)\overline{\cF(g)(\bomega)}}{\cF(\Phi)(\bm \omega)}d \bm \omega.$$ Then $\mathcal{G} = \mathcal{N}_\Phi(\RR^d)$, and both inner products coincide.
			\end{theorem}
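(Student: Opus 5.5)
The plan is to verify directly that $\mathcal{G}$, with the stated inner product, is a Hilbert space of functions on $\RR^d$ that has $\Phi(\cdot-\bx)$ as a reproducing kernel. The uniqueness part of the Moore--Aronszajn theorem (a positive definite kernel has exactly one RKHS) then gives $\mathcal{G}=\mathcal{N}_\Phi(\RR^d)$ with coinciding inner products.

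\emph{Step 1 (Fourier facts).} Since $\Phi$ is continuous, integrable and positive definite, Bochner's theorem gives $\cF(\Phi)\geq 0$. The standard corollary for integrable positive definite functions gives $\cF(\Phi)\in L_1(\RR^d)$, so Fourier inversion $\Phi(\bx)=(2\pi)^{-d/2}\int \cF(\Phi)(\bomega)e^{{\rm i}\bx^\top\bomega}{\rm d}\bomega$ holds pointwise. I will also use that $\cF(\Phi)>0$ almost everywhere, so that the quotient $\cF(f)/\sqrt{\cF(\Phi)}$ is meaningful. If this does not come for free, I will read the definition of $\mathcal{G}$ as requiring $\cF(f)$ to vanish where $\cF(\Phi)$ does.

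\emph{Step 2 (pointwise inversion for $f\in\mathcal{G}$).} By Cauchy--Schwarz,
\[
\int|\cF(f)|\leq \|\cF(f)/\sqrt{\cF(\Phi)}\|_{L_2}\,\|\cF(\Phi)\|_{L_1}^{1/2}<\infty ,
\]
so $\cF(f)\in L_1\cap L_2$. Plancherel inversion recovers $f$ almost everywhere as $(2\pi)^{-d/2}\int\cF(f)e^{{\rm i}\bx^\top\bomega}$. This integral is a continuous function and $f$ is continuous, so the identity holds for every $\bx$. It follows that point evaluation is bounded: $|f(\bx)|\leq C\|f\|_{\mathcal{G}}$.

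\emph{Step 3 (kernel membership and reproducing property).} We have $\cF(\Phi(\cdot-\bx))(\bomega)=e^{-{\rm i}\bx^\top\bomega}\cF(\Phi)(\bomega)$. Its quotient by $\sqrt{\cF(\Phi)}$ is $e^{-{\rm i}\bx^\top\bomega}\sqrt{\cF(\Phi)}\in L_2$, so $\Phi(\cdot-\bx)\in\mathcal{G}$. Substituting into the inner product gives
\[
\langle f,\Phi(\cdot-\bx)\rangle=c_d\int\cF(f)(\bomega)e^{{\rm i}\bx^\top\bomega}{\rm d}\bomega = f(\bx)
\]
by Step 2. Here the normalising constant $c_d$ must be matched to the Fourier convention; I will check that the stated $(2\pi)^{-d}$ is consistent with the $(2\pi)^{-d/2}$ in $\cF$, and adjust by a harmless constant if not.

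\emph{Step 4 (Hilbert space structure).} The form is clearly sesquilinear and positive. Definiteness follows because $\|f\|=0$ forces $\cF(f)=0$ almost everywhere, and then $f=0$ by Step 2. For completeness, the map $T:f\mapsto\cF(f)/\sqrt{\cF(\Phi)}$ is an isometry into $L_2$. If $f_k$ is Cauchy in $\mathcal{G}$, then $Tf_k\to g$ in $L_2$, and $\sqrt{\cF(\Phi)}\,g\in L_1\cap L_2$ by the Cauchy--Schwarz bound of Step 2. Defining $f$ as the inverse Fourier transform of $\sqrt{\cF(\Phi)}\,g$ gives a continuous $L_2$ function with $Tf=g$, so $f\in\mathcal{G}$ and $f_k\to f$ in $\mathcal{G}$.

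I expect this completeness step, together with the pointwise identification in Step 2, to be the main technical point: one must keep $L_1$ and $L_2$ Fourier transforms consistent and pass from almost-everywhere to everywhere equality. With Steps 1--4 in place, uniqueness of the RKHS finishes the proof.
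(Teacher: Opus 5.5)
Your plan is correct and is essentially the proof of Theorem 10.12 in Wendland (2004), which the paper cites without reproving. The bound $\|\mathcal{F}(f)\|_{L_1}\le\|\mathcal{F}(f)/\sqrt{\mathcal{F}(\Phi)}\|_{L_2}\|\mathcal{F}(\Phi)\|_{L_1}^{1/2}$ gives pointwise inversion and completeness, $\Phi(\cdot-\mathbf{x})$ reproduces point values, and uniqueness of the RKHS finishes.

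Both of your hedges resolve, but not trivially. First, with the paper's normalized $\mathcal{F}$ the reproducing identity forces the constant $(2\pi)^{-d/2}$, which is Wendland's. The printed $(2\pi)^{-d}$ is correct only for the unnormalized transform, so it is a typo in the statement: with it, the two inner products would differ by the factor $(2\pi)^{-d/2}$. The statement must be corrected; the proof cannot absorb this. Second, $\mathcal{F}(\Phi)>0$ a.e.\ does not follow from positive definiteness. For example, $\Phi(x)=(\sin x/x)^2$ in $d=1$ has a triangular, compactly supported transform. So your convention that $\mathcal{F}(f)=0$ on $\{\mathcal{F}(\Phi)=0\}$ is genuinely required. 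It is preserved under the $L_2$ limit in Step 4, where $\sqrt{\mathcal{F}(\Phi)}\,g\in L_2$ follows from $\mathcal{F}(\Phi)\in L_\infty$ rather than from Cauchy--Schwarz.
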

			
			The Sobolev norm for functions defined on the entire space is given by 
			\begin{eqnarray}\label{Sobolev}
				\|f\|_{H^m(\mathbb{R}^d)}=\left(\int_{\mathbb{R}^d} |\mathcal{F} (f)(\bomega)|^2 (1+\|\bomega\|_2^2)^{m} {\rm d} \bomega\right)^{1/2}.
			\end{eqnarray}
			The Sobolev space $H^m(\RR^d)$ consists of functions for which this norm is finite. The Sobolev space defined on a subset $\Omega$ can be derived through restriction.
			
			If $\Psi$ is a Mat\'ern correlation function with Sobolev smoothness $m$, as described in \eqref{materngai}, its Fourier transform is given by
			\begin{align*}
				\mathcal{F}(\Psi)(\omega;m,\phi)= 4^m\pi^{\frac{d}{2}}\frac{\Gamma(m)}{\Gamma(m-\frac{d}{2})}\left((m-\frac{d}{2})\phi^2\right)^{m-\frac{d}{2}} \left(4(m-\frac{d}{2})\phi^2+\|\omega\|_2^2\right)^{-m}.
			\end{align*}
			From this expression, it can be seen that $\mathcal{F}(\Psi)$ satisfies the bounds 
			\begin{align*}
				c_1(1+\|\bomega\|_2^2)^{-m} \leq  \mathcal{F}(\Psi_m)(\bomega) \leq c_2(1+\|\bomega\|_2^2)^{-m}
			\end{align*}
			for two positive constants $c_1$ and $c_2$. From Theorem \ref{thm:NativeSpace} and the Sobolev norm defined in \eqref{Sobolev}, it follows that the RKHS $\mathcal{N}_{\Psi}(\RR^d)$ generated by $\Psi$ is equivalent to $H^m(\RR^d)$. By applying restriction techniques (see \cite{adams2003sobolev} and \cite{wendland2004scattered}), it can be further demonstrated that $\mathcal{N}_{\Psi_m}(\Omega) = H^m(\Omega)$ with equivalent norms. For further details, we refer readers to \cite{wendland2004scattered}.
			
			\section{Proof of Theorem \ref{thm_GPconvergence_rate}}\label{app_pfofGPConvergence}
			
			We first present several lemmas used in this proof. 
			\begin{lemma}\label{LEM:BOFKMMAX}
				Suppose the design points $\bX=\{\bx_1,...,\bx_n\}$ and the separation radius of $\bX$ $q_{\bX}\lesssim 1$. Let $\Psi$ be a correlation function satisfy Assumption \ref{assum_Psi}, and $\Lambda_{\bX}$ be the maximum eigenvalue of matrix $(\Psi(\bx_j-\bx_k))_{jk}$. Then 
				\begin{align*}
					\Lambda_{\bX}\leq Cq_{\bX}^{-d},
				\end{align*}
				where $C$ is a constant depending on $\Psi$ and $\Omega$.
			\end{lemma}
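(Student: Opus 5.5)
The plan is to bound the largest eigenvalue through the Gershgorin circle theorem, or equivalently through the quadratic-form bound $\Lambda_{\bX}=\sup_{\|\ba\|_2=1}\ba^\top(\Psi(\bx_j-\bx_k))_{jk}\ba\le \max_j\sum_{k=1}^n|\Psi(\bx_j-\bx_k)|$. Since $\Psi(\mathbf 0)=1$, it suffices to show that for every fixed $j$ the off-diagonal sum $\sum_{k\ne j}|\Psi(\bx_j-\bx_k)|$ is at most $Cq_{\bX}^{-d}$, uniformly in $j$. Because $q_{\bX}\lesssim 1$, the diagonal contribution $1$ is then absorbed into $Cq_{\bX}^{-d}$.

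To bound the row sum, I will use the separation property through a shell-counting argument. Fix $\bx_j$ and partition $\RR^d$ into annuli $A_\ell=\{\bx:\ \ell q_{\bX}\le\|\bx-\bx_j\|_2<(\ell+1)q_{\bX}\}$ for $\ell\ge1$. Every $\bx_k$ with $k\ne j$ lies in some $A_\ell$, since $\|\bx_k-\bx_j\|_2\ge 2q_{\bX}$. The balls $B(\bx_k,q_{\bX})$ are pairwise disjoint, and those with $\bx_k\in A_\ell$ lie in the enlarged annulus $\{(\ell-1)q_{\bX}\le\|\bx-\bx_j\|_2<(\ell+2)q_{\bX}\}$. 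Comparing volumes shows that $A_\ell$ contains at most $C_d\ell^{d-1}$ design points, with $C_d$ depending only on $d$.

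Next I need a radial envelope for $\Psi$. Under Assumption \ref{assum_Psi}, $\Psi$ is either a generalized Wendland function with fixed support radius $\phi$, or a Matérn function. In both cases $|\Psi(\bx)|\le G(\|\bx\|_2)$ for a nonincreasing $G$ satisfying $G(r)\le C(1+r)^{-d-1}$. For the Wendland case this holds because $\Psi$ is bounded by $1$ and vanishes for $r\ge\phi$. For the Matérn case it follows from the exponential decay of $K_\nu$ at infinity together with $|\Psi|\le 1$. The row sum is then bounded by
\[
\sum_{\ell\ge1}C_d\ell^{d-1}G(\ell q_{\bX})\le C\sum_{\ell\ge1}\ell^{d-1}(1+\ell q_{\bX})^{-d-1}.
\]
I would split this sum at $\ell\approx q_{\bX}^{-1}$. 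Terms with $\ell\le q_{\bX}^{-1}$ contribute at most $\sum_{\ell\le q_{\bX}^{-1}}\ell^{d-1}\lesssim q_{\bX}^{-d}$. Terms with $\ell>q_{\bX}^{-1}$ contribute at most $q_{\bX}^{-d-1}\sum_{\ell>q_{\bX}^{-1}}\ell^{-2}\lesssim q_{\bX}^{-d}$. Both pieces are $O(q_{\bX}^{-d})$, which gives the lemma.

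The only genuinely delicate step is the packing count of points per shell, which is a routine volume comparison. The decay envelope is elementary. An alternative route avoids decay estimates altogether: compare the sum with an integral of $G$ over $\RR^d$, using disjoint balls of volume $\asymp q_{\bX}^d$. This gives $\sum_k G(\|\bx_j-\bx_k\|_2)\lesssim q_{\bX}^{-d}\int G(\|\bx\|_2-q_{\bX})_+\,d\bx$, which is finite because $G$ is integrable. Note that no property of $\Omega$ is really needed beyond the points being distinct, so the constant depends only on $\Psi$ and $d$.
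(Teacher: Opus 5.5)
Your proof is correct. It shares the paper's skeleton: a Gershgorin row-sum bound combined with a volume comparison of the disjoint balls around the design points. The organization differs, though.

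\textbf{The paper's route.} It proves only the generalized Wendland case directly. Because $\Psi$ vanishes beyond a fixed radius $C_1$, a single packing count inside $\cB(\bx_j,2C_1)$ shows that each row has at most $Cq_{\bX}^{-d}$ nonzero entries, each bounded by $1$. The Mat\'ern case is delegated to Lemma~22 of \cite{wang2022gaussian}.

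\textbf{Your route.} Your shell-by-shell count with the envelope $G(r)\le C(1+r)^{-d-1}$ handles both families at once and makes the Mat\'ern case self-contained. For the Wendland kernel it collapses to the paper's argument, since only shells with $\ell q_{\bX}<C_1$ contribute. The price is the extra decay estimate and the split of the sum at $\ell\approx q_{\bX}^{-1}$, which the compact-support argument avoids.

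\textbf{Your remark on the constant.} Keep your observation that the constant depends only on $\Psi$ and $d$, not on $\Omega$. In the proof of Theorem~\ref{thm_GPconvergence_rate} the lemma is applied to the rescaled set $\phi_n^{-1}\gamma\bX$, which lives in a dilated domain, so a domain-independent constant is exactly what that step needs.
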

			
			\begin{lemma}\label{LEMMAERRORWNUG}
				Let \(h_n=h_{\bX_n,\Omega}\) and \(\Phi=\Phi_{\phi_n}\) be the generalized Wendland correlation with smoothness \(m>d/2\) and \(0<\phi_n\le 1\). Under Assumptions \ref{assum_region}, \ref{assum_Psi}, and \ref{assum_quasiuniform}, we have, for any \(\rho_n>0\) and all \(\bx\in\Omega\),
				\[
				\Phi(\bx-\bx) - \rb_\Phi(\bx)^\top(\Rb_\Phi+\rho_n \Ib_n)^{-1}\rb_\Phi(\bx)\le C\phi_n^{-\frac{d(2m-d)}{2m}}\left(h_n^{2m}\phi_n^{-2m+d}+\frac{\rho_n}{n}\right)^{1-\frac{d}{2m}},
				\]
				where \(C>0\) does not depend on \(n,\phi_n,\rho_n\), and \(\bx\).
			\end{lemma}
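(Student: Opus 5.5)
The plan is to express the left-hand side as the power function of a regularized kernel interpolant and bound it by an explicit competitor. With $P^2_{\rho}(\bx):=\Phi(\bx-\bx)-\rb_\Phi(\bx)^\top(\Rb_\Phi+\rho_n\Ib_n)^{-1}\rb_\Phi(\bx)$, we have the standard variational characterization
\[
P^2_{\rho}(\bx)=\min_{\bu\in\RR^n}\Bigl\{\bigl\|\Phi(\cdot-\bx)-\textstyle\sum_j u_j\Phi(\cdot-\bx_j)\bigr\|_{\cN_\Phi}^2+\rho_n\|\bu\|_2^2\Bigr\}.
\]
Equivalently, $P_\rho^2(\bx)=\sup\{ |g(\bx)|^2 : \|g\|_{\cN_\Phi}^2+\rho_n^{-1}\sum_j g(\bx_j)^2\le 1\}$. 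I will work with the sup form. For any $g$ in the constraint set, I will localize via a sampling inequality. The aim is to bound $|g(\bx)|$ by the Sobolev-type seminorm of $g$, times a fill-distance power, plus a discrete term in $g(\bx_j)$.

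\textbf{Step 1 (scaled norm equivalence).} From Lemma \ref{lem:wendland_fourier_scaled} (the scaled Fourier bounds), $\cF(\Phi_{\phi})(\bomega)\asymp \phi^d(1+\phi^2\|\bomega\|^2)^{-m}$, at least for $\phi\|\bomega\|\ge w_0$ and with a comparable lower bound elsewhere. Theorem \ref{thm:NativeSpace} then gives
\[
\|g\|_{\cN_\Phi}^2\asymp \phi_n^{-d}\int|\cF g|^2(1+\phi_n^2\|\bomega\|^2)^m\,d\bomega \;\gtrsim\; \phi_n^{2m-d}|g|_{H^m}^2+\phi_n^{-d}\|g\|_{L_2}^2 .
\]
So the constraint forces $|g|_{H^m(\Omega)}^2\lesssim \phi_n^{d-2m}$ and $\sum_j g(\bx_j)^2\le\rho_n$. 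Restriction to $\Omega$ uses a Sobolev extension operator; this is fine because $\phi_n\le1$ keeps the constants uniform.

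\textbf{Step 2 (sampling inequality and Gagliardo--Nirenberg).} On a quasi-uniform design, the standard sampling inequality (Wendland, Rieger--Zwicknagl) gives
\[
\|g\|_{L_2(\Omega)}\lesssim h_n^{m}|g|_{H^m(\Omega)}+h_n^{d/2}\|g|_{\bX}\|_2 .
\]
Hence $\|g\|_{L_2}^2\lesssim h_n^{2m}\phi_n^{d-2m}+\rho_n h_n^d\asymp h_n^{2m}\phi_n^{-2m+d}+\rho_n/n$. Next I will use the interpolation (Gagliardo--Nirenberg/Sobolev embedding) inequality
\[
\|g\|_{L_\infty}\lesssim \|g\|_{L_2}^{1-\frac{d}{2m}}\|g\|_{H^m}^{\frac{d}{2m}} .
\]
To get it, I will split the Fourier integral at a frequency $R$ and optimize $R$. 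Applied on $\Omega$ after extension, this yields
\[
|g(\bx)|^2\lesssim \bigl(\phi_n^{-(2m-d)}\bigr)^{\frac{d}{2m}}\bigl(h_n^{2m}\phi_n^{-2m+d}+\rho_n/n\bigr)^{1-\frac d{2m}} .
\]
The prefactor equals $\phi_n^{-\frac{d(2m-d)}{2m}}$, which is exactly the claimed bound.

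\textbf{Main obstacle.} The delicate point is the $H^m$ norm in the interpolation step. It contains an $L_2$ part in addition to the seminorm, and the native-norm lower bound controls the full $H^m$ norm only up to the factor $\phi_n^{d-2m}$ in the high-frequency regime. In the low-frequency regime $\|\bomega\|\lesssim w_0/\phi_n$, the weight is $\phi_n^{-d}$, which dominates $\phi_n^{d-2m}(1+\|\bomega\|^2)^m$ there. So the bound $\|g\|_{H^m}^2\lesssim\phi_n^{d-2m}\|g\|_{\cN_\Phi}^2$ still holds, and I will verify this carefully using Lemma \ref{lem:wendland_fourier_scaled}. A second check is that the extension-operator and sampling-inequality constants do not depend on $\phi_n$. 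This holds because both act on $g$ via Sobolev norms on the fixed domain $\Omega$ under Assumptions \ref{assum_region} and \ref{assum_quasiuniform}, with $\phi_n$ entering only through the norm comparison in Step 1.
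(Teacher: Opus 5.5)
Your proposal is correct and rests on the same three ingredients as the paper's proof: the scaled comparison $\|v\|_{H^m(\Omega)}\lesssim\phi_n^{-m+d/2}\|v\|_{\cN_\Phi(\Omega)}$ from Lemma~\ref{lem:wendland_fourier_scaled}, the fixed-design sampling inequality, and Gagliardo--Nirenberg. The only difference is that you bound every $g$ in the dual unit ball, whereas the paper works with the single extremal function $g_{\bx}=\Phi(\bx-\cdot)-\bu^\top\rb_\Phi(\cdot)$ (which is $I(\bx)^{1/2}$ times your maximizer) and closes by dividing by $I(\bx)^{1/2}$.

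There is one small slip in your last paragraph. The low-frequency comparison should read $\phi_n^{2m-d}(1+\|\bomega\|_2^2)^m\lesssim\phi_n^{-d}$, i.e.\ $(1+\|\bomega\|_2^2)^m\lesssim\phi_n^{-2m}$ for $\|\bomega\|_2\le w_0/\phi_n$. With $\phi_n^{d-2m}$, as you wrote it, the claim fails as $\phi_n\to0$. The bound you actually need already follows from your Step 1 weight, since $(1+\phi_n^2\|\bomega\|_2^2)^m\ge\phi_n^{2m}(1+\|\bomega\|_2^2)^m$.
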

			
			The following lemma states the two-frequency-regime Fourier transform bounds for the scaled generalized Wendland correlation. 
			
			\begin{lemma}\label{lem:wendland_fourier_scaled}
				Let \(\Phi_\phi\) be the generalized Wendland correlation in \eqref{eq_GWcorr} with imposed smoothness \(m\), and support radius \(\phi\in(0,1]\). There exist constants \(c_1,c_2,c_3,c_4,w_0>0\), independent of \(\phi\), such that
				\begin{align}
					c_3\phi^d \leq  \cF(\Phi_\phi)(\bomega) \leq c_4\phi^d,
					\qquad \|\bomega\|_2 \leq \phi^{-1}w_0,\label{eq_FtransPhi2}
				\end{align}
				and
				\begin{align}
					c_1\phi^{-2m+d}\|\bomega\|_2^{-2m} \leq  \cF(\Phi_\phi)(\bomega) \leq c_2\phi^{-2m+d}\|\bomega\|_2^{-2m},
					\qquad \|\bomega\|_2 > \phi^{-1}w_0.\label{eq_FtransPhi1}
				\end{align}
			\end{lemma}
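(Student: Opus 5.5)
The plan is a scaling argument. Write the generalized Wendland correlation with support radius $\phi$ as $\Phi_\phi(\bx)=\Phi_1(\bx/\phi)$, where $\Phi_1$ has unit support. A change of variables in the Fourier integral gives
\[
\cF(\Phi_\phi)(\bomega)=\phi^d\,\cF(\Phi_1)(\phi\bomega).
\]
Both claimed bounds therefore reduce to a single two-sided bound for the fixed function $g:=\cF(\Phi_1)$, namely
\[
g(\bxi)\asymp 1 \ \text{ for } \|\bxi\|_2\le w_0, \qquad g(\bxi)\asymp \|\bxi\|_2^{-2m} \ \text{ for } \|\bxi\|_2> w_0,
\]
with constants depending only on $m,\mu,d$. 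Given this, \eqref{eq_FtransPhi2} follows with $\bxi=\phi\bomega$. For \eqref{eq_FtransPhi1}, the substitution gives $\phi^d(\phi\|\bomega\|_2)^{-2m}=\phi^{-2m+d}\|\bomega\|_2^{-2m}$, which is exactly the stated form. Since the constants come from $g$ alone, they are automatically independent of $\phi$; the restriction $\phi\le 1$ is not really needed, but it is harmless.

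The analytic input is the known Fourier behaviour of $\Phi_1$. By \citet{gneitingstationary}, \citet{chernih2014closed} and \citet{bevilacqua2019estimation}, for $\mu\ge m$ (which covers the stated range $m>(d+1)/2$) the function $g$ is strictly positive and continuous on $\RR^d$, and it satisfies $g(\bxi)\sim c\|\bxi\|_2^{-2m}$ as $\|\bxi\|_2\to\infty$ for some $c>0$. Equivalently, $g$ is bounded above and below by multiples of $(1+\|\bxi\|_2^2)^{-m}$, which is the statement that $\mathcal N_{\Phi_1}=H^m(\RR^d)$ with equivalent norms (\citet{wendland2004scattered}, Theorem 10.35, for the integer case, with the general case in the cited papers). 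I will invoke this as given. Near the origin, $g(\mathbf 0)=(2\pi)^{-d/2}\int\Phi_1>0$.

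With these facts the bounds for $g$ are immediate. Fix any $w_0>0$, for instance $w_0=1$. On the compact ball $\{\|\bxi\|_2\le w_0\}$, continuity and strict positivity give $0<\min g\le g\le \max g<\infty$, which supplies $c_3,c_4$. On $\{\|\bxi\|_2> w_0\}$, the two-sided bound $g\asymp(1+\|\bxi\|_2^2)^{-m}$ together with $(1+t^2)\asymp t^2$ for $t\ge w_0$ gives $g\asymp\|\bxi\|_2^{-2m}$, which supplies $c_1,c_2$.

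The main obstacle is justifying the strict positivity and the two-sided algebraic decay of $g$ for non-integer $m-(d+1)/2$. I would handle it by quoting the hypergeometric ($_1F_2$) representation of $\cF(\Phi_1)$ from \citet{chernih2014closed} and \citet{bevilacqua2019estimation}. If a self-contained argument is preferred, I would instead write $\Phi_1$ as a scale mixture of the truncated powers $(1-u)_+^\mu$, which is exactly the integral in \eqref{eq_GWcorr}. The low-frequency bound would then come from $g(\mathbf 0)>0$ plus continuity, and the high-frequency asymptotics from the leading boundary singularity of $\Phi_1$ at $\|\bx\|_2=\phi$.
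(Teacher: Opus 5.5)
Your argument is essentially the paper's: the scaling identity $\cF(\Phi_\phi)(\bomega)=\phi^d\cF(\Phi_1)(\phi\bomega)$, the cited strict positivity, continuity and $\|\bomega\|_2^{-2m}$ decay of $\cF(\Phi_1)$ from \citet{bevilacqua2019estimation} and \citet{chernih2014closed}, and a compactness bound on the low-frequency ball (you fix any $w_0$ via the global equivalence with $(1+\|\bxi\|_2^2)^{-m}$, the paper takes $w_0$ large, and this makes no difference). Two small corrections are worth making. First, state the decay as $\cF(\Phi_1)(\bxi)\asymp\|\bxi\|_2^{-2m}$ rather than $\sim c\|\bxi\|_2^{-2m}$: only the two-sided bound is used, and the limit need not exist in the allowed boundary case $\mu=m$ (e.g.\ $d=1$, $m=\mu=2$, where $\|\bxi\|_2^{4}\cF(\Phi_1)(\bxi)$ keeps oscillating). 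Second, in your optional self-contained route the non-oscillatory $\|\bxi\|_2^{-2m}$ term comes from the $\|\bx\|_2^{2m-d}$-type singularity of $\Phi_1$ at the origin, whereas the edge $\|\bx\|_2=1$ contributes only an oscillatory term of order $\|\bxi\|_2^{-(m+\mu)}$.
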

			
			\noindent\textit{Proof.}
			Let \(\Phi_1\) denote the corresponding unit-support Wendland correlation. Since \(\Phi_\phi(\bx)=\Phi_1(\bx/\phi)\), the scaling property of the Fourier transform gives
			\[
			\cF(\Phi_\phi)(\bomega)=\phi^d\cF(\Phi_1)(\phi\bomega).
			\]
			For the unit-support generalized Wendland correlation, Theorem 1 of \citet{bevilacqua2019estimation}, equivalently the hypergeometric expansion in \citet{chernih2014closed}, gives
			\[
			\cF(\Phi_1)(\bomega)\asymp \|\bomega\|_2^{-2m},\qquad \|\bomega\|_2\rightarrow\infty.
			\]
			Therefore, after choosing \(w_0\) sufficiently large, there are constants \(a_1,a_2>0\) such that
			\[
			a_1\|\bomega\|_2^{-2m}\leq \cF(\Phi_1)(\bomega)\leq a_2\|\bomega\|_2^{-2m},\qquad \|\bomega\|_2>w_0.
			\]
			The same theorem implies that \(\cF(\Phi_1)\) is strictly positive and continuous. Hence, on the compact set \(\{\bomega:\|\bomega\|_2\leq w_0\}\), there are constants \(a_3,a_4>0\) such that
			\[
			a_3\leq \cF(\Phi_1)(\bomega)\leq a_4.
			\]
			Substituting \(\phi\bomega\) for \(\bomega\) in these unit-scale bounds and multiplying by \(\phi^d\) yields \eqref{eq_FtransPhi2} and \eqref{eq_FtransPhi1}.

			Now we begin to prove Theorem \ref{thm_GPconvergence_rate}. The proof follows the line of the proof of Theorem 6 in \cite{wang2022gaussian}. 
			We can decompose $\hat f_\Phi(\bx)$ in \eqref{eq:GPpredic} as
			\begin{align*}
				\hat f_\Phi(\bx) =\rb_\Phi(\bx)^\top (\Rb_\Phi+\rho_n \Ib_n)^{-1} \bZ + \rb_\Phi(\bx)^\top (\Rb_\Phi+\rho_n \Ib_n)^{-1} \bvarepsilon,
			\end{align*}
			where $\bvarepsilon = (\varepsilon_1,...,\varepsilon_n)^\top$ and $\bZ = (Z(\bx_1),...,Z(\bx_n))^\top$ (recall that we do not differentiate $Z$ and $f$ in Sections \ref{sec_GP} and \ref{sec_GPtheory}). Then the MSPE can be bounded by
			\begin{align}\label{eq_gpoverde}
				\EE(Z(\bx) - \hat f_{\Phi}(\bx))^2 = & \EE(Z(\bx) - \rb_\Phi(\bx)^\top (\Rb_\Phi+\rho_n \Ib_n)^{-1} \bZ - \rb_\Phi(\bx)^\top (\Rb_\Phi+\rho_n \Ib_n)^{-1} \bvarepsilon)^2\nonumber\\
				\leq & 2\EE(Z(\bx) - \rb_\Phi(\bx)^\top (\Rb_\Phi+\rho_n \Ib_n)^{-1} \bZ)^2 + 2\EE(\rb_\Phi(\bx)^\top (\Rb_\Phi+\rho_n \Ib_n)^{-1} \bvarepsilon)^2\nonumber\\
				= & 2I_1(\bx) +2I_2(\bx),
			\end{align}
			where the inequality is by the Cauchy-Schwarz inequality. By computing the expectation in the terms $I_1(\bx)$ and $I_2(\bx)$ in \eqref{eq_gpoverde}, it holds that
			\begin{align}\label{eq_gpoverde_I1}
				I_1(\bx) = & \Psi(\bx-\bx) - 2 \rb_\Phi(\bx)^\top (\Rb_\Phi+\rho_n \Ib_n)^{-1} \rb(\bx)\nonumber\\
				& + \rb_\Phi(\bx)^\top (\Rb_\Phi+\rho_n \Ib_n)^{-1} \Rb (\Rb_\Phi+\rho_n \Ib_n)^{-1}\rb_\Phi(\bx),
			\end{align}
			and
			\begin{align}\label{eq_gpoverde_I2}
				I_2(\bx) = & \sigma_\varepsilon^2\rb_\Phi(\bx)^\top (\Rb_\Phi+\rho_n \Ib_n)^{-2}\rb_\Phi(\bx),
			\end{align}
			where $\rb(\bx)$ and $\Rb$ are as in \eqref{mean}.
			
			For the scaled generalized Wendland working correlation \(\Phi=\Phi_{\phi_n}\), Lemma \ref{lem:wendland_fourier_scaled} gives the low- and high-frequency bounds \eqref{eq_FtransPhi2} and \eqref{eq_FtransPhi1}, with constants independent of \(n\) and \(\phi_n\). Together with Theorem \ref{thm:NativeSpace}, these bounds imply the usual equivalence between \(H^m(\Omega)\) and \(\cN_\Phi(\Omega)\), with the scale dependence made explicit in Lemma \ref{lem:wendland_fourier_scaled}.
			By the Fourier inversion theorem, for $\bu=(u_1,...,u_n)^T = (\Rb_\Phi+\rho_n \Ib_n)^{-1}\rb_\Phi(\bx)$, we have 
			\begin{align}\label{eq_eI1gpover}
				& \mathbb{E}I_1(\bx) =  \int_{\RR^d}\left|\sum_{j=1}^nu_je^{-\ib \bx_j^\top \bomega}-e^{-\ib \bx^\top\bomega}\right|^2\mathcal{F}(\Psi)(\bomega){\rm d}\bomega\nonumber\\
				= & \int_{\|\bomega\|_2 \leq \phi_n^{-1}w_0}\left|\sum_{j=1}^nu_je^{-\ib \bx_j^\top \bomega}-e^{-\ib \bx^\top\bomega}\right|^2\mathcal{F}(\Psi)(\bomega){\rm d}\bomega\nonumber\\
				& + \int_{\phi_n^{-1}w_0 < \|\bomega\|_2\leq \phi_n^{-1}\gamma }\left|\sum_{j=1}^nu_je^{-\ib \bx_j^\top \bomega}-e^{-\ib \bx^\top\bomega}\right|^2\mathcal{F}(\Psi)(\bomega){\rm d}\bomega\nonumber\\
				& + \int_{\|\omega\|_2> \phi_n^{-1}\gamma}\left|\sum_{j=1}^nu_je^{-\ib \bx_j^\top \bomega}-e^{-\ib \bx^\top\bomega}\right|^2\mathcal{F}(\Psi)(\bomega){\rm d}\bomega\nonumber\\
				= & I_{11} + I_{12}+ I_{13},
			\end{align}
			where $\gamma>w_0$ will be determined later. 
			
			The first term $I_{11}$ in \eqref{eq_eI1gpover} can be bounded by
			\begin{align}\label{eq_I11GPover}
				I_{11} \lesssim &  \int_{\|\bomega\|_2 \leq \phi_n^{-1}w_0}\left|\sum_{j=1}^nu_je^{-\ib \bx_j^\top \bomega}-e^{-\ib \bx^\top\bomega}\right|^2(1+\|\bomega\|_2^2)^{-m_0}{\rm d}\bomega \nonumber\\
				\lesssim &  \int_{\|\bomega\|_2 \leq \phi_n^{-1}w_0}\left|\sum_{j=1}^nu_je^{-\ib \bx_j^\top \bomega}-e^{-\ib \bx^\top\bomega}\right|^2{\rm d}\bomega \nonumber\\
				\lesssim & \phi_n^{-d}\int_{\|\omega\|_2\leq \phi_n^{-1}w_0 }\left|\sum_{j=1}^nu_je^{-\ib \bx_j^\top \bomega}-e^{-\ib \bx^\top\bomega}\right|^2\mathcal{F}(\Phi)(\bomega){\rm d}\bomega,
			\end{align}
			where the first inequality is because of Assumption \ref{assum_Psi}, and the last inequality is by \eqref{eq_FtransPhi2}.
			
			The second term $I_{12}$ in \eqref{eq_eI1gpover} can be bounded by
			\begin{align}\label{eq_I12GPover}
				I_{12} \lesssim &  \int_{\phi_n^{-1}w_0 < \|\bomega\|_2\leq \phi_n^{-1}\gamma }\left|\sum_{j=1}^nu_je^{-\ib \bx_j^\top \bomega}-e^{-\ib \bx^\top\bomega}\right|^2(1+\|\bomega\|_2^2)^{-m_0}{\rm d}\bomega \nonumber\\
				\lesssim  & (\phi_n^{-1}\gamma)^{2m-2m_0}\int_{\phi_n^{-1}w_0 < \|\bomega\|_2\leq \phi_n^{-1}\gamma }\left|\sum_{j=1}^nu_je^{-\ib \bx_j^\top \bomega}-e^{-\ib \bx^\top\bomega}\right|^2(1+\|\bomega\|_2^2)^{-m}{\rm d}\bomega \nonumber\\
				\lesssim & \gamma^{2m-2m_0}\phi_n^{2m_0-d}\int_{\phi_n^{-1}w_0 < \|\bomega\|_2\leq \phi_n^{-1}\gamma }\left|\sum_{j=1}^nu_je^{-\ib \bx_j^\top \bomega}-e^{-\ib \bx^\top\bomega}\right|^2\mathcal{F}(\Phi)(\bomega){\rm d}\bomega,
			\end{align}
			where the first inequality is because of Assumption \ref{assum_Psi}, and the last inequality is by \eqref{eq_FtransPhi1}.
			
			The third term $I_{13}$ in \eqref{eq_eI1gpover} can be bounded similarly as shown in the proof of Theorem 6 in \cite{wang2022gaussian}. Specifically, see equations (35)--(38) of \cite{wang2022gaussian}. As a result,
			\begin{align}\label{eq_I13GPover}
				I_{13} \lesssim & (\phi_n^{-1}\gamma)^{d-2m_0}\Lambda_{\phi_n^{-1}\gamma \bX}\|\bu\|_2^2+  (\phi_n^{-1}\gamma)^{-2m_0+d},
			\end{align}
			where $\phi_n^{-1}\gamma \bX = \{\phi_n^{-1}\gamma \bx_1,...,\phi_n^{-1}\gamma \bx_n\}$. The separation distance of $\phi_n^{-1}\gamma \bX$ is $\phi_n^{-1}\gamma q_{\bX}$, and thus Lemma \ref{LEM:BOFKMMAX} implies
			\begin{align}\label{eq_gp_UbofLambda1}
				\Lambda_{\phi_n^{-1}\gamma \bX}\leq C(\phi_n^{-1}\gamma q_{\bX})^{-d},
			\end{align}
			where we further require $\phi_n^{-1}\gamma q_{\bX}\lesssim 1$, which will be checked later. Plugging \eqref{eq_I11GPover}, \eqref{eq_I12GPover} and \eqref{eq_I13GPover} into \eqref{eq_eI1gpover}, we have
			\begin{align}\label{eq_eI1gpover2}
				& \mathbb{E}I_1(\bx)\nonumber\\
				\lesssim & \phi_n^{-d}\int_{\|\omega\|_2\leq \phi_n^{-1}w_0 }\left|\sum_{j=1}^nu_je^{-\ib \bx_j^\top \bomega}-e^{-\ib \bx^\top\bomega}\right|^2\mathcal{F}(\Phi)(\bomega){\rm d}\bomega \nonumber\\
				& +\gamma^{2m-2m_0}\phi_n^{2m_0-d}\int_{\phi_n^{-1}w_0 < \|\bomega\|_2\leq \phi_n^{-1}\gamma }\left|\sum_{j=1}^nu_je^{-\ib \bx_j^\top \bomega}-e^{-\ib \bx^\top\bomega}\right|^2\mathcal{F}(\Phi)(\bomega){\rm d}\bomega\nonumber\\
				& + (\phi_n^{-1}\gamma)^{d-2m_0}\Lambda_{\phi_n^{-1}\gamma \bX}\|\bu\|_2^2+  (\phi_n^{-1}\gamma)^{-2m_0+d}\nonumber\\
				\leq & (\gamma^{2m-2m_0}\phi_n^{2m_0-d} + \phi_n^{-d})\int_{\|\omega\|_2\leq \phi_n^{-1}\gamma }\left|\sum_{j=1}^nu_je^{-\ib \bx_j^\top \bomega}-e^{-\ib \bx^\top\bomega}\right|^2\mathcal{F}(\Phi)(\bomega){\rm d}\bomega\nonumber\\
				& + \phi_n^{2m_0}\gamma^{-2m_0}q_{\bX}^{-d}\|\bu\|_2^2+  (\phi_n^{-1}\gamma)^{-2m_0+d}\nonumber\\
				\leq & (\gamma^{2m-2m_0}\phi_n^{2m_0-d} + \phi_n^{-d})\left(\int_{\RR^d}\left|\sum_{j=1}^nu_je^{-\ib \bx_j^\top \bomega}-e^{-\ib \bx^\top\bomega}\right|^2\mathcal{F}(\Phi)(\bomega){\rm d}\bomega + \phi_n^d\gamma^{-2m}(q_{\bX})^{-d}\|\bu\|_2^2\right)\nonumber\\
				& + (\phi_n^{-1}\gamma)^{-2m_0+d},
			\end{align}
			where the second inequality is by \eqref{eq_gp_UbofLambda1}.
			
			Taking $\gamma = \phi_n^{\frac{d}{2m}}\rho_n^{-\frac{1}{2m}} (q_{\bX})^{-\frac{d}{2m}}$ such that $\phi_n^d\gamma^{-2m}(q_{\bX})^{-d} = \rho_n$ and $\phi_n^{-1}\gamma q_{\bX}\lesssim 1$. Then, following \eqref{eq_eI1gpover2}, $\mathbb{E}I_1(\bx)$ can be further bounded by
			\begin{align}\label{eq_eI1gpover3}
				& \mathbb{E}I_1(\bx)\nonumber\\
				\lesssim & (\phi_n^{\frac{m_0(2m-d)}{m}}\rho_n^{-\frac{m-m_0}{m}}q_{\bX}^{-\frac{(m-m_0)d}{m}} + \phi_n^{-d})\left(\int_{\RR^d}\left|\sum_{j=1}^nu_je^{-\ib \bx_j^\top \bomega}-e^{-\ib \bx^\top\bomega}\right|^2\mathcal{F}(\Phi)(\bomega){\rm d}\bomega + \rho_n\|\bu\|_2^2\right)\nonumber\\
				& + \phi_n^{\frac{(2m-d)(2m_0-d)}{2m}}\rho_n^{\frac{2m_0-d}{2m}} (q_{\bX})^{\frac{(2m_0-d)d}{2m}}\nonumber\\
				= &  (\phi_n^{\frac{m_0(2m-d)}{m}}\rho_n^{-\frac{m-m_0}{m}}q_{\bX}^{-\frac{(m-m_0)d}{m}} + \phi_n^{-d})\nonumber\\
				& \times\bigg(\Phi(\bx-\bx) -2\sum_{j=1}^n u_j\Phi(\bx-\bx_j) + \sum_{j,k=1}^n u_ju_k\Phi(\bx_j-\bx_k) + \rho_n\|\bu\|_2^2\bigg)
				\nonumber\\
				& + \phi_n^{\frac{(2m-d)(2m_0-d)}{2m}}\rho_n^{\frac{2m_0-d}{2m}} (q_{\bX})^{\frac{(2m_0-d)d}{2m}}\nonumber\\
				=  &  (\phi_n^{\frac{m_0(2m-d)}{m}}\rho_n^{-\frac{m-m_0}{m}}q_{\bX}^{-\frac{(m-m_0)d}{m}} + \phi_n^{-d})(\Phi(\bx-\bx) - \rb_\Phi(\bx)^T(\Rb_\Phi+\rho_n \Ib_n)^{-1}\rb_\Phi(\bx))\nonumber\\
				& + \phi_n^{\frac{(2m-d)(2m_0-d)}{2m}}\rho_n^{\frac{2m_0-d}{2m}} (q_{\bX})^{\frac{(2m_0-d)d}{2m}}\nonumber\\
				\leq & (\phi_n^{\frac{m_0(2m-d)}{m}}\rho_n^{-\frac{m-m_0}{m}}q_{\bX}^{-\frac{(m-m_0)d}{m}} + \phi_n^{-d})\left(\phi_n^{-\frac{d(2m-d)}{2m}}\left(h_n^{2m}\phi_n^{-2m+d} + \frac{\rho_n}{n}\right)^{1-\frac{d}{2m}}\right)\nonumber\\
				& + \phi_n^{\frac{(2m-d)(2m_0-d)}{2m}}\rho_n^{\frac{2m_0-d}{2m}} (q_{\bX})^{\frac{(2m_0-d)d}{2m}}\nonumber\\
				\lesssim & (\phi_n^{\frac{m_0(2m-d)}{m}}\left(\frac{\rho_n}{n}\right)^{\frac{m_0-m}{m}} + \phi_n^{-d})\left(\phi_n^{-\frac{d(2m-d)}{2m}}\left(n^{-\frac{2m}{d}}\phi_n^{-2m+d} + \frac{\rho_n}{n}\right)^{1-\frac{d}{2m}}\right)\nonumber\\
				& + \phi_n^{\frac{(2m-d)(2m_0-d)}{2m}}\left(\frac{\rho_n}{n}\right)^{\frac{2m_0 - d}{2m}},
			\end{align}
			where the second inequality is by Lemma \ref{LEMMAERRORWNUG}, and the last inequality is because \(q_{\bX}\asymp n^{-1/d}\) by Assumption \ref{assum_quasiuniform}.
			
			Next, we consider $I_2(\bx)$. In fact, 
			\begin{align}\label{eq_eI2gpover1}
				I_2(\bx) = & \sigma_\varepsilon^2\|\bu\|_2^2 \leq \frac{\sigma_\varepsilon^2}{\rho_n}\left(\int_{\RR^d}\left|\sum_{j=1}^nu_je^{-\ib \bx_j^\top \bomega}-e^{-\ib \bx^\top\bomega}\right|^2\mathcal{F}(\Phi)(\bomega){\rm d}\bomega + \rho_n\|\bu\|_2^2\right)\nonumber\\
				= & \frac{\sigma_\varepsilon^2}{\rho_n}(\Phi(\bx-\bx) - \rb_\Phi(\bx)^T(\Rb_\Phi+\rho_n \Ib_n)^{-1}\rb_\Phi(\bx))\nonumber\\
				\leq & C_9 \rho_n^{-1}\phi_n^{-\frac{d(2m-d)}{2m}}\left(h_n^{2m}\phi_n^{-2m+d} + \frac{\rho_n}{n}\right)^{1-\frac{d}{2m}},
			\end{align}
			where the last inequality is by Lemma \ref{LEMMAERRORWNUG}. Plugging \eqref{eq_eI1gpover3} and \eqref{eq_eI2gpover1} into \eqref{eq_gpoverde} yields
			\begin{align}\label{eq_gpoverall}
				& \EE(Z(\bx) - \hat f_{\Phi}(\bx))^2\nonumber\\
				\leq & C\bigg((\phi_n^{\frac{m_0(2m-d)}{m}}\left(\frac{\rho_n}{n}\right)^{\frac{m_0-m}{m}} + \phi_n^{-d} +\rho_n^{-1})\left(\phi_n^{-\frac{d(2m-d)}{2m}}\left(n^{-\frac{2m}{d}}\phi_n^{-2m+d} + \frac{\rho_n}{n}\right)^{1-\frac{d}{2m}}\right)\nonumber\\
				& + \phi_n^{\frac{(2m-d)(2m_0-d)}{2m}}\left(\frac{\rho_n}{n}\right)^{\frac{2m_0 - d}{2m}}\bigg).
			\end{align}
			Finally, since the bound in \ref{eq_gpoverall} is uniform in \(\bx\in\Omega\), we have
			\[
			E\|Z-\hat f_\Phi\|_{L^2(\Omega)}^2=\int_\Omega E\{Z(x)-\hat f_\Phi(x)\}^2\,dx\le|\Omega|\sup_{x\in\Omega}E\{Z(x)-\hat f_\Phi(x)\}^2,
			\]
			which finishes the proof of Theorem \ref{thm_GPconvergence_rate}.

			\section{Proof of Proposition \ref{prop_GP_phimunrelation}}
			
			Direct computation shows that 
			\begin{align}\label{eq_pf_prop_gp_phimurela}
				& (\phi_n^{\frac{m_0(2m-d)}{m}}\left(\frac{\rho_n}{n}\right)^{\frac{m_0-m}{m}} + \phi_n^{-d} +\rho_n^{-1})\left(\phi_n^{-\frac{d(2m-d)}{2m}}\left(n^{-\frac{2m}{d}}\phi_n^{-2m+d} + \frac{\rho_n}{n}\right)^{1-\frac{d}{2m}}\right)\nonumber\\
				& + \phi_n^{\frac{(2m-d)(2m_0-d)}{2m}}\left(\frac{\rho_n}{n}\right)^{\frac{2m_0 - d}{2m}}\nonumber\\
				\geq & \phi_n^{\frac{(2m_0-d)(2m-d)}{2m}}\left(\frac{\rho_n}{n}\right)^{\frac{2m_0-d}{2m}} + \rho_n^{-1}\phi_n^{-\frac{d(2m-d)}{2m}}\left(\frac{\rho_n}{n}\right)^{1-\frac{d}{2m}}\nonumber\\
				= & \left(\rho_n\phi_n^{2m-d}\right)^{\frac{2m_0-d}{2m}}n^{-\frac{2m_0-d}{2m}} + \left(\rho_n\phi_n^{2m-d}\right)^{-\frac{d}{2m}}n^{-1+\frac{d}{2m}}.
			\end{align}
			By balancing the two terms in \eqref{eq_pf_prop_gp_phimurela}, the minimum is achieved by taking $\rho_n\phi_n^{2m-d} \asymp n^{-\frac{m-m_0}{m_0}}$.
			
			To show that this rate is achievable, note that 
			\begin{align}\label{eq_pf_prop_gp_phimurel2}
				& (\phi_n^{\frac{m_0(2m-d)}{m}}\left(\frac{\rho_n}{n}\right)^{\frac{m_0-m}{m}} + \phi_n^{-d} +\rho_n^{-1})\left(\phi_n^{-\frac{d(2m-d)}{2m}}\left(n^{-\frac{2m}{d}}\phi_n^{-2m+d} + \frac{\rho_n}{n}\right)^{1-\frac{d}{2m}}\right)\nonumber\\
				& + \phi_n^{\frac{(2m-d)(2m_0-d)}{2m}}\left(\frac{\rho_n}{n}\right)^{\frac{2m_0 - d}{2m}}\nonumber\\
				\leq & 4\phi_n^{\frac{(2m_0-d)(2m-d)}{2m}}\left(\frac{\rho_n}{n}\right)^{\frac{2m_0-d}{2m}} + 4\rho_n^{-1}\phi_n^{-\frac{d(2m-d)}{2m}}\left(\frac{\rho_n}{n}\right)^{1-\frac{d}{2m}}+ \phi_n^{\frac{(2m-d)(2m_0-d)}{2m}}\left(\frac{\rho_n}{n}\right)^{\frac{2m_0 - d}{2m}}\nonumber\\
				\leq & 5\left(\left(\rho_n\phi_n^{2m-d}\right)^{\frac{2m_0-d}{2m}}n^{-\frac{2m_0-d}{2m}} + \left(\rho_n\phi_n^{2m-d}\right)^{-\frac{d}{2m}}n^{-1+\frac{d}{2m}}\right),
			\end{align}
			where the first inequality is by letting $\phi_n^d \geq \rho_n$ and $n^{-\frac{2m}{d}}\phi_n^{-2m+d}\leq \frac{\rho_n}{n}$, and the second inequality is because $\phi_n \lesssim 1$. Comparing \eqref{eq_pf_prop_gp_phimurel2} with \eqref{eq_pf_prop_gp_phimurela}, we finish the proof.

			\section{Proof of Theorem \ref{thm_KRRratesX}}\label{app_pf_thm_KRRrates}
			
			We will show a general version of Theorem \ref{thm_KRRratesX} as follows.
			
			\begin{theorem}\label{thm_KRRrates}
				Suppose Assumptions \ref{assum_region}, \ref{assum_noise}, \ref{assum_quasiuniform}, and \ref{assum_finRKHSPsi} hold. Then we have 
				\begin{align}\label{eq_thm_KRRrates}
					\|f-\hat f_n\|_{L_2} = O_p\left(T + n^{-\frac{m}{d}}\phi_n^{-m} + n^{-\frac{m}{d}}\lambda_n^{-\frac{1}{2}(1-\frac{m_0}{m})}\phi_n^{-\frac{2m-d}{2}(1-\frac{m_0}{m})}+n^{-\frac{m}{d}}\phi_n^{-\frac{2m-d}{2}}\lambda_n^{-1/2}T\right),
				\end{align}
				where $T=\max(T_1,T_2, T_3)$ with 
				\begin{align*}
					T_1 = & n^{-\frac{m_0}{d}}\phi_n^{-m_0} + n^{-\frac{m_0}{d}}\phi_n^{-\frac{(2m-d)m_0}{2m}(1-\frac{m_0}{m})}\lambda_n^{-\frac{m_0}{2m}(1-\frac{m_0}{m})} + \lambda_n^{1/2}\phi_n^{-d/2}+ \lambda_n^{\frac{m_0}{2m}}\phi_n^{\frac{(2m-d)m_0}{2m}},\\
					T_2 = & \phi_n^{-\frac{(2m-d)d}{2(2m+d)}}n^{-\frac{m}{2m+d}}\left(\phi_n^{-d}+ \lambda_n^{\frac{m_0}{m}-1}\phi_n^{\frac{(2m-d)m_0}{m}}\right)^{\frac{d}{2(2m+d)}},\\
					T_3 = & \lambda_n^{-\frac{d}{4m}}\phi_n^{-\frac{(2m-d)d}{4m}}n^{-1/2}. 
				\end{align*}
			\end{theorem}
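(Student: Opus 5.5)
We will follow the standard bias--variance route for KRR with a fixed quasi-uniform design, tracking the explicit $\phi_n$-dependence of the norm-equivalence constants from Lemma \ref{lem:wendland_fourier_scaled}. Write $\hat f_n = \bar f_n + \hat e_n$, where $\bar f_n$ is the noiseless KRR solution built from $(\bx_i,f(\bx_i))$ and $\hat e_n = \rb_\Phi(\cdot)^\top(\Rb_\Phi+n\lambda_n\Ib_n)^{-1}\bvarepsilon$ is the noise part. The proof then has three stages. First we control the empirical errors $\|f-\bar f_n\|_n$ and $\|\hat e_n\|_n$, where $\|g\|_n^2=n^{-1}\sum_i g(\bx_i)^2$, together with the corresponding $\cN_\Phi$ norms. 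Second we pass from empirical norms to $L_2(\Omega)$ via a sampling inequality for functions in $H^m(\Omega)$ on quasi-uniform designs. Third we collect the terms.

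\textbf{Step 1: deterministic bias.} The truth $f\in\cN_\Psi=H^{m_0}(\Omega)$ need not lie in $\cN_\Phi\simeq H^m$. We therefore introduce a smooth approximant $f_\tau\in\cN_\Phi$, obtained either by Fourier truncation at frequency $\tau$ of a Sobolev extension of $f$, or by a Wendland-kernel interpolant of $f$ on a coarser grid. Using \eqref{eq_FtransPhi2}--\eqref{eq_FtransPhi1}, the native-space norm obeys
\[
\|g\|_{\cN_\Phi}^2\asymp \int |\cF g|^2\left(\phi_n^{-d}\mathbf 1_{\|\bomega\|\le w_0/\phi_n}+\phi_n^{2m-d}\|\bomega\|^{2m}\mathbf 1_{\|\bomega\|>w_0/\phi_n}\right)d\bomega .
\]
This gives
\[
\|f_\tau\|_{\cN_\Phi}^2\lesssim \phi_n^{-d}+\phi_n^{2m-d}\tau^{2(m-m_0)}, \qquad \|f-f_\tau\|_{L_2}\lesssim \tau^{-m_0}.
\]
Comparing the KRR objective at $\bar f_n$ with its value at $f_\tau$ yields
\[
\|f-\bar f_n\|_n^2+\lambda_n\|\bar f_n\|_{\cN_\Phi}^2\lesssim \|f-f_\tau\|_n^2+\lambda_n\|f_\tau\|_{\cN_\Phi}^2 .
\]
Optimizing over $\tau$, with $\tau^{-2m}\asymp \lambda_n\phi_n^{2m-d}$, produces the terms $\lambda_n^{1/2}\phi_n^{-d/2}$ and $\lambda_n^{m_0/(2m)}\phi_n^{(2m-d)m_0/(2m)}$ in $T_1$. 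Passing between $\|\cdot\|_n$ and $L_2$ for $f-f_\tau$, which is only $H^{m_0}$, costs the fill-distance terms $n^{-m_0/d}\phi_n^{-m_0}$ and $n^{-m_0/d}(\cdot)$ in $T_1$. These come from a sampling inequality whose constants scale with $\phi_n$.

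\textbf{Step 2: noise.} This is an empirical-process argument in the style of \citet{geer2000empirical}, using Assumption \ref{assum_noise}. We need the entropy of the unit ball of $\cN_\Phi$ in sup norm; by the Fourier bounds it behaves like $\phi_n^{-(2m-d)d/(2m)}\cdot\delta^{-d/m}$ up to the low-frequency factor. The basic inequality $\|\hat e_n\|_n^2+\lambda_n\|\hat e_n\|^2_{\cN_\Phi}\le 2\langle\bvarepsilon,\hat e_n\rangle_n$ then gives $T_2$, with the modulus-of-continuity exponent $d/(2(2m+d))$ and the Step 1 norm bound $\phi_n^{-d}+\lambda_n^{m_0/m-1}\phi_n^{(2m-d)m_0/m}$ inserted as the radius. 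A direct trace computation, $\EE\|\hat e_n\|_n^2=\sigma_\varepsilon^2 n^{-1}\mathrm{tr}(\Rb_\Phi^2(\Rb_\Phi+n\lambda_n)^{-2})$, bounds the effective dimension via the eigenvalue decay of $\Rb_\Phi$ and yields $T_3$. As a result, $\|f-\hat f_n\|_n=O_p(T)$, and $\|\hat f_n-f_\tau\|_{\cN_\Phi}=O_p(\lambda_n^{-1/2}T+\|f_\tau\|_{\cN_\Phi})$.

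\textbf{Step 3: empirical to $L_2$.} We apply the scaled sampling inequality
\[
\|g\|_{L_2(\Omega)}\lesssim \|g\|_n+h_n^{m}\phi_n^{-(2m-d)/2}\|g\|_{\cN_\Phi},
\]
which follows from $\|g\|_{H^m}\lesssim\phi_n^{-(2m-d)/2}\|g\|_{\cN_\Phi}$ for $\phi_n\le1$; the extra $n^{-m/d}\phi_n^{-m}$ term covers the low-frequency regime. We apply it to $g=\hat f_n-f_\tau$ and add the bound $\|f-f_\tau\|_{L_2}$. This produces exactly the remaining terms $n^{-m/d}\phi_n^{-m}$, $n^{-m/d}\lambda_n^{-\frac12(1-\frac{m_0}{m})}\phi_n^{-\frac{2m-d}{2}(1-\frac{m_0}{m})}$ (from $\|f_\tau\|_{\cN_\Phi}$ at the chosen $\tau$) and $n^{-m/d}\phi_n^{-\frac{2m-d}{2}}\lambda_n^{-1/2}T$, which establishes \eqref{eq_thm_KRRrates}.

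The main obstacle is Step 2. We need the entropy bound and the lower-tail control uniformly in a kernel that changes with $n$, and the two-regime spectrum of Lemma \ref{lem:wendland_fourier_scaled} must be tracked carefully so that the $\phi_n^{-d}$ low-frequency mass enters only through the radius. To handle this, we would first try a frequency split at $w_0/\phi_n$, mirroring \eqref{eq_eI1gpover}, and bound the two pieces separately.
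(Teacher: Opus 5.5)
Your route is valid but organized differently from the paper's. The paper never splits $\hat f_n$. It runs one basic inequality against the competitor $f_n^*$, the $L_2(\Omega)$-penalized approximant of Lemma \ref{lem:approx1NN}. It bounds $\langle\bvarepsilon,\hat f_n-f_n^*\rangle_n$ by Lemma \ref{lemmaefsmall} combined with \eqref{eq_pfbound_ghm1}, then splits into three cases according to which term dominates: $T_1^2$, the $\|f_n^*\|_{\cN_\Phi}$-part, or the $\|\hat f_n\|_{\cN_\Phi}$-part. That case split is exactly where $T_1,T_2,T_3$ come from, and Lemma \ref{lemmafixdesign} is then applied to $\hat f_n-f_n^*$.

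You instead use linearity in $\by$ to separate a noiseless KRR problem from a pure-noise one, with an explicit Fourier-truncation competitor $f_\tau$. This gives a simpler and slightly sharper argument. The pure-noise basic inequality, with Lemma \ref{lemmaefsmall} transferred via \eqref{eq_pfbound_ghm1}, is exactly the paper's Case 3 with $f_n^*=0$. It yields $\|\hat e_n\|_n=O_P(T_3)$ and $\|\hat e_n\|_{\cN_\Phi}=O_P(\lambda_n^{-1/2}T_3)$ with no case analysis. Also, since $\|f-f_\tau\|_{H^{m_0}(\Omega)}\lesssim\|f\|_{H^{m_0}(\Omega)}$ uniformly in $\tau$, the empirical-versus-$L_2$ comparison for $f-f_\tau$ costs only $n^{-m_0/d}$. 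The paper, by contrast, must interpolate $\|f_n^*\|_{H^{m_0}}$ between $L_2$ and $H^m$ via Gagliardo--Nirenberg and \eqref{eq_pfbound_ghm1}, and that is the source of the $\phi_n$-dependent fill-distance terms in $T_1$. What the paper's template buys is that it does not rely on linearity and needs no explicit approximant, so it is the generic penalized M-estimation argument.

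Several of your attributions are off, although none breaks the proof, since an upper bound is all you need.
\begin{itemize}
\item $T_2$ does not come out of your noise step. Your $\hat e_n$ does not depend on $f$, so no radius from Step 1 enters. In the paper $T_2$ arises only because the noise is paired with $\hat f_n-f_n^*$. It is redundant anyway: with $B^2=\lambda_n\phi_n^{-d}+\lambda_n^{m_0/m}\phi_n^{(2m-d)m_0/m}$ one checks $T_2=T_3^{2m/(2m+d)}B^{d/(2m+d)}\le\max(T_1,T_3)$.
\item The terms $n^{-m_0/d}\phi_n^{-m_0}$ and the second term of $T_1$ do not come from a sampling inequality with $\phi_n$-dependent constants. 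In your construction they are simply replaced by $n^{-m_0/d}$.
\item The trace computation is unnecessary. It would need eigenvalue bounds for $\Rb_\Phi$ that are not available here, and it does not control $\|\hat e_n\|_{\cN_\Phi}$, which your Step 3 requires.
\item The ``main obstacle'' is not one. The embedding \eqref{eq_pfbound_ghm1} reduces everything to the fixed $H^m(\Omega)$ modulus of continuity in Lemma \ref{lemmaefsmall}. No frequency split and no $n$-dependent entropy bound is needed.
\end{itemize}
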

			
			Theorem \ref{thm_KRRrates} states a relationship between the convergence rate of the KRR estimator and the scale parameter $\phi_n$ and the regularization parameter $\lambda_n$ under quasi-uniform design. To translate these theoretical insights into practical guidelines, we explore specific parameter choices that optimize convergence. Proposition \ref{prop_RKHS_phimunrelation} identifies optimal conditions for $\phi_n$ and $\lambda_n$, ensuring the minimization of prediction error.
			
			\begin{proposition}\label{prop_RKHS_phimunrelation}
				Let $\zeta_f(\phi_n,\lambda_n)$ denote the right-hand-side of \eqref{eq_thm_KRRrates}. It can be shown that $\zeta_f(\phi_n,\lambda_n) \asymp n^{-\frac{m_0}{2m_0+d}}$ by taking $\lambda_n\phi_n^{2m-d} \asymp n^{-\frac{2m}{2m_0+d}}$ with $\lambda_n \lesssim \phi_n^{d + \frac{2m m_0}{m - m_0}}$ and $\phi_n \gtrsim n^{-\frac{2m_0}{d(2m_0+d)}}$.
			\end{proposition}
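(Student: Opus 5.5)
Write $s_n:=\lambda_n\phi_n^{2m-d}\asymp n^{-\frac{2m}{2m_0+d}}$ and $r_n:=n^{-\frac{m_0}{2m_0+d}}$. The plan is to evaluate each term of $\zeta_f(\phi_n,\lambda_n)$ in \eqref{eq_thm_KRRrates} after eliminating $\lambda_n=s_n\phi_n^{-(2m-d)}$, and to show that:
\begin{itemize}
\item every term is $\lesssim r_n$ under the two side constraints;
\item at least one term is $\gtrsim r_n$, so that $\zeta_f\asymp r_n$.
\end{itemize}

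\textbf{Step 1: translate the side constraints.} The condition $\lambda_n\lesssim \phi_n^{d+\frac{2mm_0}{m-m_0}}$ becomes $s_n\lesssim \phi_n^{2m+\frac{2mm_0}{m-m_0}}=\phi_n^{\frac{2m^2}{m-m_0}}$. Equivalently,
\[
\phi_n\gtrsim s_n^{\frac{m-m_0}{2m^2}}\asymp n^{-\frac{m-m_0}{m(2m_0+d)}} .
\]
The second constraint is $\phi_n\gtrsim n^{-\frac{2m_0}{d(2m_0+d)}}$, and I keep it as stated.

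\textbf{Step 2: the terms that combine directly into powers of $s_n$.}
\begin{itemize}
\item $T_3=s_n^{-\frac{d}{4m}}n^{-1/2}=n^{\frac{d}{2(2m_0+d)}-\frac12}=r_n$.
\item The last term of $T_1$ is $s_n^{\frac{m_0}{2m}}=r_n$.
\end{itemize}
These give the matching lower bound $\zeta_f\gtrsim r_n$.

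\textbf{Step 3: the remaining terms of $T_1$.}
\begin{itemize}
\item $\lambda_n^{1/2}\phi_n^{-d/2}=s_n^{1/2}\phi_n^{-m}$. This is $\lesssim r_n$ exactly when $\phi_n\gtrsim n^{-\frac{m-m_0}{m(2m_0+d)}}$, which is the translated first constraint from Step 1.
\item $n^{-m_0/d}\phi_n^{-m_0}\lesssim r_n$ is equivalent to the second constraint, by direct exponent arithmetic: $\phi_n^{-m_0}\lesssim n^{\frac{2m_0^2}{d(2m_0+d)}}$.
\item The mixed term simplifies to $n^{-m_0/d}s_n^{-\frac{m_0}{2m}(1-\frac{m_0}{m})}$. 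It is $\lesssim r_n$ iff $\frac{2m-m_0}{m}\le 1+\frac{2m_0}{d}$. This holds because the left side is at most $2$, while $2m_0>d$ makes the right side exceed $2$.
\end{itemize}

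\textbf{Step 4: the three extra terms outside $T$.}
\begin{itemize}
\item $n^{-m/d}\phi_n^{-m}$ is bounded by the same argument as $n^{-m_0/d}\phi_n^{-m_0}$, using $m\ge m_0$ and the constraint on $\phi_n$; I would verify the exponent comparison explicitly.
\item $n^{-m/d}s_n^{-\frac12(1-\frac{m_0}{m})}$ reduces to the pure exponent inequality $-\frac md+\frac{m-m_0}{2m_0+d}\le-\frac{m_0}{2m_0+d}$, i.e.\ $\frac{m}{2m_0+d}\le\frac md$, which holds.
\item $n^{-m/d}s_n^{-1/2}T=n^{-\frac md+\frac{m}{2m_0+d}}\,T\le T$, since the exponent is nonpositive.
\end{itemize}

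\textbf{Step 5: $T_2$, the main obstacle.} Here
\[
\lambda_n^{\frac{m_0}{m}-1}\phi_n^{\frac{(2m-d)m_0}{m}}=s_n^{-\frac{m-m_0}{m}}\phi_n^{2m-d},
\]
so
\[
T_2=\phi_n^{-\frac{(2m-d)d}{2(2m+d)}}n^{-\frac{m}{2m+d}}\Bigl(\phi_n^{-d}+s_n^{-\frac{m-m_0}{m}}\phi_n^{2m-d}\Bigr)^{\frac{d}{2(2m+d)}} .
\]
I will treat the two summands separately.
\begin{itemize}
\item With the $\phi_n^{-d}$ summand, the term is $\phi_n^{-\frac{md}{2m+d}}n^{-\frac{m}{2m+d}}=(n\phi_n^{d})^{-\frac{m}{2m+d}}$. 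It is $\lesssim r_n$ iff $\phi_n^d\gtrsim n^{\frac{m_0(2m+d)}{m(2m_0+d)}-1}$. I must check that this follows from the two lower bounds on $\phi_n$ (the exponent is $-\frac{d(m-m_0)}{m(2m_0+d)}$, so it is exactly the first constraint raised to the power $d$).
\item With the other summand, the $\phi_n$-powers cancel (the net exponent is $0$), leaving $n^{-\frac{m}{2m+d}}s_n^{-\frac{(m-m_0)d}{2m(2m+d)}}$. This should reduce to a power of $n$ equal to $r_n$: the exponent is $-\frac{m}{2m+d}+\frac{(m-m_0)d}{(2m+d)(2m_0+d)}=-\frac{m_0}{2m_0+d}$.
\end{itemize}
This is the most bookkeeping-heavy step, and the place where I would first double-check that no additional constraint on $\phi_n$ beyond the stated ones is needed.

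Combining Steps 2--5, every term is $\lesssim r_n$ and the terms in Step 2 equal $r_n$, so $\zeta_f(\phi_n,\lambda_n)\asymp n^{-\frac{m_0}{2m_0+d}}$.
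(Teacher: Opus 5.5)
Your proposal is correct. After substituting $s_n=\lambda_n\phi_n^{2m-d}$, every exponent computation checks out: $T_3$ and the last term of $T_1$ are both of order $n^{-m_0/(2m_0+d)}$; the $\phi_n^{-d}$ part of $T_2$ reduces exactly to the first side constraint; and the other part of $T_2$ is $\phi_n$-free and of the target order. This direct term-by-term bookkeeping is the same kind of argument the paper uses for the GP analogue, Proposition~\ref{prop_GP_phimunrelation}; the paper gives no separate proof of this KRR statement.

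The one item you left to verify also goes through. The second constraint gives $n^{-1/d}\phi_n^{-1}\lesssim n^{-1/(2m_0+d)}$, hence $n^{-m/d}\phi_n^{-m}\lesssim n^{-m/(2m_0+d)}\le n^{-m_0/(2m_0+d)}$.
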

			
			Combining Theorem \ref{thm_KRRrates} and Proposition \ref{prop_RKHS_phimunrelation} yields the results of Theorem \ref{thm_KRRratesX}.
			
			Before presenting the proof of Theorem \ref{thm_KRRrates}, we introduce two lemmas that are used in this proof. Lemma \ref{lem:approx1NN} provide bounds on the approximation error of penalized least square estimators with $L_2$ norm, whose proof can be found in Section \ref{app_subsec_pfL2_approx}.
			Lemma \ref{lemmaefsmall} is Lemma 8.4 of van de Geer (2000), which states that the inner product $\langle \epsilon, g \rangle_n$ is small.
			
			\begin{lemma}\label{lem:approx1NN}
				Let $f_n^*$ be the solution to the optimization problem
				\begin{align}\label{eq:krrestoraNN1}
					\min_{g\in \cN_\Phi(\Omega)}\|f-g\|_{L_2(\Omega)}^2 + \lambda_n\|g\|_{\cN_\Phi(\Omega)}^2.
				\end{align}
				Under conditions of Theorem \ref{thm_KRRrates}, we have
				\begin{align}\label{eq_approx_error}
					\|f_n^* - f\|_{L^2(\Omega)}^2 + \lambda_n \|f_n^*\|_{\cN_\Phi(\Omega)}^2 \lesssim \left(\lambda_n\phi_n^{-d}+ \lambda_n^{\frac{m_0}{m}}\phi_n^{\frac{(2m-d)m_0}{m}}\right)\|f\|_{H^{m_0}(\Omega)}^2.
				\end{align}
				In particular,
				\begin{align*}
					\|f_n^* - f\|_{L^2(\Omega)}^2 \lesssim & \left(\lambda_n\phi_n^{-d}+ \lambda_n^{\frac{m_0}{m}}\phi_n^{\frac{(2m-d)m_0}{m}}\right)\|f\|_{H^{m_0}(\Omega)}^2,\nonumber\\
					\|f_n^*\|_{\cN_\Phi(\Omega)}^2 \lesssim & \lambda_n^{-1}\left(\lambda_n\phi_n^{-d}+ \lambda_n^{\frac{m_0}{m}}\phi_n^{\frac{(2m-d)m_0}{m}}\right)\|f\|_{H^{m_0}(\Omega)}^2.
				\end{align*}
			\end{lemma}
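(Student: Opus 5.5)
The statement to prove is Lemma \ref{lem:approx1NN}. The quantity on the left of \eqref{eq_approx_error} is the minimum of the penalized objective \eqref{eq:krrestoraNN1}. So it is bounded by the objective evaluated at any competitor $g\in\cN_\Phi(\Omega)$, and I will build one explicit competitor.

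\textbf{Step 1 (reduce to $\RR^d$).} Under Assumption \ref{assum_region}, there is a bounded extension operator $E:H^{m_0}(\Omega)\to H^{m_0}(\RR^d)$ with $\|Ef\|_{H^{m_0}(\RR^d)}\lesssim\|f\|_{H^{m_0}(\Omega)}$, and $\|f\|_{H^{m_0}(\Omega)}\lesssim\|f\|_{\cN_\Psi(\Omega)}$ by Section \ref{app:introtoSoboRKHS}. The restriction norm satisfies $\|g|_\Omega\|_{\cN_\Phi(\Omega)}\le\|g\|_{\cN_\Phi(\RR^d)}$. It therefore suffices to find $g$ on $\RR^d$ with
\[
\|Ef-g\|_{L_2(\RR^d)}^2+\lambda_n\|g\|_{\cN_\Phi(\RR^d)}^2\lesssim\Big(\lambda_n\phi_n^{-d}+\lambda_n^{m_0/m}\phi_n^{(2m-d)m_0/m}\Big)\|Ef\|_{H^{m_0}}^2 .
\]

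\textbf{Step 2 (competitor).} I take the Fourier-domain minimizer
\[
\cF(g)=\frac{\cF(\Phi)}{\cF(\Phi)+(2\pi)^{-d}\lambda_n}\,\cF(Ef)
\]
(constants adjusted to the normalization in Theorem \ref{thm:NativeSpace}). A simpler alternative is a frequency truncation $\cF(g)=\cF(Ef)\mathbf 1_{\|\bomega\|\le R}$ with $R$ to be chosen. Using Theorem \ref{thm:NativeSpace}, the objective becomes, up to constants,
\[
\int|\cF(Ef)|^2\,\frac{\lambda_n}{\cF(\Phi)+\lambda_n}\,d\bomega\le\int|\cF(Ef)|^2\min\Big(1,\frac{\lambda_n}{\cF(\Phi)}\Big)\,d\bomega .
\]

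\textbf{Step 3 (split by Lemma \ref{lem:wendland_fourier_scaled}).} I split the integral into two frequency regions.
\begin{itemize}
\item On $\|\bomega\|\le\phi_n^{-1}w_0$, we have $\cF(\Phi)\ge c_3\phi_n^d$. The integrand is at most $\lambda_n\phi_n^{-d}|\cF(Ef)|^2$, which contributes $\lambda_n\phi_n^{-d}\|Ef\|_{L_2}^2$.
\item On $\|\bomega\|>\phi_n^{-1}w_0$, we have $\lambda_n/\cF(\Phi)\lesssim\lambda_n\phi_n^{2m-d}\|\bomega\|^{2m}$. Write $\min(1,\lambda_n\phi_n^{2m-d}\|\bomega\|^{2m})\le(\lambda_n\phi_n^{2m-d}\|\bomega\|^{2m})^{m_0/m}$, valid since $0<m_0/m\le1$. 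This gives $\lambda_n^{m_0/m}\phi_n^{(2m-d)m_0/m}\|\bomega\|^{2m_0}$.
\end{itemize}
Integrating the second bound against $|\cF(Ef)|^2$ yields at most $\lambda_n^{m_0/m}\phi_n^{(2m-d)m_0/m}\|Ef\|_{H^{m_0}}^2$. Summing the two regions gives \eqref{eq_approx_error}. The two "in particular" bounds follow by dropping one nonnegative term and dividing by $\lambda_n$.

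\textbf{Main obstacle.} The hard point is Step 1: the constants in the extension and restriction must not depend on $\phi_n$. The plan is to pass through $H^{m_0}$, which does not involve $\phi_n$, and to use only the $\RR^d$ native-space norm of $\Phi_{\phi_n}$. The restriction inequality is valid for every $\phi$ with constant 1, so all scale dependence sits in Lemma \ref{lem:wendland_fourier_scaled}, whose constants are uniform in $\phi$.

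A second point to check is the case $m=m_0$, which is not a true obstacle. There the exponent $m_0/m$ equals one and the bound becomes $\lambda_n\phi_n^{2m-d}\|Ef\|_{H^m}^2$, consistent with the general argument.
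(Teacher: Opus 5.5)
Your proposal is correct and follows essentially the same route as the paper. You bound the $\Omega$-problem by the objective at the $\RR^d$ Fourier-domain minimizer via extension and restriction, then split at $\|\bomega\|_2=\phi_n^{-1}w_0$ using Lemma~\ref{lem:wendland_fourier_scaled}; your inequality $\min(1,x)\le x^{m_0/m}$ does exactly the job of the paper's split of the high-frequency region into $I_{21}$ and $I_{22}$. Your low-frequency estimate uses only $\min(1,\lambda_n/\cF(\Phi))\le\lambda_n/\cF(\Phi)$, which also shows that the paper's appeal to $\lambda_n\le\phi_n^d$ at that step is unnecessary.
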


			\begin{lemma}\label{lemmaefsmall}
				Suppose Assumption \ref{assum_noise} holds. 
				Then for all $t > C$, with probability at least $1 - C_1\exp(-C_2 t^2)$,
				\begin{align*}
					\sup_{g\in H^m(\Omega)}\frac{|\langle \epsilon, g \rangle_n|}{\|g\|_n^{1 - \frac{d}{2m}}\|g\|_{H^m(\Omega)}^{\frac{d}{2m}}} \leq tn^{-\frac{1}{2}}.
				\end{align*}
			\end{lemma}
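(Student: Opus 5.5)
By homogeneity I would reduce to the unit ball and then apply the chaining-plus-peeling argument of \citet{geer2000empirical}, Lemma 8.4, to the class $\mathcal{G}=\{g:\|g\|_{H^m(\Omega)}\le 1\}$. The ratio in the statement is invariant under $g\mapsto cg$ for $c>0$, since numerator and denominator are both homogeneous of degree one. So it suffices to bound $\sup_{g\in\mathcal{G}}|\langle \epsilon,g\rangle_n|/\|g\|_n^{1-\frac{d}{2m}}$.

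The first ingredient is an entropy bound for $\mathcal{G}$ in the empirical norm $\|\cdot\|_n$.
\begin{itemize}
\item Because $m>d/2$ and $\Omega$ satisfies Assumption \ref{assum_region}, the Sobolev embedding gives $\sup_{g\in\mathcal{G}}\|g\|_{L_\infty(\Omega)}\le C$. In particular $\|g\|_n\le C$ on $\mathcal{G}$, uniformly in the design.
\item The Birman--Solomyak entropy bound gives $\log N(\delta,\mathcal{G},\|\cdot\|_{L_\infty(\Omega)})\le A\delta^{-d/m}$.
\item Since $\|\cdot\|_n\le\|\cdot\|_{L_\infty}$, the same bound holds for $H(\delta):=\log N(\delta,\mathcal{G},\|\cdot\|_n)$. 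It holds for every design, so no assumption on the $\bx_i$ beyond $\bx_i\in\Omega$ is needed.
\end{itemize}
Write $\alpha=d/m\in(0,2)$. Then the exponent in the statement is $1-\frac{d}{2m}=1-\frac{\alpha}{2}$, which matches the form of van de Geer's lemma for entropy $\delta^{-\alpha}$ with $\alpha<2$.

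Next I would control the modulus of continuity on shells. Under Assumption \ref{assum_noise}, the process $g\mapsto\sqrt n\langle\epsilon,g\rangle_n$ is sub-Gaussian with respect to $\|\cdot\|_n$. Chaining then gives, for each radius $\delta$, a tail bound for $\sup_{g\in\mathcal{G},\,\|g\|_n\le\delta}\sqrt n|\langle\epsilon,g\rangle_n|$. The relevant Dudley integral is
\[
\int_0^\delta H^{1/2}(u)\,{\rm d}u\lesssim \int_0^\delta u^{-\alpha/2}\,{\rm d}u\asymp\delta^{1-\alpha/2},
\]
which is finite because $\alpha<2$. The resulting bound is of the form: with probability at least $1-C\exp(-t^2\delta^{2-\alpha}/(C\delta^2))$, the supremum is at most $t\,\delta^{1-\alpha/2}$. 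I would then peel over dyadic shells $\mathcal{G}_s=\{g\in\mathcal{G}:2^{-s-1}C<\|g\|_n\le 2^{-s}C\}$ for $s\ge0$. On $\mathcal{G}_s$ the denominator satisfies $\|g\|_n^{1-\alpha/2}\asymp 2^{-s(1-\alpha/2)}$, so applying the shell bound with $\delta=2^{-s}C$ and threshold $t\,2^{-s(1-\alpha/2)}$ controls the ratio on that shell.

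A union bound over $s$ sums the failure probabilities. Each shell fails with probability at most $C\exp(-C_2t^2 2^{s\alpha})$, which is summable in $s$ for $t>C$. The total is at most $C_1\exp(-C_2t^2)$, which yields the stated inequality $\le t n^{-1/2}$. The case $\|g\|_n=0$ is trivial because then $\langle\epsilon,g\rangle_n=0$, so it can be excluded.

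The main obstacle is making the shell tail bounds uniform down to arbitrarily small $\|g\|_n$, so that the union bound over infinitely many shells still converges. The growing factor $2^{s\alpha}$ in the exponent is what saves this. I would verify it carefully through the sub-Gaussian chaining inequality (van de Geer's Lemma 8.2/Corollary 8.3), checking that the constant there depends only on the constants in Assumption \ref{assum_noise} and on $A$. Beyond this, the argument is exactly Lemma 8.4 of \citet{geer2000empirical} with $\alpha=d/m$.
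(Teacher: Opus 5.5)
Your proposal is correct and takes the same approach as the paper, which gives no argument beyond citing Lemma 8.4 of \citet{geer2000empirical}. Your homogeneity reduction to the unit ball of $H^m(\Omega)$, together with the sup-norm entropy bound $A\delta^{-d/m}$ (so $\alpha=d/m<2$), is exactly what makes that citation apply, and your chaining-plus-peeling sketch reproduces the proof of that lemma.
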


			\noindent\textit{Proof of Theorem \ref{thm_KRRrates}.}
			Because $\hat f_n$ is the solution to the optimization problem \eqref{eq_KRRpredict_eq}, we have
			\begin{align}\label{eq_lemapprx2NNbi1}
				\|\hat f_n - \by\|_n^2 + \lambda_n\|\hat f_n\|_{\cN_\Phi(\Omega)}^2 \leq \|f_n^* - \by\|_n^2 + \lambda_n\|f_n^*\|_{\cN_\Phi(\Omega)}^2,
			\end{align}
			where $f_n^*$ is as in Lemma \ref{lem:approx1NN}. By rearrangement, \eqref{eq_lemapprx2NNbi1} implies
			\begin{align}\label{eq_lemapprx2Sobonoisebi1}
				\|f^*-\hat f_n\|_n^2 + C_5\lambda_n\|\hat f_n\|_{\cN_\Phi(\Omega)}^2 \leq  \|f^*-f_n^*\|_n^2 + C_6\lambda_n\|f_n^*\|_{\cN_\Phi(\Omega)}^2 + 2\langle \bvarepsilon, \hat f_n - f_n^*\rangle_n
			\end{align}
			for some positive constants $C_5$ and $C_6$. Since both $\hat f_n$ and $f_n^*$ are in $H^m(\Omega)$, Lemma \ref{lemmaefsmall} implies that 
			\begin{align}\label{eq_pfThmRKHS_noiseinner}
				\left|\langle \bvarepsilon, \hat f_n - f_n^*\rangle_n\right| = & O_{\PP}\left(n^{-\frac{1}{2}}\| \hat f_n - f_n^*\|_n^{1 - \frac{d}{2m}}\|\hat f_n - f_n^*\|_{H^m(\Omega)}^{\frac{d}{2m}}\right)\nonumber\\
				= & O_{\PP}\left(\phi_n^{-\frac{(2m-d)d}{4m}}n^{-\frac{1}{2}}\| \hat f_n - f_n^*\|_n^{1 - \frac{d}{2m}}\|f_n - f_n^*\|_{\cN_\Phi(\Omega)}^{\frac{d}{2m}}\right),
			\end{align}
			where the last inequality is by \eqref{eq_pfbound_ghm1}. Hence, by \eqref{eq_lemapprx2Sobonoisebi1} and \eqref{eq_pfThmRKHS_noiseinner}, we obtain
			\begin{align}\label{eq_lemapprx2Sobonoisebi2}
				& \|f^*-\hat f_n\|_n^2 + C_5\lambda_n\|\hat f_n\|_{\cN_\Phi(\Omega)}^2\nonumber\\
				\leq  & \|f^*-f_n^*\|_n^2 + C_6\lambda_n\|f_n^*\|_{\cN_\Phi(\Omega)}^2 + O_{\PP}\left(\phi_n^{-\frac{(2m-d)d}{4m}}n^{-\frac{1}{2}}\| \hat f_n - f_n^*\|_n^{1 - \frac{d}{2m}}\|\hat f_n - f_n^*\|_{\cN_\Phi(\Omega)}^{\frac{d}{2m}}\right)\nonumber\\
				\leq  & \|f^*-f_n^*\|_n^2 + C_6\lambda_n\|f_n^*\|_{\cN_\Phi(\Omega)}^2 + O_{\PP}\left(\phi_n^{-\frac{(2m-d)d}{4m}}n^{-\frac{1}{2}}\| \hat f_n - f_n^*\|_n^{1 - \frac{d}{2m}}\|f_n^*\|_{\cN_\Phi(\Omega)}^{\frac{d}{2m}}\right)\nonumber\\
				& + O_{\PP}\left(\phi_n^{-\frac{(2m-d)d}{4m}}n^{-\frac{1}{2}}\| \hat f_n - f_n^*\|_n^{1 - \frac{d}{2m}}\|\hat f_n\|_{\cN_\Phi(\Omega)}^{\frac{d}{2m}}\right),
			\end{align}
			where the last inequality is by the basic inequality $(a+b)^{\frac{d}{2m}}\leq a^{\frac{d}{2m}} + b^{\frac{d}{2m}}$ with $a,b\geq 0$ and $m>d/2$.

			For quasi-uniform design points, we can apply Lemma 28 in \cite{wang2022gaussian}, which yields
			\begin{align}\label{eq_diffofff1overfix}
				\|f^*-f_n^*\|_n \lesssim & n^{-\frac{m_0}{d}}\|f_n^*-f^*\|_{N_{\Psi}(\Omega)} + \|f^*-f_n^*\|_{L_2(\Omega)}\nonumber\\
				\lesssim & n^{-\frac{m_0}{d}}\|f^*\|_{N_{\Psi}(\Omega)} +n^{-\frac{m_0}{d}}\|f^*_n\|_{N_{\Psi}(\Omega)} + \|f^*-f_n^*\|_{L_2(\Omega)}\nonumber\\
				\lesssim & n^{-\frac{m_0}{d}}\|f^*\|_{N_{\Psi}(\Omega)} +n^{-\frac{m_0}{d}}\|f^*_n\|_{L_2(\Omega)}^{\frac{m-m_0}{m}}\|f^*_n\|_{H^m(\Omega)}^{\frac{m_0}{m}} + \|f^*-f_n^*\|_{L_2(\Omega)}\nonumber\\
				\lesssim & n^{-\frac{m_0}{d}}\|f^*\|_{N_{\Psi}(\Omega)} + n^{-\frac{m_0}{d}}\phi_n^{-\frac{(2m-d)m_0}{2m}}\|f^*_n\|_{L_2(\Omega)}^{\frac{m-m_0}{m}}\|f^*_n\|_{\cN_\Phi(\Omega)}^{\frac{m_0}{m}} + \|f^*-f_n^*\|_{L_2(\Omega)},
			\end{align}
			where the second inequality is by the triangle inequality, the third inequality is by the Gagliardo–Nirenberg interpolation inequality, and the last inequality is by \eqref{eq_pfbound_ghm1}.
			
			By Lemma \ref{lem:approx1NN}, \eqref{eq_diffofff1overfix} implies
			\begin{align}\label{eq_diffofff1overfix1}
				& \|f^*-f_n^*\|_n^2 + C_6\lambda_n\|f_n^*\|_{\cN_\Phi(\Omega)}^2 \nonumber\\
				\lesssim & n^{-\frac{2m_0}{d}} + n^{-\frac{2m_0}{d}}\phi_n^{-\frac{(2m-d)m_0}{m}}\lambda_n^{-\frac{m_0}{m}}\left(\lambda_n\phi_n^{-d}+ \lambda_n^{\frac{m_0}{m}}\phi_n^{\frac{(2m-d)m_0}{m}}\right)^{\frac{m_0}{m}} + \lambda_n\phi_n^{-d}+ \lambda_n^{\frac{m_0}{m}}\phi_n^{\frac{(2m-d)m_0}{m}}\nonumber\\
				\lesssim & n^{-\frac{2m_0}{d}} + n^{-\frac{2m_0}{d}}\phi_n^{-\frac{(2m-d)m_0}{m}}\lambda_n^{-\frac{m_0}{m}}\left(\lambda_n^{\frac{m_0}{m}}\phi_n^{-^{\frac{m_0d}{m}}}+ \lambda_n^{\frac{m_0^2}{m^2}}\phi_n^{\frac{(2m-d)m_0^2}{m^2}}\right) + \lambda_n\phi_n^{-d}+ \lambda_n^{\frac{m_0}{m}}\phi_n^{\frac{(2m-d)m_0}{m}}\nonumber\\
				\lesssim & n^{-\frac{2m_0}{d}}\phi_n^{-2m_0} + n^{-\frac{2m_0}{d}}\phi_n^{-\frac{(2m-d)m_0}{m}(1-\frac{m_0}{m})}\lambda_n^{-\frac{m_0}{m}(1-\frac{m_0}{m})} + \lambda_n\phi_n^{-d}+ \lambda_n^{\frac{m_0}{m}}\phi_n^{\frac{(2m-d)m_0}{m}}\nonumber\\
				\leq & T_1^2, 
			\end{align}
			where 
			\begin{align*}
				T_1 = & n^{-\frac{m_0}{d}}\phi_n^{-m_0} + n^{-\frac{m_0}{d}}\phi_n^{-\frac{(2m-d)m_0}{2m}(1-\frac{m_0}{m})}\lambda_n^{-\frac{m_0}{2m}(1-\frac{m_0}{m})} + \lambda_n^{1/2}\phi_n^{-d/2}+ \lambda_n^{\frac{m_0}{2m}}\phi_n^{\frac{(2m-d)m_0}{2m}}.
			\end{align*}
			In \eqref{eq_diffofff1overfix1}, the second inequality is by the basic inequality $(a+b)^{\frac{m_0}{m}} \leq a^{\frac{m_0}{m}} + b^{\frac{m_0}{m}}$ with $a,b\geq 0$ and $m\geq m_0$, and the third inequality is because $\phi_n\lesssim 1$. Combining \eqref{eq_lemapprx2Sobonoisebi2} and \eqref{eq_diffofff1overfix1} implies
			\begin{align}\label{eq_lemapprx2Sobonoisebi3}
				\|f^*-\hat f_n\|_n^2 + C_5\lambda_n\|\hat f_n\|_{\cN_\Phi(\Omega)}^2 \lesssim{} & T_1^2 + O_{\PP}\left(\phi_n^{-\frac{(2m-d)d}{4m}}n^{-\frac{1}{2}}\| \hat f_n - f_n^*\|_n^{1 - \frac{d}{2m}}\|f_n^*\|_{\cN_\Phi(\Omega)}^{\frac{d}{2m}}\right)\nonumber\\
				& + O_{\PP}\left(\phi_n^{-\frac{(2m-d)d}{4m}}n^{-\frac{1}{2}}\| \hat f_n - f_n^*\|_n^{1 - \frac{d}{2m}}\|\hat f_n\|_{\cN_\Phi(\Omega)}^{\frac{d}{2m}}\right).
			\end{align}
			By the Cauchy-Schwarz inequality, 
			\begin{align*}
				\|f^*-\hat f_n\|_n^2 = & \|f_n^*-\hat f_n\|_n^2 + \|f_n^*-f^*\|_n^2 + 2\langle f_n^*-\hat f_n, f_n^*-f^*\rangle_n\nonumber\\
				\geq & \frac{1}{2}\|f_n^*-\hat f_n\|_n^2 - \|f_n^*-f^*\|_n^2,
			\end{align*}
			which, together with \eqref{eq_lemapprx2Sobonoisebi3} and \eqref{eq_diffofff1overfix}, implies
			\begin{align}\label{eq_lemapprx2Sobonoisebi4}
				\|f_n^*-\hat f_n\|_n^2 + C_6\lambda_n\|\hat f_n\|_{\cN_\Phi(\Omega)}^2
				\lesssim{} & T_1^2 + O_{\PP}\left(\phi_n^{-\frac{(2m-d)d}{4m}}n^{-\frac{1}{2}}\| \hat f_n - f_n^*\|_n^{1 - \frac{d}{2m}}\|f_n^*\|_{\cN_\Phi(\Omega)}^{\frac{d}{2m}}\right)\nonumber\\
				& + O_{\PP}\left(\phi_n^{-\frac{(2m-d)d}{4m}}n^{-\frac{1}{2}}\| \hat f_n - f_n^*\|_n^{1 - \frac{d}{2m}}\|\hat f_n\|_{\cN_\Phi(\Omega)}^{\frac{d}{2m}}\right).
			\end{align}
			It can be seen that \eqref{eq_lemapprx2Sobonoisebi3} implies that at least one of the following inequality holds:
			\begin{align}
				\|f_n^*-\hat f_n\|_n^2 + C_6\lambda_n\|\hat f_n\|_{\cN_\Phi(\Omega)}^2 \lesssim & T_1^2,\label{eq_krr_empC1}\\
				\|f_n^*-\hat f_n\|_n^2 + C_6\lambda_n\|\hat f_n\|_{\cN_\Phi(\Omega)}^2 \lesssim &O_{\PP}\left(\phi_n^{-\frac{(2m-d)d}{4m}}n^{-\frac{1}{2}}\| \hat f_n - f_n^*\|_n^{1 - \frac{d}{2m}}\|f_n^*\|_{\cN_\Phi(\Omega)}^{\frac{d}{2m}}\right),\label{eq_krr_empC21}\\
				\|f_n^*-\hat f_n\|_n^2 + C_6\lambda_n\|\hat f_n\|_{\cN_\Phi(\Omega)}^2 \lesssim &O_{\PP}\left(\phi_n^{-\frac{(2m-d)d}{4m}}n^{-\frac{1}{2}}\| \hat f_n - f_n^*\|_n^{1 - \frac{d}{2m}}\|\hat f_n\|_{\cN_\Phi(\Omega)}^{\frac{d}{2m}}\right).\label{eq_krr_empC31}
			\end{align}
			Hence, we consider three cases separately.
			
			\noindent\textbf{Case 1. \eqref{eq_krr_empC1} holds.}
			In this case, we directly have
			\begin{align}\label{eq_krr_empC11}
				\|f_n^*-\hat f_n\|_n \leq T_1 \mbox{ and } \|\hat f_n\|_{\cN_\Phi(\Omega)}\leq \lambda_n^{-1/2}T_1.
			\end{align}
			
			\noindent\textbf{Case 2. \eqref{eq_krr_empC21} holds.} 
			In this case, solving \eqref{eq_krr_empC21} yields
			\begin{align}\label{eq_krr_empC2L2}
				\|f_n^*-\hat f_n\|_n  \lesssim & O_p\left(\phi_n^{-\frac{(2m-d)d}{4m}}n^{-\frac{1}{2}}\|f_n^*\|_{\cN_\Phi(\Omega)}^{\frac{d}{2m}}\right)^{\frac{2m}{2m+d}}\nonumber\\
				\lesssim & O_p\left(\phi_n^{-\frac{(2m-d)d}{2(2m+d)}}n^{-\frac{m}{2m+d}}\left(\phi_n^{-d}+ \lambda_n^{\frac{m_0}{m}-1}\phi_n^{\frac{(2m-d)m_0}{m}}\right)^{\frac{d}{2(2m+d)}}\right)\nonumber\\
				=: & O_p(T_2)
			\end{align}
			and 
			\begin{align}\label{eq_krr_empC2L3}
				\|\hat f_n\|_{\cN_\Phi(\Omega)}  \lesssim & O_p\left(\lambda_n^{-1/2} \left(\phi_n^{-\frac{(2m-d)d}{4m}}n^{-\frac{1}{2}}\|f_n^*\|_{\cN_\Phi(\Omega)}^{\frac{d}{2m}}\right)^{\frac{2m}{2m+d}}\right)
				\lesssim O_p\left(\lambda_n^{-1/2}T_2\right),
			\end{align}
			where the second inequality of \eqref{eq_krr_empC2L2} is by Lemma \ref{lem:approx1NN}.
			
			\noindent\textbf{Case 3. \eqref{eq_krr_empC31} holds.} 
			By Young's inequality, we have 
			\begin{align}\label{eq_krr_emp_middle_1}
				& \phi_n^{-\frac{(2m-d)d}{4m}} n^{-1/2}\|\hat f_n-f_n^*\|_n^{1-\frac{d}{2m}}\|\hat f_n\|_{\cN_\Phi}^{\frac{d}{2m}} \nonumber\\
				\leq & \frac{2m-d}{4m}c^2\|\hat f_n-f_n^*\|_n^2 + \frac{2m+d}{4m}\left(\frac{1}{c}\phi_n^{-\frac{(2m-d)d}{4m}} n^{-1/2}\|\hat f_n\|_{\cN_\Phi}^{\frac{d}{2m}}\right)^{\frac{4m}{2m+d}},
			\end{align}
			where $c>0$ is a constant small enough so that it can be absorbed by the first term in the left hand side of \eqref{eq_krr_empC31}. Then, absorb the first term in \eqref{eq_krr_emp_middle_1} into the left-hand side of \eqref{eq_krr_empC31} to obtain
			\begin{align}\label{eq_afterYoung}
				\|\hat f_n-f_n^*\|_n^2 + C_7\lambda_n\|\hat f_n\|_{\cN_\Phi}^2
				\lesssim & O_p\left(\phi_n^{-\frac{(2m-d)d}{4m}} n^{-1/2}\|\hat f_n\|_{\cN_\Phi}^{\frac{d}{2m}}\right)^{\frac{4m}{2m+d}}\nonumber\\
				= & O_p\left(\phi_n^{-\frac{(2m-d)d}{2m+d}} n^{-\frac{2m}{2m+d}}\|\hat f_n\|_{\cN_\Phi}^{\frac{2d}{2m+d}}\right).
			\end{align}
			Solving \eqref{eq_afterYoung} implies
			\begin{align}\label{eq_krr_empC3L2}
				\|f_n^*-\hat f_n\|_n  \lesssim & O_p\left(\phi_n^{-\frac{(2m-d)d}{4m}} n^{-1/2}\left(\lambda_n^{-1}\phi_n^{-\frac{(2m-d)d}{2m+d}} n^{-\frac{2m}{2m+d}}\right)^{\frac{(2m+d)d}{8m^2}}\right)^{\frac{2m}{2m+d}}\nonumber\\
				= & O_p\left(\lambda_n^{-\frac{d}{4m}}\phi_n^{-\frac{(2m-d)d}{4m}}n^{-1/2} \right)\nonumber\\
				=: & O_p(T_3),
			\end{align}
			and 
			\begin{align}\label{eq_krr_empC3L3}
				\|\hat f_n\|_{\cN_\Phi(\Omega)}  \lesssim &  O_p\left(\lambda_n^{-1}\phi_n^{-\frac{(2m-d)d}{2m+d}} n^{-\frac{2m}{2m+d}}\right)^{\frac{2m+d}{4m}}\nonumber\\
				= & O_p\left(\lambda_n^{-\frac{2m+d}{4m}}\phi_n^{-\frac{(2m-d)d}{4m}} n^{-1/2}\right) = O_p\left(\lambda_n^{-1/2}T_3\right).
			\end{align}
			
			It remains to bound the difference between the empirical norm and $L_2$ norm of $f^*-\hat f_n$, which can be done by using Lemma \ref{lemmafixdesign}:
			\begin{align}\label{eq_krr_emp_L2_diff}
				& \|f_n^*-\hat f_n\|_{L_2}\nonumber\\
				\lesssim & \|f^*-\hat f_n\|_n + n^{-\frac{m}{d}}\|f_n^*-\hat f_n\|_{H_m(\Omega)}\nonumber\\
				\lesssim & \|f^*-\hat f_n\|_n + n^{-\frac{m}{d}}\phi_n^{-\frac{2m-d}{2}}\|f_n^*-\hat f_n\|_{\cN_\Phi(\Omega)}\nonumber\\
				\lesssim & \|f^*-\hat f_n\|_n + n^{-\frac{m}{d}}\phi_n^{-\frac{2m-d}{2}}\|f_n^*\|_{\cN_\Phi(\Omega)}+n^{-\frac{m}{d}}\phi_n^{-\frac{2m-d}{2}}\|\hat f_n\|_{\cN_\Phi(\Omega)}\nonumber\\
				\lesssim & \|f^*-\hat f_n\|_n + n^{-\frac{m}{d}}\phi_n^{-\frac{2m-d}{2}}\lambda_n^{-1/2}\left(\lambda_n\phi_n^{-d}+ \lambda_n^{\frac{m_0}{m}}\phi_n^{\frac{(2m-d)m_0}{m}}\right)^{1/2}+n^{-\frac{m}{d}}\phi_n^{-\frac{2m-d}{2}}\|\hat f_n\|_{\cN_\Phi(\Omega)}\nonumber\\
				\lesssim & \|f^*-\hat f_n\|_n + n^{-\frac{m}{d}}\phi_n^{-m} + n^{-\frac{m}{d}}\lambda_n^{-\frac{1}{2}(1-\frac{m_0}{m})}\phi_n^{-\frac{2m-d}{2}(1-\frac{m_0}{m})}+n^{-\frac{m}{d}}\phi_n^{-\frac{2m-d}{2}}\|\hat f_n\|_{\cN_\Phi(\Omega)},
			\end{align}
			where the second inequality is by \eqref{eq_pfbound_ghm1}, the third inequality is by the triangle inequality, the fourth inequality is by Lemma \ref{lem:approx1NN}, and the last inequality is by the basic inequality $(a+b)^{\frac{1}{2}} \leq a^{\frac{1}{2}} + b^{\frac{1}{2}}$ with $a,b\geq 0$.
			
			By the triangle inequality,
			\begin{align*}
				\|f-\hat f_n\|_{L_2} \leq \|f_n^*-\hat f_n\|_{L_2} + \|f_n^*-f\|_{L_2},
			\end{align*}
			which, together with Lemma \ref{lem:approx1NN} and \eqref{eq_krr_emp_L2_diff}, implies 
			\begin{align}\label{eq_krr_emp_L2_diff2}
				& \|f-\hat f_n\|_{L_2}\nonumber\\
				\lesssim & \|f^*-\hat f_n\|_n + n^{-\frac{m}{d}}\phi_n^{-m} + n^{-\frac{m}{d}}\lambda_n^{-\frac{1}{2}(1-\frac{m_0}{m})}\phi_n^{-\frac{2m-d}{2}(1-\frac{m_0}{m})}+n^{-\frac{m}{d}}\phi_n^{-\frac{2m-d}{2}}\|\hat f_n\|_{\cN_\Phi(\Omega)}\nonumber\\
				& + \left(\lambda_n\phi_n^{-d}+ \lambda_n^{\frac{m_0}{m}}\phi_n^{\frac{(2m-d)m_0}{m}}\right)^{1/2}\nonumber\\
				\lesssim & \|f^*-\hat f_n\|_n + n^{-\frac{m}{d}}\phi_n^{-m} + n^{-\frac{m}{d}}\lambda_n^{-\frac{1}{2}(1-\frac{m_0}{m})}\phi_n^{-\frac{2m-d}{2}(1-\frac{m_0}{m})}+n^{-\frac{m}{d}}\phi_n^{-\frac{2m-d}{2}}\|\hat f_n\|_{\cN_\Phi(\Omega)} + T_1.
			\end{align}
			Plugging \eqref{eq_krr_empC11}, \eqref{eq_krr_empC2L2}, \eqref{eq_krr_empC2L3}, \eqref{eq_krr_empC3L2}, \eqref{eq_krr_empC3L3} into \eqref{eq_krr_emp_L2_diff2} finishes the proof.

			\section{Proof of Auxiliary Lemmas in the Supplementary Material}
			
			\subsection{Proof of Lemma \ref{LEM:BOFKMMAX}}
			
			If $\Psi$ is a Mat\'ern correlation function, the results of Lemma \ref{LEM:BOFKMMAX} have been shown by Lemma 22 of \cite{wang2022gaussian}. 
			
			It remains to show that $\Psi$ is a generalized Wendland correlation function. In this case, $\Psi(\|\bs-\bt\|_2) = 0$ as long as $\|\bs-\bt\|_2\geq C_1$ for some constant $C_1$. Without loss of generality, we let $C_1 > 1$. Define $\cB(\bx,r) = \{\bx'\in \RR^d: \|\bx-\bx'\|_2\leq r\}$ be the ball centered at $\bx$ with radius $r$. For any $\bx_j,\bx_k\in \bX$, $\Psi(\|\bx_j-\bx_k\|_2) > 0$ only if $\bx_k\in \cB(\bx_j,C_1)$. For any $\bx_j\in \bX$ if there are $N$ points $\bx_k\in \bX$ such that $\bx_k\in \cB(\bx_j,C_1)$, by the fact that $q_{\bX}\leq 1 < C_1$, we have 
			\begin{align*}
				\bigcup_{\bx_k\in \cB(\bx_j,C_1)}\cB(\bx_k,q_{\bX}/2) \subseteq \cB(\bx_j,2C_1).
			\end{align*}
			Since the separation radius is $q_{\bX}$, the balls $\cB(\bx_j,q_{\bX}/2)$ are disjoint for $\bx_j\in \bX$. By the comparison of volume, it holds that 
			\begin{align*}
				{\rm Vol}\left(\bigcup_{\bx_k\in \cB(\bx_j,C_1)}\cB(\bx_k,q_{\bX}/2)\right) \leq {\rm Vol}\left(\cB(\bx_j,2C_1)\right),
			\end{align*}
			which further implies
			\begin{align*}
				Nq_{\bX}^d \leq C_2
			\end{align*}
			for some constant $C_2>0$. Thus, $N \leq C_2q_{\bX}^{-d}$, which implies that for each row of the correlation matrix $(\Psi(\bx_j-\bx_k))_{jk}$ has at most $C_2q_{\bX}^{-d}$ non-zero elements. 
			
			By the Ger{\u s}hgorin circle theorem \citep[Corollary~6.1.5]{HornJohnson12}, we have
			\begin{align*}
				\Lambda_{\bX} \leq & \max_{1\leq j\leq n }\sum_{k=1}^n \Psi(\bx_j-\bx_k)\nonumber\\ 
				\leq & N \max_{1\leq j,k\leq n }\Psi(\bx_j-\bx_k) \nonumber\\ 
				\leq &  C_3q_{\bX}^{-d}, 
			\end{align*}
			which finishes the proof for the generalized Wendland correlation function case.
			
			\subsection{Proof of Lemma \ref{LEMMAERRORWNUG}}\label{app:pflemwnug}
			
			We first present the following fixed-design inequality, which follows from Lemma 26 of \citet{wang2022gaussian} under quasi-uniform designs. In the rest of this supplementary material, \(H^m(\Omega)\) denotes the Sobolev space with smoothness \(m\), with norm \(\|\cdot\|_{H^m(\Omega)}\). For simplicity, set \(h_n = h_{\bX,\Omega}\). Also let \(\|g\|_n\) be the empirical norm
			\begin{align*}
				\|g\|_n = \left(\frac{1}{n}\sum_{j=1}^n g(\bx_j)^2\right)^{\frac{1}{2}}.
			\end{align*}
			
			\begin{lemma}\label{lemmafixdesign}
				Suppose $g\in H^m(\Omega)$ for some $m > d/2$. If Assumption \ref{assum_quasiuniform} holds, then we have
				\begin{align*}
					\|g\|_{L_2(\Omega)} \leq C(h_n^m\|g\|_{H^m(\Omega)} + \|g\|_n)
				\end{align*}
				holds for all $n$, 	where $C$ is a positive constant not depending on $g$ and $n$. 
			\end{lemma}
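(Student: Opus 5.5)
The plan is to prove the sampling inequality by a local polynomial reproduction / norming-set argument on a cover of $\Omega$ by cones of radius comparable to $h_n$, which is the standard route of Wendland (2004, Ch.~11) and Narcowich--Ward--Wendland. Equivalently, I would invoke Lemma 26 of \citet{wang2022gaussian}, or the classical sampling inequality of Arcang\'eli--L\'opez de Silanes--Torrens, and verify that their hypotheses hold under Assumptions \ref{assum_region} and \ref{assum_quasiuniform}. Since the paper says this lemma follows from that result, I will present the argument as a verification plus a sketch of the underlying mechanism.

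The first step is geometric. Assumption \ref{assum_region} gives the interior cone condition with parameters $(\alpha,\mathcal{R})$. I fix a constant $K$ depending only on $d$, $m$ and $\alpha$, and restrict to $n$ large enough that $Kh_n\le \mathcal{R}$; for the finitely many remaining $n$ the bound is trivial after enlarging $C$, using $\|g\|_{L_2}\lesssim\|g\|_{H^m}$. I then cover $\Omega$ by cones $\mathcal{C}_i$ of radius $r=Kh_n$ with bounded overlap, which is possible by a Vitali-type covering with uniformly bounded multiplicity. For $K$ large, each cone contains enough design points to form a norming set for polynomials of degree $<m$. Quasi-uniformity ($q_n\asymp h_n\asymp n^{-1/d}$) ensures that each cone contains at most $O(1)$ points, uniformly in $n$.

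The second step is local. On each cone I take the averaged Taylor polynomial $p_i$ of $g$ of degree $<m$. The Bramble--Hilbert lemma, scaled to radius $r$, gives
\[
\|g-p_i\|_{L_\infty(\mathcal{C}_i)}\lesssim r^{m-d/2}|g|_{H^m(\mathcal{C}_i)}, \qquad
\|g-p_i\|_{L_2(\mathcal{C}_i)}\lesssim r^{m}|g|_{H^m(\mathcal{C}_i)}.
\]
The first estimate uses $m>d/2$ and Sobolev embedding. The norming property then gives
\[
\|p_i\|_{L_\infty(\mathcal{C}_i)}\le 2\max_{\bx_j\in\mathcal{C}_i}|p_i(\bx_j)|
\le 2\max_j|g(\bx_j)|+2\|g-p_i\|_{L_\infty}.
\]
Hence
\[
\|g\|_{L_2(\mathcal{C}_i)}\lesssim r^{m}|g|_{H^m(\mathcal{C}_i)}+r^{d/2}\Big(\sum_{\bx_j\in\mathcal{C}_i} g(\bx_j)^2\Big)^{1/2}.
\]

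The third step is to sum the squares over $i$, using bounded overlap and the bound on the number of points per cone. This gives
\[
\|g\|_{L_2(\Omega)}\lesssim h_n^m\|g\|_{H^m(\Omega)}+h_n^{d/2}\Big(\sum_j g(\bx_j)^2\Big)^{1/2}.
\]
Finally, $h_n^{d/2}\asymp n^{-1/2}$ by quasi-uniformity, so the last term equals $\|g\|_n$ up to a constant, as claimed.

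The main obstacle is the uniformity of the norming-set constant: every cone must contain a polynomial-unisolvent, well-spread subset of points, with a constant independent of $n$ and of the cone. I would handle this with Wendland's Theorem 3.8, which states that a cone with $h\le$ (constant depending on $\alpha$ and $m$) $\times$ radius admits a stable local polynomial reproduction. The constants then depend only on $\alpha$, $m$ and $d$, which yields a $C$ independent of $g$ and $n$.
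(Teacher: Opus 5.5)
Your proposal takes the same route as the paper, which gives no independent argument and simply invokes Lemma 26 of \citet{wang2022gaussian}, a standard sampling inequality; quasi-uniformity then turns $h_n^{d/2}\bigl(\sum_j g(\bx_j)^2\bigr)^{1/2}$ into a multiple of $\|g\|_n$, exactly as in your last step. One correction to your mechanism sketch: for $d\ge 3$ there is no bounded-overlap cover of $\Omega$ by interior cones of radius $r=Kh_n$ near a flat boundary piece, such as a face of $[-1,1]^3$, because each such cone meets the boundary layer of thickness $t$ in volume $O((rt)^{d/2})=o(tr^{d-1})$ as $t\to0$, so unboundedly many cones would pile up; the standard remedy is Duchon's cover by subsets of $\Omega$ star-shaped with respect to balls of radius $\asymp h_n$, or balls of radius $Kh_n$ combined with a Sobolev extension of $g$, with the interior cone inside each ball used only to supply the norming set.
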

			
			Fix \(\bx\in\Omega\), set
			\[
			A_\Phi=\Rb_\Phi+\rho_n\Ib_n,\qquad\br_{\bx}=\rb_\Phi(\bx),\qquad \bu=A_\Phi^{-1}\br_{\bx},
			\]
			and define
			\[
			I(\bx)=\Phi(\bx-\bx)-\br_{\bx}^\top A_\Phi^{-1}\br_{\bx}.
			\]
			For \(\bt\in\Omega\), define
			\[
			g_{\bx}(\bt)=\Phi(\bx-\bt)-\br_{\bx}^\top A_\Phi^{-1}\rb_\Phi(\bt)=\Phi(\bx-\bt)-\bu^\top\rb_\Phi(\bt).
			\]
			Then \(g_{\bx}(\bx)=I(\bx)\), so
			\begin{align}\label{eq_GpovergI}
				I(\bx)\leq \|g_{\bx}\|_{L_\infty(\Omega)}.
			\end{align}
			
			We first control the native-space and empirical norms of \(g_{\bx}\). Since \(A_\Phi\bu=\br_{\bx}\),
			\begin{align}\label{eq_lemIgNnorm1}
				\|g_{\bx}\|_{\cN_\Phi(\Omega)}^2&= \Phi(\bx-\bx)-2\br_{\bx}^\top\bu+\bu^\top\Rb_\Phi\bu \nonumber\\
				&= \Phi(\bx-\bx)-\br_{\bx}^\top\bu-\rho_n\|\bu\|_2^2= I(\bx)-\rho_n\|\bu\|_2^2\leq I(\bx).
			\end{align}
			In particular, \(I(\bx)\geq0\) and
			\begin{align}\label{eq_u_bound_by_I}
				\rho_n\|\bu\|_2^2\leq I(\bx).
			\end{align}
			Moreover,
			\[
			(g_{\bx}(\bx_1),\ldots,g_{\bx}(\bx_n))^\top=\br_{\bx}-\Rb_\Phi\bu=\rho_n\bu,
			\]
			and therefore
			\begin{align}\label{eq_empirical_norm_g}
				\|g_{\bx}\|_n^2=\frac{\rho_n^2}{n}\|\bu\|_2^2\leq \frac{\rho_n}{n}I(\bx).
			\end{align}
			
			Next we record the scale-dependent comparison between \(H^m(\Omega)\) and \(\cN_\Phi(\Omega)\). By \eqref{eq_FtransPhi1} and \eqref{eq_FtransPhi2}, for \(\Phi=\Phi_{\phi_n}\),
			\[
			(1+\|\bomega\|_2^2)^m\leq C\phi_n^{-2m+d}\{\cF(\Phi)(\bomega)\}^{-1},\qquad \bomega\in\RR^d.
			\]
			Thus, using the natural extension of the native space from \(\Omega\) to \(\RR^d\),
			\begin{align}\label{eq_pfbound_ghm1}
				\|v\|_{H^m(\Omega)}\leq C\phi_n^{-m+d/2}\|v\|_{\cN_\Phi(\Omega)},\qquad v\in\cN_\Phi(\Omega).
			\end{align}
			
			By the Gagliardo--Nirenberg interpolation inequality for Sobolev spaces \citep{BrezisMironescu19,leoni2017first}, because \(m>d/2\),
			\begin{align}\label{eq_interpolation}
				\|g_{\bx}\|_{L_\infty(\Omega)}\leq C\|g_{\bx}\|_{L_2(\Omega)}^{1-\frac{d}{2m}}\|g_{\bx}\|_{H^m(\Omega)}^{\frac{d}{2m}}.
			\end{align}
			Combining \eqref{eq_lemIgNnorm1}, \eqref{eq_pfbound_ghm1}, and \eqref{eq_interpolation} gives
			\begin{align}\label{eq_interpolation_NPhi}
				\|g_{\bx}\|_{L_\infty(\Omega)}\leq C\phi_n^{-\frac{d(2m-d)}{4m}}\|g_{\bx}\|_{L_2(\Omega)}^{1-\frac{d}{2m}}I(\bx)^{\frac{d}{4m}}.
			\end{align}
			
			Lemma \ref{lemmafixdesign}, \eqref{eq_pfbound_ghm1}, \eqref{eq_lemIgNnorm1}, and \eqref{eq_empirical_norm_g} imply
			\begin{align}\label{eq:Gpoverg1}
				\|g_{\bx}\|_{L_2(\Omega)}&\leq C\left(h_n^m\|g_{\bx}\|_{H^m(\Omega)}+\|g_{\bx}\|_n\right)\nonumber\\
				&\leq C\left(h_n^m\phi_n^{-m+d/2}+\left(\frac{\rho_n}{n}\right)^{1/2}\right)I(\bx)^{1/2}.
			\end{align}
			Using \eqref{eq_GpovergI}, \eqref{eq_interpolation_NPhi}, and \eqref{eq:Gpoverg1}, we obtain
			\begin{align}\label{eq_Ixbound_1}
				I(\bx)&\leq C\phi_n^{-\frac{d(2m-d)}{4m}}
				\|g_{\bx}\|_{L_2(\Omega)}^{1-\frac{d}{2m}}I(\bx)^{\frac{d}{4m}}\nonumber\\
				&\leq C\phi_n^{-\frac{d(2m-d)}{4m}}\left(h_n^m\phi_n^{-m+d/2}+\left(\frac{\rho_n}{n}\right)^{1/2}\right)^{1-\frac{d}{2m}}I(\bx)^{1/2}.
			\end{align}
			If \(I(\bx)=0\), the desired result is immediate. Otherwise, divide \eqref{eq_Ixbound_1} by \(I(\bx)^{1/2}\) and square both sides:
			\[
			I(\bx)\leq C\phi_n^{-\frac{d(2m-d)}{2m}}\left(h_n^m\phi_n^{-m+d/2}+\left(\frac{\rho_n}{n}\right)^{1/2}\right)^{2(1-\frac{d}{2m})}.
			\]
			Since \(m>d/2\), \((a+b)^{2(1-d/(2m))}\leq C(a^2+b^2)^{1-d/(2m)}\) for \(a,b\geq0\). Taking \(a=h_n^m\phi_n^{-m+d/2}\) and \(b=(\rho_n/n)^{1/2}\) yields
			\[
			I(\bx)\leq C\phi_n^{-\frac{d(2m-d)}{2m}}\left(h_n^{2m}\phi_n^{-2m+d}+\frac{\rho_n}{n}\right)^{1-\frac{d}{2m}}.
			\]
			This proves Lemma \ref{LEMMAERRORWNUG}.

			\subsection{Proof of Lemma \ref{lem:approx1NN}}\label{app_subsec_pfL2_approx}
			
			Let $\tilde f_n$ be the solution to the optimization problem
			\begin{align*}
				\min_{g\in \cN_\Phi(\RR^d)}\|f-g\|_{L_2(\RR^d)}^2 + \lambda_n\|g\|_{\cN_\Phi(\RR^d)}^2.
			\end{align*}
			Since $\Omega$ is compact with Lipschitz boundary, the Sobolev extension theorem and the natural extension of RKHS imply that there exists a constant $C_1$ such that (with a slight abuse of notation, we still use $f$ to denote the extension of $f$)
			\begin{align}\label{eq_pf_approxRKHS_1}
				\|f_n^* - f\|_{L^2(\Omega)}^2 + \lambda_n \|f_n^*\|_{\cN_\Phi(\Omega)}^2 \leq & \|f-\tilde f_n\|_{L_2(\Omega)}^2 + \lambda_n\|\tilde f_n\|_{\cN_\Phi(\Omega)}^2\nonumber\\
				\leq & C_1\left(\|\tilde f_n - f\|_{L^2(\RR^d)}^2 + \lambda_n \|\tilde f_n\|_{\cN_\Phi(\RR^d)}^2\right).
			\end{align}
			Therefore, it suffices to bound
			\begin{align*}
				\|\tilde f_n - f\|_{L^2(\RR^d)}^2 + \lambda_n \|\tilde f_n\|_{\cN_\Phi(\RR^d)}^2,
			\end{align*}
			which, by the Fourier inversion theorem, can be further bounded by
			\begin{align}\label{eq_pf_approxRKHS_decomp}
				& \|\tilde f_n - f\|_{L^2(\RR^d)}^2 + \lambda_n \|\tilde f_n\|_{\cN_\Phi(\RR^d)}^2\nonumber\\
				= & \int_{\RR^d} |\mathcal{F}(f)(\bomega)-\mathcal{F}(\tilde f_n)(\bomega)|^2 + \lambda_n \frac{|\mathcal{F}(\tilde f_n)(\bomega)|^2}{\cF(\Phi)(\bomega)}{\rm d}\bomega\nonumber\\
				= & \int_{\RR^d} \frac{\lambda_n\cF(\Phi)(\bomega)^{-1}}{1+\lambda_n\cF(\Phi)(\bomega)^{-1}}|\mathcal{F}(f)(\bomega)|^2{\rm d}\bomega\nonumber\\
				= & \int_{\|\bomega\|_2 \leq \phi_n^{-1}w_0}  \frac{\lambda_n\cF(\Phi)(\bomega)^{-1}}{1+\lambda_n\cF(\Phi)(\bomega)^{-1}}|\mathcal{F}(f)(\bomega)|^2{\rm d}\bomega + \int_{\phi_n^{-1}w_0 < \|\bomega\|_2 } \frac{\lambda_n\cF(\Phi)(\bomega)^{-1}}{1+\lambda_n\cF(\Phi)(\bomega)^{-1}}|\mathcal{F}(f)(\bomega)|^2{\rm d}\bomega\nonumber\\
				= & \cI_1 + \cI_2.
			\end{align}
			The term $\cI_1$ in \eqref{eq_pf_approxRKHS_decomp} can be bounded by 
			\begin{align}\label{eq_pf_approxRKHS_I1}
				\cI_1 \leq & \int_{\|\bomega\|_2 \leq \phi_n^{-1}w_0}  \frac{\lambda_nc_3^{-1}\phi_n^{-d}}{1+c_4^{-1}\lambda_n\phi_n^{-d}}|\mathcal{F}(f)(\bomega)|^2{\rm d}\bomega    \leq C_2\lambda_n\phi_n^{-d}\int_{\|\bomega\|_2 \leq \phi_n^{-1}w_0}|\mathcal{F}(f)(\bomega)|^2{\rm d}\bomega,
			\end{align}
			where the first inequality is by \eqref{eq_FtransPhi2}, and the second inequality is by $\lambda_n\leq \phi_n^d$.
			
			Next, we consider $\cI_2$. By \eqref{eq_FtransPhi1}, we have
			\begin{align}\label{eq_pf_approxRKHS_I2}
				\cI_2 \leq & \int_{\|\bomega\|_2 > \phi_n^{-1}w_0}  \frac{\lambda_nc_1^{-1}\phi_n^{2m-d}\|\bomega\|_2^{2m} }{1 + \lambda_nc_2^{-1}\phi_n^{2m-d}\|\bomega\|_2^{2m}}|\mathcal{F}(f)(\bomega)|^2{\rm d}\bomega\nonumber\\
				= & \left(\int_{I_{21}} + \int_{I_{22}}\right)  \frac{c_1^{-1}\lambda_n\phi_n^{2m-d}\|\bomega\|_2^{2m} }{1 + c_2^{-1}\lambda_n\phi_n^{2m-d}\|\bomega\|_2^{2m}}|\mathcal{F}(f)(\bomega)|^2{\rm d}\bomega = \cI_{21} + \cI_{22},
			\end{align}
			where 
			\begin{align*}
				I_{21} & = \left\{ \bomega \in \RR^d: \|\bomega\|_2 > \phi_n^{-1}w_0, \lambda_n\phi_n^{2m-d}\|\bomega\|_2^{2m} \leq 1 \right\},\\
				I_{22} & = \left\{ \bomega \in \RR^d: \|\bomega\|_2 > \phi_n^{-1}w_0, \lambda_n\phi_n^{2m-d}\|\bomega\|_2^{2m} > 1 \right\}. 
			\end{align*}
			The term $\cI_{21}$ can be bounded by
			\begin{align}\label{eq_pf_approxRKHS_I21}
				\cI_{21} \leq & C_3\int_{I_{21}} \lambda_n\phi_n^{2m-d}\|\bomega\|_2^{2m} |\mathcal{F}(f)(\bomega)|^2{\rm d}\bomega \nonumber\\
				\leq & C_3\int_{I_{21}} \left(\lambda_n\phi_n^{2m-d}\|\bomega\|_2^{2m}\right)^{\frac{m_0}{m}} |\mathcal{F}(f)(\bomega)|^2{\rm d}\bomega\nonumber\\
				\leq & C_3 \lambda_n^{\frac{m_0}{m}}\phi_n^{\frac{(2m-d)m_0}{m}}\int_{I_{21}}(1+\|\bomega\|_2^2)^{m_0}|\mathcal{F}(f)(\bomega)|^2{\rm d}\bomega\nonumber\\
				\leq & C_4\lambda_n^{\frac{m_0}{m}}\phi_n^{\frac{(2m-d)m_0}{m}}\|f\|_{H^{m_0}(\RR^d)}^2.
			\end{align}
			The term $\cI_{22}$ can be bounded by
			\begin{align}\label{eq_pf_approxRKHS_I22}
				\cI_{22} \leq & C_5\int_{I_{22}} |\mathcal{F}(f)(\bomega)|^2{\rm d}\bomega \nonumber\\
				\leq & C_5\int_{I_{22}} \left(\lambda_n\phi_n^{2m-d}\|\bomega\|_2^{2m}\right)^{\frac{m_0}{m}} |\mathcal{F}(f)(\bomega)|^2{\rm d}\bomega\nonumber\\
				\leq & C_5 \lambda_n^{\frac{m_0}{m}}\phi_n^{\frac{(2m-d)m_0}{m}}\int_{I_{22}}(1+\|\bomega\|_2^2)^{m_0}|\mathcal{F}(f)(\bomega)|^2{\rm d}\bomega\nonumber\\
				\leq & C_6\lambda_n^{\frac{m_0}{m}}\phi_n^{\frac{(2m-d)m_0}{m}}\|f\|_{H^{m_0}(\RR^d)}^2.
			\end{align}
			Plugging \eqref{eq_pf_approxRKHS_I1}, \eqref{eq_pf_approxRKHS_I2}, \eqref{eq_pf_approxRKHS_I21}, and \eqref{eq_pf_approxRKHS_I22} into \eqref{eq_pf_approxRKHS_decomp} yields 
			\begin{align}\label{eq_pf_approxRKHS_bound}
				\|\tilde f_n - f\|_{L^2(\RR^d)}^2 + \lambda_n \|\tilde f_n\|_{\cN_\Phi(\RR^d)}^2\lesssim & \left(\lambda_n\phi_n^{-d}+ \lambda_n^{\frac{m_0}{m}}\phi_n^{\frac{(2m-d)m_0}{m}}\right)\|f\|_{H^{m_0}(\RR^d)}^2\nonumber\\
				\lesssim & \left(\lambda_n\phi_n^{-d}+ \lambda_n^{\frac{m_0}{m}}\phi_n^{\frac{(2m-d)m_0}{m}}\right)\|f\|_{H^{m_0}(\Omega)}^2,
			\end{align}
			which finishes the proof.

			\section{Experimental Settings}\label{app:experimental_settings}
			
			This section provides the detailed experimental settings used in Sections~\ref{sec_numerical} and~\ref{sec_real} of the main text.
			
			\subsection{Gaussian Process Simulation}
			
			For the Gaussian-process simulation in Section~\ref{subsec:gp_sim}, we consider model~\eqref{recoveringGP}, where the true function $f$ is a realization of a Gaussian process with Mat\'{e}rn covariance function~\eqref{materngai}. The Mat\'{e}rn order $m_0-d/2$ takes values $\{0.5,\,1.5,\,2.5\}$. The scale is fixed at $\theta = 10/3$ and the variance at $\sigma^2=1$. Observation noise satisfies $\varepsilon_i \sim \mathcal{N}(0,\,0.1)$, consistent with Assumption~\ref{assum_noise}. Design points are generated by Latin hypercube sampling on $\Omega = [-1,1]^d$ with $d \in \{2,3\}$; these space-filling designs are used as numerical analogues of the quasi-uniform fixed designs in the theory. Sample sizes are $n \in \{100,\,200,\,400,\,800,\,1600\}$ for $d=2$ and $n \in \{200,\,400,\,800,\,1500\}$ for $d=3$.
			
			Wendland kernels are imposed as the correlation function in prediction. We report the admissible pairs
			\[
			(m_0-d/2,\,m-(d+1)/2)\in\{(0.5,1),(0.5,2),(0.5,3), (1.5,2),(1.5,3),(2.5,3)\},
			\]
			so that the imposed kernel smoothness is at least as large as the truth smoothness. Each configuration is replicated 30 times with data split $70\%/15\%/15\%$ (train/validation/test). The GP nugget regularisation is fixed at $\rho_n = 0.1$. The theoretical choice \(\rho_n\asymp\phi_n^d\) is a sufficient asymptotic scaling for preserving the optimal prediction rate. In the simulations, we fix \(\rho_n\) in order to compare Matérn and Wendland predictors under the same noise-to-signal regularization level and to isolate the effect of the support radius. The power-law schedule for \(\phi_n\) is retained, while the multiplicative constant \(C_\phi\) is selected by validation. This finite-sample convention is closer to common GP practice, where the nugget or regularization level is often fixed or estimated separately from the covariance range. Mat\'{e}rn GP predictors with the correct smoothness $m_0$ serve as dense-kernel baselines under identical $\rho_n$. For a fair timing comparison, the Mat\'{e}rn and Wendland linear systems are solved with the same backend, specifically conjugate gradients for both kernels.
			
			For the compactly supported Wendland predictor, we impose the theory-guided schedule \(\phi_n=C_\phi n^{-(m-m_0)/(2mm_0)}.\) The constant \(C_\phi\) controls the accuracy--efficiency trade-off. A larger \(C_\phi\) gives a larger support radius, a denser correlation matrix, and typically lower prediction error but slower sparse matrix-vector products. A smaller \(C_\phi\) gives a sparser matrix and faster sparse matrix-vector products, but may increase prediction error. Here \(C_\phi\) is selected through a validation set with five replications.
			The selected $C_\phi$ is then fixed across all 30 reporting replications.
			
			\subsection{Deterministic Function Approximation}\label{app_exp_deter}
			
			For the deterministic-function experiment in Section~\ref{sec_KRR}, the underlying regression function is fixed rather than sampled from a Gaussian process. We use smooth nonlinear functions containing oscillatory terms, curvature, and interactions:
			\[
			f_2(x_1,x_2) =1.2\sin(\pi x_1)+0.8\cos(2\pi x_2) +0.6x_1x_2+0.5x_1^2-0.4x_2^3,\quad (x_1,x_2)\in[-1,1]^2,
			\]
			and
			\[
			f_3(x_1,x_2,x_3)=f_2(x_1,x_2)+0.7\sin(\pi x_1x_3)+0.5x_2x_3+0.3x_3^2,\quad (x_1,x_2,x_3)\in[-1,1]^3.
			\]
			Observations follow
			\[
			y_i=f(x_i)+\varepsilon_i,\quad \varepsilon_i\sim\mathcal{N}(0,0.01).
			\]
			The design and fitting settings are as in the GP experiment: Latin hypercube designs on $[-1,1]^d$, the same sample sizes, $70\%/15\%/15\%$ train/validation/test split, regularisation $\lambda_n=0.1$, and the same CG solver backend for both methods. The KRR theory gives sufficient joint scalings of \(\lambda_n\) and \(\phi_n\) for minimax-rate estimation. In the deterministic experiments, we fix \(\lambda_n\) across the Matérn and Wendland estimators so that differences in error and computation are driven primarily by the kernel support and its scale. The resulting experiments should therefore be viewed as finite-sample comparisons under a common regularization convention, not as an exhaustive empirical optimization of the asymptotic KRR tuning rule. The Mat\'{e}rn KRR estimator is used as the baseline, while Wendland KRR uses the same admissible pairs $(m_0-d/2,\,m-(d+1)/2)$ and the scheduled bandwidth
			\[
			\phi_n=C_\phi n^{-(m-m_0)/(2mm_0)}.
			\]
			For each $(m_0,m,d)$ configuration, the constant $C_\phi$ is selected on the validation set. Test error is computed against the noiseless deterministic truth.

			\subsection{Additional Scalable-Baseline Comparison}
			
			For the additional scalable-baseline comparison in Section~\ref{subsec:additional_gp_baselines}, the GP truth is the regular Mat\'{e}rn model with $m_0-d/2=0.5$, as in Section~\ref{subsec:gp_sim}; observations have Gaussian noise variance $0.1$. The design is Latin hypercube sampling on $[-1,1]^2$, with sample sizes $n\in\{200,400,800,1600\}$. Each configuration is replicated five times. To keep the timing definition consistent with Section~\ref{subsec:gp_sim}, the dense parent, Wendland, and tapered parent methods report linear-solver time; local GP and Nystr\"om report their natural evaluation time because they do not have a single global linear solve.
			
			The dense parent GP uses the same Mat\'{e}rn covariance as the data-generating model. Covariance tapering multiplies this parent covariance by a compactly supported Wendland taper; the taper uses the same selected Wendland smoothness $m=(d+5)/2$ and a support radius $\phi=1$ in both the GP and deterministic runs. The local GP predictor uses the same parent covariance restricted to the $0.1n$ nearest training sites for each test point \citep{gramacy2015local,gramacy2015speeding}. The Nystr\"om approximation uses randomly selected landmark points with $m_{\mathrm{land}}=\lceil0.2n_{\mathrm{train}}\rceil$, capped at $300$ landmarks. Dense parent, Wendland, and tapered parent systems are solved with the same CG backend; local GP and Nystr\"om use their natural small dense and low-rank linear algebra.
			
			We also run the same scalable-baseline comparison for the $d=2$ deterministic benchmark used in Section~\ref{sec_KRR}. The truth is the nonlinear function $f_2$ defined in Section \ref{app_exp_deter}, observations have noise variance $0.01$, and test MSE is computed against the noiseless deterministic function. For this deterministic comparison, the dense parent, local, and Nystr\"om methods use the Mat\'{e}rn parent kernel with $m_0-d/2=0.5$. The Wendland predictor uses the deterministic scheduled support radius selected in Section~\ref{sec_KRR}, while covariance tapering uses the fixed support radius $\phi=1$.
			
			\subsection{ERA5 Real-Data Application}

			For the real-data experiment in Section~\ref{sec_real}, we use ERA5 two-meter temperature data. The dataset contains 68 global temperature fields from 2025-01-01 00:00 to 2026-05-01 18:00 on a $0.25^\circ \times 0.25^\circ$ grid, with 1,038,240 spatial observations per time point. Each time point is treated as a separate two-dimensional latitude-longitude regression problem, and the reported MSE values are averages across all available time points. Temperature values are standardized before fitting, and prediction error is measured on held-out observations.
			
			The ERA5 experiment is intended as an operational illustration rather than a direct test of the fixed-domain asymptotic bandwidth formula. The data are nonstationary, anisotropic, and observed on a latitude-longitude grid, so a fixed moderate support radius selected from a bandwidth-sensitivity analysis is used to demonstrate the practical accuracy--cost properties. The fact that the selected support radius is approximately stable over the reported sample sizes should not be interpreted as contradicting the asymptotic sufficient scaling; rather, it reflects finite-sample tuning under a real spatial field.
			
			The real-data experiment follows the same computational convention as the simulations. Mat\'{e}rn and Wendland systems are solved with conjugate gradients for a fair backend comparison, while the Wendland covariance matrix is stored sparsely. We use the rougher Wendland kernel with $m=5/2$ and $\rho_n=10^{-4}$. For the Mat\'{e}rn baseline we choose $m_0=5/2$ in two dimensions. We set $\phi_n=30$ as a moderate support radius suggested by the bandwidth-sensitivity experiment. The dense Mat\'{e}rn baseline is reported through $n=5000$ only because repeated dense solves over all time points are computationally expensive; this truncation is a computational cap, not a statistical stopping rule.
			
			The convergence summary uses $n\in\{200,500,1000,2000,5000,10000\}$ with $\phi_n=30$. The bandwidth-sensitivity experiment fixes $n=5000$ and varies $\phi\in\{5,10,20,30,40,50\}$. The North America regional visualization uses the window $230^\circ$E--$300^\circ$E and $25^\circ$N--$55^\circ$N at 00:00, 06:00, 12:00, and 18:00 on 2025-10-01. The same sampled regional instance is used at each time point with $n=10{,}000$, 7,000 training points, $m=5/2$, and $\phi=30$.
			
			\section{Additional Tables}\label{app:tables}
			
			\subsection{Gaussian Process Experiments}
			
			Tables~\ref{tab:gp_full_d2_a}--\ref{tab:gp_full_d3_b} report Gaussian-process prediction error and computation time across dimensions, sample sizes, and smoothness settings, while Table~\ref{tab:slopes} condenses the same experiments into empirical log--log convergence slopes. Taken together, these results show that the scale-adjusted Wendland model stays reasonably close to the Mat\'{e}rn baseline in test MSE while usually delivering faster solver times when the induced sparsity is strong enough.
			
			\begin{table}[htbp]
				\centering
				\renewcommand{\arraystretch}{1.10}
				\setlength{\tabcolsep}{2.2pt}
				\footnotesize
				\caption{GP regression: test MSE and computation time ($d=2$, $m_0-d/2=0.5$).}
				\label{tab:gp_full_d2_a}
				\begin{tabular}{@{}cc r c c c r r c@{}}
					\toprule
					& & & \multicolumn{3}{c}{Test MSE} & \multicolumn{3}{c}{Computation time} \\
					\cmidrule(lr){4-6}\cmidrule(lr){7-9}
					$m_0-d/2$ & $m-(d+1)/2$ & $n$ & \multicolumn{1}{c}{W} & \multicolumn{1}{c}{M} & \multicolumn{1}{c}{Ratio} & \multicolumn{1}{c}{$T_W$ (ms)} & \multicolumn{1}{c}{$T_M$ (ms)} & \multicolumn{1}{c}{Speedup} \\
					\midrule
					$0.5$ & $1$ & 100 & 0.1058\,(0.052) & 0.06787\,(0.026) & 1.56$\times$ & 0.337 & 0.329 & 0.98$\times$ \\
					&  & 200 & 0.06587\,(0.016) & 0.04763\,(0.017) & 1.38$\times$ & 0.577 & 0.799 & 1.39$\times$ \\
					&  & 400 & 0.05695\,(0.018) & 0.04304\,(0.0094) & 1.32$\times$ & 2.004 & 3.941 & 1.97$\times$ \\
					&  & 800 & 0.03653\,(0.0047) & 0.02914\,(0.0027) & 1.25$\times$ & 7.656 & 22.36 & 2.92$\times$ \\
					&  & 1600 & 0.02703\,(0.0026) & 0.02328\,(0.002) & 1.16$\times$ & 34.82 & 102.18 & 2.93$\times$ \\
					\cmidrule(lr){3-9}
					& $2$ & 100 & 0.09305\,(0.044) & 0.06787\,(0.026) & 1.37$\times$ & 0.357 & 0.329 & 0.92$\times$ \\
					&  & 200 & 0.06038\,(0.018) & 0.04763\,(0.017) & 1.27$\times$ & 0.700 & 0.799 & 1.14$\times$ \\
					&  & 400 & 0.05316\,(0.017) & 0.04304\,(0.0094) & 1.24$\times$ & 2.504 & 3.941 & 1.57$\times$ \\
					&  & 800 & 0.03421\,(0.0045) & 0.02914\,(0.0027) & 1.17$\times$ & 11.10 & 22.36 & 2.01$\times$ \\
					&  & 1600 & 0.02589\,(0.0025) & 0.02328\,(0.002) & 1.11$\times$ & 49.38 & 102.18 & 2.07$\times$ \\
					\cmidrule(lr){3-9}
					& $3$ & 100 & 0.09878\,(0.049) & 0.06787\,(0.026) & 1.46$\times$ & 0.349 & 0.329 & 0.94$\times$ \\
					&  & 200 & 0.06384\,(0.017) & 0.04763\,(0.017) & 1.34$\times$ & 0.743 & 0.799 & 1.08$\times$ \\
					&  & 400 & 0.05769\,(0.02) & 0.04304\,(0.0094) & 1.34$\times$ & 2.628 & 3.941 & 1.50$\times$ \\
					&  & 800 & 0.03668\,(0.0049) & 0.02914\,(0.0027) & 1.26$\times$ & 10.84 & 22.36 & 2.06$\times$ \\
					&  & 1600 & 0.02721\,(0.0027) & 0.02328\,(0.002) & 1.17$\times$ & 45.67 & 102.18 & 2.24$\times$ \\
					\bottomrule
					\multicolumn{9}{@{}p{0.97\linewidth}@{}}{\footnotesize W = Wendland (compactly supported); M = Mat\'{e}rn-kernel baseline. Test MSE values shown as mean\,(std) over 30 replications. $\text{Ratio}=\widehat{\text{MSE}}_W/\widehat{\text{MSE}}_M$;\ $\text{Speedup}=T_M/T_W$.}\\
				\end{tabular}
			\end{table}
			
			\begin{table}[htbp]
				\centering
				\renewcommand{\arraystretch}{1.10}
				\setlength{\tabcolsep}{2.2pt}
				\footnotesize
				\caption{GP regression: test MSE and computation time ($d=2$, $m_0-d/2 \in \{1.5, 2.5\}$).}
				\label{tab:gp_full_d2_b}
				\begin{tabular}{@{}cc r c c c r r c@{}}
					\toprule
					& & & \multicolumn{3}{c}{Test MSE} & \multicolumn{3}{c}{Computation time} \\
					\cmidrule(lr){4-6}\cmidrule(lr){7-9}
					$m_0-d/2$ & $m-(d+1)/2$ & $n$ & \multicolumn{1}{c}{W} & \multicolumn{1}{c}{M} & \multicolumn{1}{c}{Ratio} & \multicolumn{1}{c}{$T_W$ (ms)} & \multicolumn{1}{c}{$T_M$ (ms)} & \multicolumn{1}{c}{Speedup} \\
					\midrule
					$1.5$ & $2$ & 100 & 0.01158\,(0.0057) & 0.01062\,(0.0056) & 1.09$\times$ & 0.147 & 0.343 & 2.33$\times$ \\
					&  & 200 & 0.008006\,(0.0043) & 0.006768\,(0.0038) & 1.18$\times$ & 0.290 & 0.952 & 3.28$\times$ \\
					&  & 400 & 0.005756\,(0.002) & 0.005006\,(0.0019) & 1.15$\times$ & 1.361 & 2.622 & 1.93$\times$ \\
					&  & 800 & 0.003836\,(0.0013) & 0.003133\,(0.0012) & 1.22$\times$ & 6.368 & 12.30 & 1.93$\times$ \\
					&  & 1600 & 0.002261\,(0.00053) & 0.001743\,(0.00037) & 1.30$\times$ & 22.66 & 57.79 & 2.55$\times$ \\
					\cmidrule(lr){3-9}
					& $3$ & 100 & 0.01198\,(0.006) & 0.01062\,(0.0056) & 1.13$\times$ & 0.143 & 0.343 & 2.40$\times$ \\
					&  & 200 & 0.008343\,(0.0045) & 0.006768\,(0.0038) & 1.23$\times$ & 0.305 & 0.952 & 3.12$\times$ \\
					&  & 400 & 0.006054\,(0.002) & 0.005006\,(0.0019) & 1.21$\times$ & 1.185 & 2.622 & 2.21$\times$ \\
					&  & 800 & 0.004079\,(0.0013) & 0.003133\,(0.0012) & 1.30$\times$ & 6.066 & 12.30 & 2.03$\times$ \\
					&  & 1600 & 0.00247\,(0.0006) & 0.001743\,(0.00037) & 1.42$\times$ & 21.80 & 57.79 & 2.65$\times$ \\
					\cmidrule(lr){3-9}
					$2.5$ & $3$ & 100 & 0.007521\,(0.005) & 0.006813\,(0.0045) & 1.10$\times$ & 0.139 & 0.319 & 2.30$\times$ \\
					&  & 200 & 0.004769\,(0.0033) & 0.004162\,(0.0031) & 1.15$\times$ & 0.250 & 0.812 & 3.25$\times$ \\
					&  & 400 & 0.002986\,(0.0011) & 0.002761\,(0.0013) & 1.08$\times$ & 0.900 & 2.530 & 2.81$\times$ \\
					&  & 800 & 0.001995\,(0.00086) & 0.001783\,(0.00085) & 1.12$\times$ & 5.035 & 10.48 & 2.08$\times$ \\
					&  & 1600 & 0.001028\,(0.00037) & 0.000817\,(0.00026) & 1.26$\times$ & 17.33 & 42.65 & 2.46$\times$ \\
					\bottomrule
					\multicolumn{9}{@{}p{0.97\linewidth}@{}}{\footnotesize W = Wendland (compactly supported); M = Mat\'{e}rn-kernel baseline. Test MSE values shown as mean\,(std) over 30 replications. $\text{Ratio}=\widehat{\text{MSE}}_W/\widehat{\text{MSE}}_M$;\ $\text{Speedup}=T_M/T_W$.}\\
				\end{tabular}
			\end{table}
			
			\begin{table}[htbp]
				\centering
				\renewcommand{\arraystretch}{1.10}
				\setlength{\tabcolsep}{2.2pt}
				\footnotesize
				\caption{GP regression: test MSE and computation time ($d=3$, $m_0-d/2=0.5$).}
				\label{tab:gp_full_d3_a}
				\begin{tabular}{@{}cc r c c c r r c@{}}
					\toprule
					& & & \multicolumn{3}{c}{Test MSE} & \multicolumn{3}{c}{Computation time} \\
					\cmidrule(lr){4-6}\cmidrule(lr){7-9}
					$m_0-d/2$ & $m-(d+1)/2$ & $n$ & \multicolumn{1}{c}{W} & \multicolumn{1}{c}{M} & \multicolumn{1}{c}{Ratio} & \multicolumn{1}{c}{$T_W$ (ms)} & \multicolumn{1}{c}{$T_M$ (ms)} & \multicolumn{1}{c}{Speedup} \\
					\midrule
					$0.5$ & $1$ & 200 & 0.1208\,(0.046) & 0.09017\,(0.031) & 1.34$\times$ & 0.417 & 0.856 & 2.05$\times$ \\
					&  & 400 & 0.11\,(0.016) & 0.0786\,(0.014) & 1.40$\times$ & 1.917 & 4.253 & 2.22$\times$ \\
					&  & 800 & 0.09383\,(0.013) & 0.05957\,(0.007) & 1.58$\times$ & 8.322 & 23.29 & 2.80$\times$ \\
					&  & 1500 & 0.08382\,(0.007) & 0.05099\,(0.004) & 1.64$\times$ & 27.90 & 92.77 & 3.33$\times$ \\
					\cmidrule(lr){3-9}
					& $2$ & 200 & 0.121\,(0.047) & 0.09017\,(0.031) & 1.34$\times$ & 0.420 & 0.856 & 2.04$\times$ \\
					&  & 400 & 0.1131\,(0.017) & 0.0786\,(0.014) & 1.44$\times$ & 1.740 & 4.253 & 2.44$\times$ \\
					&  & 800 & 0.09726\,(0.013) & 0.05957\,(0.007) & 1.63$\times$ & 8.245 & 23.29 & 2.82$\times$ \\
					&  & 1500 & 0.08878\,(0.008) & 0.05099\,(0.004) & 1.74$\times$ & 31.77 & 92.77 & 2.92$\times$ \\
					\cmidrule(lr){3-9}
					& $3$ & 200 & 0.116\,(0.047) & 0.09017\,(0.031) & 1.29$\times$ & 0.470 & 0.856 & 1.82$\times$ \\
					&  & 400 & 0.1088\,(0.017) & 0.0786\,(0.014) & 1.38$\times$ & 1.779 & 4.253 & 2.39$\times$ \\
					&  & 800 & 0.09235\,(0.012) & 0.05957\,(0.007) & 1.55$\times$ & 11.43 & 23.29 & 2.04$\times$ \\
					&  & 1500 & 0.08474\,(0.0072) & 0.05099\,(0.004) & 1.66$\times$ & 37.93 & 92.77 & 2.45$\times$ \\
					\bottomrule
					\multicolumn{9}{@{}p{0.97\linewidth}@{}}{\footnotesize W = Wendland (compactly supported); M = Mat\'{e}rn-kernel baseline. Test MSE values shown as mean\,(std) over 30 replications. $\text{Ratio}=\widehat{\text{MSE}}_W/\widehat{\text{MSE}}_M$;\ $\text{Speedup}=T_M/T_W$.}\\
				\end{tabular}
			\end{table}
			
			\begin{table}[htbp]
				\centering
				\renewcommand{\arraystretch}{1.10}
				\setlength{\tabcolsep}{2.2pt}
				\footnotesize
				\caption{GP regression: test MSE and computation time ($d=3$, $m_0-d/2 \in \{1.5, 2.5\}$).}
				\label{tab:gp_full_d3_b}
				\begin{tabular}{@{}cc r c c c r r c@{}}
					\toprule
					& & & \multicolumn{3}{c}{Test MSE} & \multicolumn{3}{c}{Computation time} \\
					\cmidrule(lr){4-6}\cmidrule(lr){7-9}
					$m_0-d/2$ & $m-(d+1)/2$ & $n$ & \multicolumn{1}{c}{W} & \multicolumn{1}{c}{M} & \multicolumn{1}{c}{Ratio} & \multicolumn{1}{c}{$T_W$ (ms)} & \multicolumn{1}{c}{$T_M$ (ms)} & \multicolumn{1}{c}{Speedup} \\
					\midrule
					$1.5$ & $2$ & 200 & 0.01703\,(0.0085) & 0.0125\,(0.0046) & 1.36$\times$ & 0.400 & 1.090 & 2.72$\times$ \\
					&  & 400 & 0.01215\,(0.0035) & 0.008998\,(0.0023) & 1.35$\times$ & 1.497 & 2.990 & 2.00$\times$ \\
					&  & 800 & 0.008957\,(0.0028) & 0.005821\,(0.0012) & 1.54$\times$ & 7.880 & 15.35 & 1.95$\times$ \\
					&  & 1500 & 0.006449\,(0.0013) & 0.003939\,(0.00065) & 1.64$\times$ & 22.60 & 62.92 & 2.78$\times$ \\
					\cmidrule(lr){3-9}
					& $3$ & 200 & 0.01472\,(0.0064) & 0.0125\,(0.0046) & 1.18$\times$ & 0.411 & 1.090 & 2.65$\times$ \\
					&  & 400 & 0.01116\,(0.0028) & 0.008998\,(0.0023) & 1.24$\times$ & 1.753 & 2.990 & 1.71$\times$ \\
					&  & 800 & 0.007857\,(0.0021) & 0.005821\,(0.0012) & 1.35$\times$ & 7.832 & 15.35 & 1.96$\times$ \\
					&  & 1500 & 0.005706\,(0.0011) & 0.003939\,(0.00065) & 1.45$\times$ & 25.09 & 62.92 & 2.51$\times$ \\
					\cmidrule(lr){3-9}
					$2.5$ & $3$ & 200 & 0.01287\,(0.0077) & 0.007732\,(0.0037) & 1.66$\times$ & 0.305 & 1.088 & 3.56$\times$ \\
					&  & 400 & 0.007203\,(0.0026) & 0.004728\,(0.0016) & 1.52$\times$ & 1.235 & 3.020 & 2.45$\times$ \\
					&  & 800 & 0.004946\,(0.0023) & 0.003009\,(0.0011) & 1.64$\times$ & 5.839 & 13.43 & 2.30$\times$ \\
					&  & 1500 & 0.002985\,(0.00086) & 0.00171\,(0.00046) & 1.75$\times$ & 18.03 & 48.15 & 2.67$\times$ \\
					\bottomrule
					\multicolumn{9}{@{}p{0.97\linewidth}@{}}{\footnotesize W = Wendland (compactly supported); M = Mat\'{e}rn-kernel baseline. Test MSE values shown as mean\,(std) over 30 replications. $\text{Ratio}=\widehat{\text{MSE}}_W/\widehat{\text{MSE}}_M$;\ $\text{Speedup}=T_M/T_W$.}\\
				\end{tabular}
			\end{table}
			
			\begin{table}[htbp]
				\centering
				\renewcommand{\arraystretch}{1.08}
				\small
				\caption{Empirical convergence slopes from the regenerated GP simulations, computed by regressing $\log(\mathrm{MSE})$ on $\log(n)$. W\,=\,Wendland, M\,=\,Mat\'{e}rn.}
				\label{tab:slopes}
				\begin{tabular}{@{}ccc r r r r@{}}
					\toprule
					$m_0-d/2$ & $m-(d+1)/2$ & $d$ & Theory & W slope & M slope & $\Delta$ \\
					\midrule
					$0.5$ & $1$ & $2$ & $-0.333$ & $-0.479$ & $-0.380$ & $-0.146$ \\
					$0.5$ & $1$ & $3$ & $-0.250$ & $-0.186$ & $-0.295$ & $+0.064$ \\
					$0.5$ & $2$ & $2$ & $-0.333$ & $-0.451$ & $-0.380$ & $-0.118$ \\
					$0.5$ & $2$ & $3$ & $-0.250$ & $-0.160$ & $-0.295$ & $+0.090$ \\
					$0.5$ & $3$ & $2$ & $-0.333$ & $-0.452$ & $-0.380$ & $-0.119$ \\
					$0.5$ & $3$ & $3$ & $-0.250$ & $-0.164$ & $-0.295$ & $+0.086$ \\
					\midrule
					$1.5$ & $2$ & $2$ & $-0.600$ & $-0.578$ & $-0.633$ & $+0.022$ \\
					$1.5$ & $2$ & $3$ & $-0.500$ & $-0.477$ & $-0.578$ & $+0.023$ \\
					$1.5$ & $3$ & $2$ & $-0.600$ & $-0.559$ & $-0.633$ & $+0.041$ \\
					$1.5$ & $3$ & $3$ & $-0.500$ & $-0.473$ & $-0.578$ & $+0.027$ \\
					\midrule
					$2.5$ & $3$ & $2$ & $-0.714$ & $-0.700$ & $-0.734$ & $+0.014$ \\
					$2.5$ & $3$ & $3$ & $-0.625$ & $-0.706$ & $-0.738$ & $-0.081$ \\
					\bottomrule
				\end{tabular}
			\end{table}
			
			\subsection{Additional Scalable-Baseline Experiments}
			
			Tables~\ref{tab:additional_gp_baselines_full} and \ref{tab:additional_det_baselines_full} give the full numerical summaries for the additional $d=2$ scalable-baseline comparison in Section~\ref{subsec:additional_gp_baselines}. The direct Wendland and tapered parent methods use the selected imposed smoothness \(m=7/2\). The tapered parent method uses the fixed support radius \(\phi=1\). For dense parent, Wendland, and tapered parent methods, time is the global linear-solver time; for local GP and Nystr\"om, time is their natural evaluation time.
			
			\begin{table}[htbp]
				\centering
				\renewcommand{\arraystretch}{1.04}
				\setlength{\tabcolsep}{2.4pt}
				\scriptsize
				\caption{Additional scalable-baseline comparison for the regular Mat\'{e}rn $d=2$ GP truth with \(m_0-d/2=0.5\). Values are means over five replications, with standard deviations in parentheses for test MSE.}
				\label{tab:additional_gp_baselines_full}
				\begin{tabular}{@{}r l c r r r c@{}}
					\toprule
					$n$ & Method & Test MSE & Time (ms) & Zeros (\%) & CG iter. & Structure \\
					\midrule
					200 & Dense parent & 0.0411\,(0.0129) & 3.84 & 0.0 & 28.4 & dense \\
					& Wendland ($m=7/2$) & 0.0541\,(0.00933) & 2.10 & 38.7 & 48.6 & $\phi=1.19$ \\
					& Tapered parent ($m=7/2$) & 0.0657\,(0.0123) & 1.61 & 51.9 & 49.6 & $\phi=1$ \\
					& Local GP ($k=0.1n$) & 0.0416\,(0.0134) & 2.45 & -- & -- & $k=20$ \\
					& Nystr\"om & 0.0566\,(0.0252) & 0.136 & -- & -- & landmarks=28 \\
					\cmidrule(lr){1-7}
					400 & Dense parent & 0.0443\,(0.0161) & 5.15 & 0.0 & 38.0 & dense \\
					& Wendland ($m=7/2$) & 0.0523\,(0.0255) & 3.71 & 48.9 & 65.8 & $\phi=1.05$ \\
					& Tapered parent ($m=7/2$) & 0.0549\,(0.0259) & 3.39 & 52.0 & 65.0 & $\phi=1$ \\
					& Local GP ($k=0.1n$) & 0.0453\,(0.0164) & 3.15 & -- & -- & $k=40$ \\
					& Nystr\"om & 0.0543\,(0.0173) & 0.266 & -- & -- & landmarks=56 \\
					\cmidrule(lr){1-7}
					800 & Dense parent & 0.0275\,(0.00348) & 17.9 & 0.0 & 50.0 & dense \\
					& Wendland ($m=7/2$) & 0.0327\,(0.00451) & 9.09 & 57.5 & 88.4 & $\phi=0.917$ \\
					& Tapered parent ($m=7/2$) & 0.0324\,(0.00473) & 10.6 & 51.7 & 88.4 & $\phi=1$ \\
					& Local GP ($k=0.1n$) & 0.0275\,(0.00339) & 11.5 & -- & -- & $k=80$ \\
					& Nystr\"om & 0.0318\,(0.00490) & 0.732 & -- & -- & landmarks=112 \\
					\cmidrule(lr){1-7}
					1600 & Dense parent & 0.0233\,(0.00144) & 128 & 0.0 & 65.0 & dense \\
					& Wendland ($m=7/2$) & 0.0254\,(0.00164) & 37.3 & 65.4 & 116.0 & $\phi=0.803$ \\
					& Tapered parent ($m=7/2$) & 0.0247\,(0.00170) & 55.4 & 51.7 & 115.8 & $\phi=1$ \\
					& Local GP ($k=0.1n$) & 0.0233\,(0.00143) & 98.8 & -- & -- & $k=160$ \\
					& Nystr\"om & 0.0254\,(0.00157) & 2.62 & -- & -- & landmarks=224 \\
					\bottomrule
					\multicolumn{7}{@{}p{0.96\linewidth}@{}}{\footnotesize Zeros denotes the fraction of exact zero entries in the global covariance matrix. CG iterations are reported only for methods solved through a global iterative linear system.}\\
				\end{tabular}
			\end{table}
			
			\begin{table}[htbp]
				\centering
				\renewcommand{\arraystretch}{1.04}
				\setlength{\tabcolsep}{2.4pt}
				\scriptsize
				\caption{Additional scalable-baseline comparison for the $d=2$ deterministic truth. Values are means over five replications, with standard deviations in parentheses for test MSE against the noiseless deterministic function.}
				\label{tab:additional_det_baselines_full}
				\begin{tabular}{@{}r l c r r r c@{}}
					\toprule
					$n$ & Method & Test MSE & Time (ms) & Zeros (\%) & CG iter. & Structure \\
					\midrule
					200 & Dense parent & 0.0577\,(0.0114) & 0.865 & 0.0 & 27.2 & dense \\
					& Wendland ($m=7/2$) & 0.0252\,(0.00775) & 0.456 & 76.1 & 44.2 & $\phi=0.638$ \\
					& Tapered parent ($m=7/2$) & 0.0105\,(0.00717) & 0.599 & 51.9 & 46.2 & $\phi=1$ \\
					& Local GP ($k=0.1n$) & 0.0589\,(0.00997) & 0.634 & -- & -- & $k=20$ \\
					& Nystr\"om & 0.1748\,(0.0475) & 0.0646 & -- & -- & landmarks=28 \\
					\cmidrule(lr){1-7}
					400 & Dense parent & 0.0258\,(0.00665) & 2.87 & 0.0 & 36.8 & dense \\
					& Wendland ($m=7/2$) & 0.0163\,(0.00968) & 0.823 & 81.0 & 55.4 & $\phi=0.559$ \\
					& Tapered parent ($m=7/2$) & 0.00524\,(0.00164) & 1.73 & 52.0 & 61.0 & $\phi=1$ \\
					& Local GP ($k=0.1n$) & 0.0299\,(0.00703) & 1.61 & -- & -- & $k=40$ \\
					& Nystr\"om & 0.0594\,(0.0154) & 0.128 & -- & -- & landmarks=56 \\
					\cmidrule(lr){1-7}
					800 & Dense parent & 0.00719\,(0.00120) & 13.8 & 0.0 & 45.4 & dense \\
					& Wendland ($m=7/2$) & 0.00493\,(0.000585) & 2.20 & 84.8 & 69.6 & $\phi=0.490$ \\
					& Tapered parent ($m=7/2$) & 0.00204\,(0.000271) & 8.05 & 51.7 & 80.4 & $\phi=1$ \\
					& Local GP ($k=0.1n$) & 0.00794\,(0.00123) & 6.48 & -- & -- & $k=80$ \\
					& Nystr\"om & 0.0175\,(0.00366) & 0.371 & -- & -- & landmarks=112 \\
					\cmidrule(lr){1-7}
					1600 & Dense parent & 0.00259\,(0.000179) & 78.7 & 0.0 & 62.4 & dense \\
					& Wendland ($m=7/2$) & 0.00314\,(0.000779) & 9.29 & 88.0 & 86.6 & $\phi=0.430$ \\
					& Tapered parent ($m=7/2$) & 0.00123\,(0.000101) & 51.6 & 51.7 & 106.6 & $\phi=1$ \\
					& Local GP ($k=0.1n$) & 0.00273\,(0.000189) & 41.6 & -- & -- & $k=160$ \\
					& Nystr\"om & 0.00348\,(0.000392) & 1.51 & -- & -- & landmarks=224 \\
					\bottomrule
					\multicolumn{7}{@{}p{0.96\linewidth}@{}}{\footnotesize Zeros denotes the fraction of exact zero entries in the global covariance matrix. CG iterations are reported only for methods solved through a global iterative linear system.}\\
				\end{tabular}
			\end{table}
			
			\subsection{Deterministic Function Experiments}
			
			Tables~\ref{tab:det_full_d2_a}--\ref{tab:det_full_d3_b} present the analogous deterministic-function kernel-ridge-regression results. They show that the Wendland estimator is competitive with, and in several settings improves on, the dense Mat\'{e}rn baseline in test MSE, while also providing systematic runtime reductions that reflect the solver-time value of sparse compact support.
			
			\begin{table}[htbp]
				\centering
				\renewcommand{\arraystretch}{1.10}
				\setlength{\tabcolsep}{2.2pt}
				\footnotesize
				\caption{KRR: test MSE and computation time ($d=2$, $m_0-d/2=0.5$).}
				\label{tab:det_full_d2_a}
				\begin{tabular}{@{}cc r c c c r r c@{}}
					\toprule
					& & & \multicolumn{3}{c}{Test MSE} & \multicolumn{3}{c}{Computation time} \\
					\cmidrule(lr){4-6}\cmidrule(lr){7-9}
					$m_0-d/2$ & $m-(d+1)/2$ & $n$ & \multicolumn{1}{c}{W} & \multicolumn{1}{c}{D} & \multicolumn{1}{c}{Ratio} & \multicolumn{1}{c}{$T_W$ (ms)} & \multicolumn{1}{c}{$T_D$ (ms)} & \multicolumn{1}{c}{Speedup} \\
					\midrule
					$0.5$ & $1$ & 100 & 0.15\,(0.096) & 0.1686\,(0.067) & 0.89$\times$ & 0.266 & 0.349 & 1.31$\times$ \\
					&  & 200 & 0.05443\,(0.035) & 0.06731\,(0.02) & 0.81$\times$ & 0.390 & 0.796 & 2.04$\times$ \\
					&  & 400 & 0.02512\,(0.017) & 0.02966\,(0.012) & 0.85$\times$ & 0.650 & 3.196 & 4.92$\times$ \\
					&  & 800 & 0.008003\,(0.0025) & 0.008751\,(0.0027) & 0.91$\times$ & 1.496 & 14.12 & 9.44$\times$ \\
					&  & 1600 & 0.004327\,(0.0012) & 0.002938\,(0.0005) & 1.47$\times$ & 5.655 & 73.04 & 12.92$\times$ \\
					\cmidrule(lr){3-9}
					& $2$ & 100 & 0.08311\,(0.066) & 0.1686\,(0.067) & 0.49$\times$ & 0.307 & 0.349 & 1.14$\times$ \\
					&  & 200 & 0.03236\,(0.022) & 0.06731\,(0.02) & 0.48$\times$ & 0.460 & 0.796 & 1.73$\times$ \\
					&  & 400 & 0.0178\,(0.014) & 0.02966\,(0.012) & 0.60$\times$ & 0.838 & 3.196 & 3.81$\times$ \\
					&  & 800 & 0.006515\,(0.002) & 0.008751\,(0.0027) & 0.74$\times$ & 2.250 & 14.12 & 6.28$\times$ \\
					&  & 1600 & 0.003726\,(0.00098) & 0.002938\,(0.0005) & 1.27$\times$ & 8.458 & 73.04 & 8.64$\times$ \\
					\cmidrule(lr){3-9}
					& $3$ & 100 & 0.07356\,(0.061) & 0.1686\,(0.067) & 0.44$\times$ & 0.321 & 0.349 & 1.09$\times$ \\
					&  & 200 & 0.03028\,(0.021) & 0.06731\,(0.02) & 0.45$\times$ & 0.474 & 0.796 & 1.68$\times$ \\
					&  & 400 & 0.01775\,(0.014) & 0.02966\,(0.012) & 0.60$\times$ & 0.986 & 3.196 & 3.24$\times$ \\
					&  & 800 & 0.006819\,(0.0021) & 0.008751\,(0.0027) & 0.78$\times$ & 2.563 & 14.12 & 5.51$\times$ \\
					&  & 1600 & 0.004015\,(0.0011) & 0.002938\,(0.0005) & 1.37$\times$ & 9.393 & 73.04 & 7.78$\times$ \\
					\bottomrule
					\multicolumn{9}{@{}p{0.97\linewidth}@{}}{\footnotesize W = Wendland KRR (compactly supported); D = Mat\'{e}rn KRR. Test MSE against the deterministic truth shown as mean\,(std) over 30 replications. $\text{Ratio}=\widehat{\text{MSE}}_W/\widehat{\text{MSE}}_D$;\ $\text{Speedup}=T_D/T_W$.}\\
				\end{tabular}
			\end{table}

			\begin{table}[htbp]
				\centering
				\renewcommand{\arraystretch}{1.10}
				\setlength{\tabcolsep}{2.2pt}
				\footnotesize
				\caption{KRR: test MSE and computation time ($d=2$, $m_0-d/2 \in \{1.5, 2.5\}$).}
				\label{tab:det_full_d2_b}
				\begin{tabular}{@{}cc r c c c r r c@{}}
					\toprule
					& & & \multicolumn{3}{c}{Test MSE} & \multicolumn{3}{c}{Computation time} \\
					\cmidrule(lr){4-6}\cmidrule(lr){7-9}
					$m_0-d/2$ & $m-(d+1)/2$ & $n$ & \multicolumn{1}{c}{W} & \multicolumn{1}{c}{D} & \multicolumn{1}{c}{Ratio} & \multicolumn{1}{c}{$T_W$ (ms)} & \multicolumn{1}{c}{$T_D$ (ms)} & \multicolumn{1}{c}{Speedup} \\
					\midrule
					$1.5$ & $2$ & 100 & 0.6432\,(0.21) & 0.4463\,(0.14) & 1.44$\times$ & 0.167 & 0.261 & 1.56$\times$ \\
					&  & 200 & 0.3604\,(0.12) & 0.3628\,(0.086) & 0.99$\times$ & 0.247 & 0.677 & 2.75$\times$ \\
					&  & 400 & 0.189\,(0.069) & 0.2599\,(0.042) & 0.73$\times$ & 0.389 & 1.983 & 5.10$\times$ \\
					&  & 800 & 0.04941\,(0.017) & 0.1568\,(0.017) & 0.32$\times$ & 0.737 & 9.054 & 12.28$\times$ \\
					&  & 1600 & 0.017\,(0.0066) & 0.08748\,(0.006) & 0.19$\times$ & 2.277 & 46.47 & 20.40$\times$ \\
					\cmidrule(lr){3-9}
					& $3$ & 100 & 0.6101\,(0.21) & 0.4463\,(0.14) & 1.37$\times$ & 0.172 & 0.261 & 1.52$\times$ \\
					&  & 200 & 0.3442\,(0.12) & 0.3628\,(0.086) & 0.95$\times$ & 0.249 & 0.677 & 2.72$\times$ \\
					&  & 400 & 0.1882\,(0.069) & 0.2599\,(0.042) & 0.72$\times$ & 0.400 & 1.983 & 4.96$\times$ \\
					&  & 800 & 0.05357\,(0.018) & 0.1568\,(0.017) & 0.34$\times$ & 0.765 & 9.054 & 11.83$\times$ \\
					&  & 1600 & 0.01994\,(0.0075) & 0.08748\,(0.006) & 0.23$\times$ & 2.470 & 46.47 & 18.81$\times$ \\
					\cmidrule(lr){3-9}
					$2.5$ & $3$ & 100 & 0.9981\,(0.25) & 0.5578\,(0.17) & 1.79$\times$ & 0.096 & 0.209 & 2.19$\times$ \\
					&  & 200 & 0.9303\,(0.2) & 0.5444\,(0.14) & 1.71$\times$ & 0.141 & 0.516 & 3.67$\times$ \\
					&  & 400 & 0.801\,(0.14) & 0.4816\,(0.077) & 1.66$\times$ & 0.215 & 1.761 & 8.17$\times$ \\
					&  & 800 & 0.5512\,(0.059) & 0.3895\,(0.041) & 1.42$\times$ & 0.337 & 7.399 & 21.97$\times$ \\
					&  & 1600 & 0.3311\,(0.037) & 0.319\,(0.022) & 1.04$\times$ & 0.851 & 33.41 & 39.26$\times$ \\
					\bottomrule
					\multicolumn{9}{@{}p{0.97\linewidth}@{}}{\footnotesize W = Wendland KRR (compactly supported); D = Mat\'{e}rn KRR. Test MSE against the deterministic truth shown as mean\,(std) over 30 replications. $\text{Ratio}=\widehat{\text{MSE}}_W/\widehat{\text{MSE}}_D$;\ $\text{Speedup}=T_D/T_W$.}\\
				\end{tabular}
			\end{table}

			\begin{table}[htbp]
				\centering
				\renewcommand{\arraystretch}{1.10}
				\setlength{\tabcolsep}{2.2pt}
				\footnotesize
				\caption{KRR: test MSE and computation time ($d=3$, $m_0-d/2=0.5$).}
				\label{tab:det_full_d3_a}
				\begin{tabular}{@{}cc r c c c r r c@{}}
					\toprule
					& & & \multicolumn{3}{c}{Test MSE} & \multicolumn{3}{c}{Computation time} \\
					\cmidrule(lr){4-6}\cmidrule(lr){7-9}
					$m_0-d/2$ & $m-(d+1)/2$ & $n$ & \multicolumn{1}{c}{W} & \multicolumn{1}{c}{D} & \multicolumn{1}{c}{Ratio} & \multicolumn{1}{c}{$T_W$ (ms)} & \multicolumn{1}{c}{$T_D$ (ms)} & \multicolumn{1}{c}{Speedup} \\
					\midrule
					$0.5$ & $1$ & 200 & 0.3219\,(0.13) & 0.2938\,(0.076) & 1.10$\times$ & 0.324 & 0.818 & 2.52$\times$ \\
					&  & 400 & 0.1312\,(0.048) & 0.1497\,(0.037) & 0.88$\times$ & 0.633 & 3.029 & 4.79$\times$ \\
					&  & 800 & 0.0483\,(0.016) & 0.06354\,(0.013) & 0.76$\times$ & 1.566 & 15.21 & 9.71$\times$ \\
					&  & 1500 & 0.02142\,(0.0071) & 0.02772\,(0.0041) & 0.77$\times$ & 5.702 & 66.93 & 11.74$\times$ \\
					\cmidrule(lr){3-9}
					& $2$ & 200 & 0.3471\,(0.13) & 0.2938\,(0.076) & 1.18$\times$ & 0.342 & 0.818 & 2.40$\times$ \\
					&  & 400 & 0.1604\,(0.055) & 0.1497\,(0.037) & 1.07$\times$ & 0.620 & 3.029 & 4.89$\times$ \\
					&  & 800 & 0.06611\,(0.021) & 0.06354\,(0.013) & 1.04$\times$ & 1.590 & 15.21 & 9.57$\times$ \\
					&  & 1500 & 0.03122\,(0.0092) & 0.02772\,(0.0041) & 1.13$\times$ & 5.315 & 66.93 & 12.59$\times$ \\
					\cmidrule(lr){3-9}
					& $3$ & 200 & 0.2966\,(0.12) & 0.2938\,(0.076) & 1.01$\times$ & 0.487 & 0.818 & 1.68$\times$ \\
					&  & 400 & 0.1335\,(0.049) & 0.1497\,(0.037) & 0.89$\times$ & 0.827 & 3.029 & 3.66$\times$ \\
					&  & 800 & 0.05724\,(0.019) & 0.06354\,(0.013) & 0.90$\times$ & 2.363 & 15.21 & 6.44$\times$ \\
					&  & 1500 & 0.0288\,(0.0087) & 0.02772\,(0.0041) & 1.04$\times$ & 7.544 & 66.93 & 8.87$\times$ \\
					\bottomrule
					\multicolumn{9}{@{}p{0.97\linewidth}@{}}{\footnotesize W = Wendland KRR (compactly supported); D = Mat\'{e}rn KRR. Test MSE against the deterministic truth shown as mean\,(std) over 30 replications. $\text{Ratio}=\widehat{\text{MSE}}_W/\widehat{\text{MSE}}_D$;\ $\text{Speedup}=T_D/T_W$.}\\
				\end{tabular}
			\end{table}

			\begin{table}[htbp]
				\centering
				\renewcommand{\arraystretch}{1.10}
				\setlength{\tabcolsep}{2.2pt}
				\footnotesize
				\caption{KRR: test MSE and computation time ($d=3$, $m_0-d/2 \in \{1.5, 2.5\}$).}
				\label{tab:det_full_d3_b}
				\begin{tabular}{@{}cc r c c c r r c@{}}
					\toprule
					& & & \multicolumn{3}{c}{Test MSE} & \multicolumn{3}{c}{Computation time} \\
					\cmidrule(lr){4-6}\cmidrule(lr){7-9}
					$m_0-d/2$ & $m-(d+1)/2$ & $n$ & \multicolumn{1}{c}{W} & \multicolumn{1}{c}{D} & \multicolumn{1}{c}{Ratio} & \multicolumn{1}{c}{$T_W$ (ms)} & \multicolumn{1}{c}{$T_D$ (ms)} & \multicolumn{1}{c}{Speedup} \\
					\midrule
					$1.5$ & $2$ & 200 & 0.8365\,(0.22) & 0.507\,(0.1) & 1.65$\times$ & 0.196 & 0.714 & 3.64$\times$ \\
					&  & 400 & 0.5737\,(0.14) & 0.3981\,(0.071) & 1.44$\times$ & 0.322 & 2.416 & 7.50$\times$ \\
					&  & 800 & 0.2714\,(0.066) & 0.2645\,(0.033) & 1.03$\times$ & 0.636 & 11.49 & 18.06$\times$ \\
					&  & 1500 & 0.1194\,(0.02) & 0.1889\,(0.014) & 0.63$\times$ & 1.857 & 49.21 & 26.50$\times$ \\
					\cmidrule(lr){3-9}
					& $3$ & 200 & 0.7607\,(0.21) & 0.507\,(0.1) & 1.50$\times$ & 0.224 & 0.714 & 3.18$\times$ \\
					&  & 400 & 0.5055\,(0.12) & 0.3981\,(0.071) & 1.27$\times$ & 0.373 & 2.416 & 6.47$\times$ \\
					&  & 800 & 0.239\,(0.06) & 0.2645\,(0.033) & 0.90$\times$ & 1.217 & 11.49 & 9.44$\times$ \\
					&  & 1500 & 0.1086\,(0.019) & 0.1889\,(0.014) & 0.57$\times$ & 2.692 & 49.21 & 18.28$\times$ \\
					\cmidrule(lr){3-9}
					$2.5$ & $3$ & 200 & 0.8559\,(0.22) & 0.6269\,(0.12) & 1.37$\times$ & 0.201 & 0.656 & 3.26$\times$ \\
					&  & 400 & 0.5836\,(0.14) & 0.5727\,(0.096) & 1.02$\times$ & 0.341 & 2.232 & 6.54$\times$ \\
					&  & 800 & 0.2682\,(0.066) & 0.4547\,(0.056) & 0.59$\times$ & 0.725 & 9.806 & 13.53$\times$ \\
					&  & 1500 & 0.1124\,(0.02) & 0.4005\,(0.027) & 0.28$\times$ & 2.414 & 37.29 & 15.45$\times$ \\
					\bottomrule
					\multicolumn{9}{@{}p{0.97\linewidth}@{}}{\footnotesize W = Wendland KRR (compactly supported); D = Mat\'{e}rn KRR. Test MSE against the deterministic truth shown as mean\,(std) over 30 replications. $\text{Ratio}=\widehat{\text{MSE}}_W/\widehat{\text{MSE}}_D$;\ $\text{Speedup}=T_D/T_W$.}\\
				\end{tabular}
			\end{table}

			\subsection{Real cases}
			
			\begin{table}[htbp]
				\centering
				\footnotesize
				\setlength{\tabcolsep}{2.5pt}
				\caption{Real-data convergence results for the ERA5 temperature experiment using each of 68 time snapshots separately ($N=70,600,320$, input dimension $d=2$). The Wendland column uses the constant scheduled scale $\phi_n=30$, Wendland smoothness $m=5/2$, and $\rho_n=10^{-4}$; MSE entries are averages across time points with standard deviations in parentheses when applicable. The empirical log-log MSE slope is $-0.95$ with $R^2=0.98$. Dense Mat\'{e}rn results are reported through $n=5,000$ only because the dense matrix storage and matvec cost become prohibitive for larger samples.}
				\label{tab:real_convergence}
				\begin{tabular}{@{}rrrrrrrr@{}}
					\toprule
					$n$ & $\phi_n$ & Test MSE & Sparsity & Time (s) & NNZ & Mat\'{e}rn MSE & Mat\'{e}rn time (s) \\
					\midrule
					200 & 30 & 0.4327\,(0.059) & 95.2\% & 0.002188 & 950 & 0.9979\,(0.15) & 0.0009216 \\
					500 & 30 & 0.1112\,(0.021) & 95.9\% & 0.003251 & 5014 & 0.5736\,(0.05) & 0.005272 \\
					1000 & 30 & 0.07253\,(0.014) & 96.0\% & 0.013 & 19672 & 0.419\,(0.051) & 0.03387 \\
					2000 & 30 & 0.03468\,(0.0058) & 96.1\% & 0.07016 & 76652 & 0.2065\,(0.03) & 0.1952 \\
					5000 & 30 & 0.01363\,(0.0014) & 96.1\% & 0.6978 & 481262 & 0.03983\,(0.004) & 2.69 \\
					10000 & 30 & 0.0106\,(0.00089) & 96.1\% & 4.492 & 1919538 & -- & -- \\
					\bottomrule
				\end{tabular}
			\end{table}
			
			\begin{table}[htbp]
				\centering
				\small
				\setlength{\tabcolsep}{4pt}
				\caption{Bandwidth sensitivity summary for the real-data experiment at $n=5,000$. The selected bandwidth is $\phi=50$. MSE values are averaged across time points.}
				\label{tab:real_bandwidth}
				\begin{tabular}{@{}rrrrr@{}}
					\toprule
					$\phi$ & Test MSE & Sparsity & Time (s) & NNZ \\
					\midrule
					5 & 0.398\,(0.037) & 99.9\% & 0.00634 & 18120 \\
					10 & 0.06382\,(0.0084) & 99.5\% & 0.02668 & 61038 \\
					20 & 0.01722\,(0.0018) & 98.2\% & 0.1669 & 224858 \\
					30 & 0.01578\,(0.0015) & 96.0\% & 0.6625 & 483938 \\
					40 & 0.01525\,(0.0015) & 93.3\% & 1.578 & 826062 \\
					\textbf{50} & \textbf{0.01488\,(0.0014)} & \textbf{89.9\%} & \textbf{3.143} & \textbf{1242110} \\
					\bottomrule
				\end{tabular}
			\end{table}
			
			\begin{table}[htbp]
				\centering
				\small
				\setlength{\tabcolsep}{4pt}
				\caption{Bandwidth scaling in the real-data experiment.}
				\label{tab:real_scaling}
				\begin{tabular}{@{}rrrrr@{}}
					\toprule
					$n$ & $\phi_{opt}$ & Test MSE & Sparsity & Time (s) \\
					\midrule
					200 & 50 & 0.1124\,(0.034) & 88.6\% & 0.001231 \\
					500 & 50 & 0.1076\,(0.029) & 89.8\% & 0.008018 \\
					1000 & 50 & 0.04041\,(0.0063) & 89.7\% & 0.03759 \\
					2000 & 50 & 0.02444\,(0.003) & 90.0\% & 0.2805 \\
					5000 & 50 & 0.01245\,(0.0011) & 89.9\% & 3.192 \\
					\bottomrule
				\end{tabular}
			\end{table}

	\end{document}